\newcommand{\ot}{{\,\otimes\,}}
\def\oper{{\mathchoice{\rm 1\mskip-4mu l}{\rm 1\mskip-4mu l}
{\rm 1\mskip-4.5mu l}{\rm 1\mskip-5mu l}}}
\def\<{\langle}
\def\>{\rangle}
\newtheorem{theorem}{Theorem}
\newtheorem{corollary}{Corollary}
\newtheorem{definition}{Definition}
\newtheorem{example}{Example}
\newtheorem{proposition}{Proposition}
\newtheorem{remark}{Remark}
\newenvironment{proof}[1][Proof]{\noindent\textbf{#1.} }{\ \rule{0.5em}{0.5em}}
\begin{document}

\title{\bf On Bipartite Operators Defined by Completely Different Permutations }
\author{ Marek Mozrzymas$^1$, Dariusz Chru{\'s}ci{\'n}ski$^2$ and Gniewomir Sarbicki$^2$ }
\affiliation{ $^1$Institute for Theoretical Physics, University of Wroc{\l}aw, 50-204 Wroc{\l}aw, Poland \\
$^2$Institute of Physics, Faculty of Physics, Astronomy and Informatics \\  Nicolaus Copernicus University,
Grudzi\c{a}dzka 5/7, 87--100 Toru\'n, Poland}

\begin{abstract}
We introduce a class of bipartite operators acting on $\mathcal{H} \ot  \mathcal{H}$ ($\mathcal{H}$ being an $n$-dimensional Hilbert space) defined by a set of $n$ Completely Different Permutations CDP. Bipartite operators are of particular importance in quantum information theory to represent states and observables of composite quantum systems.  It turns out that any set of CDPs gives rise to a certain direct sum decomposition of the total Hilbert space which enables one simple construction of the corresponding bipartite operator. Interestingly, if set of CDPs defines an abelian group then the corresponding bipartite operator displays an additional property -- the partially transposed operator again corresponds to (in general different) set of CDPs. Therefore, our technique may be used to construct new classes of so called PPT states which are of great importance for quantum information. Using well known relation between bipartite operators and linear maps one use also construct linear maps related to CDPs.
\end{abstract}

\maketitle

\section{Introduction}

Quantum entanglement is a basic physical resource for modern quantum technologies like quantum teleportation, quantum computation, quantum communication and quantum cryptography \cite{QIT,HHHH}. It is therefore clear that detailed analysis of the mathematical structure of quantum states represented by bipartite density operators is of great importance for quantum information theory. However, in general it is very hard to check whether
a given density matrix describing a quantum state of the composite system is separable or entangled (so called separability problem). It was shown by Gurvits that the separability  problem is NP hard \cite{NP}.

There are several operational criteria which provide necessary conditions for separability and sufficient conditions for entanglement \cite{Guhne,HHHH,TOPICAL}.  The most simple is the celebrated  Peres-Horodecki criterion \cite{HHHH} based on the operation of
partial transposition: if a state $\rho_{AB}$  is separable then its partial
transposition $(\oper \ot T)\rho$  is positive. States which are positive
under partial transposition are called PPT states. Clearly each separable state is necessarily PPT but the converse is not true. These two sets coincide only for $2 \ot 2$ and $2 \ot 3$ systems \cite{HHHH}.

The paper is organised as follows. In Section II we introduce the concept of
Completely Different Permutations CDP and we derive the basic properties
of maximal sets of Completely Different Permutations. It appears that these
stets have many ineteresing and usefull properties, in particular it is
shown that any commutative set of CDP is necessarily an abelian group. In
next Section we consider groups of CDP and shown in particular that
wellknown Cayley construction of permutational representations of finite
groups leads to the groups of CDP. In Section IV we define a class of
tensor product matrices, wich we call CDP matrices, whose construction is
based on the properties of sets of CDPs. The construction is a generalisation of
construction given in paper \cite{circ}, where the cyclic group of order n,
a particular case of CDP, where used. Using the properties of sets of CDPs we
derive the basic properties tensor product matrices. In particular we derive
the direct sum decomposition of the carrier space of CDP matrices and the
corresponding decomposition of CDP matrices into a direct sum of
orthogonally supported operators. Further we derive some properties of
partially transposed CDP matrices. It appears that, under assumptions of
commutativity of the set of CDPs definig the CDP matrix, the partially
transposed CDP matrix is again a CDP matrix with, in general,  a
transformed set of CDP. This allows formulate a conditions for PPT
property of the CDP matrix. In next subsections we give a realignement and
majorisation criteria for tensor product matrices. In the last \ V section
we consider the properties of some linear maps: particular Irreducibly
Covariant Quantum Channels, the Reduction map and of its generalisation the
Breuer-Hall map \cite{B,H}  and we describe their relation, via Choi-Jamiolkowski
isomorphism,  to CDP\ matrices.

\section{Completely Different Permutations}

In order to generalize the idea of circulant states introduced in \cite{circ} we introduce a concept  of Completely Different
Permutations CDP. In what follows we denote by $S(n)$ the symmetric group $S(n)$ and $\sigma \in S(n)$ denotes a permutation with the corresponding matrix representation $m(\sigma)=(\delta _{\sigma^{-1}(i)j})$. A cyclic group generated by a cycle
permutation $c$ of length $n$ will be denoted $C(n)= \{c^{i}=(c)^{i}:i=0,1,\ldots,n-1\}\subset S(n)$ where $c^{0}= {\rm id},
c^{i}(j)=j+_{n}i \equiv i+j\quad \text{mod}(n).$

\begin{definition} \label{CDPdef} Two permutations $\sigma, \rho \in S(n)$ are Completely Different CDP iff $\sigma(i)\neq \rho(i)$ for any $i=1,\ldots,n$.
\end{definition}

It easy to check that

\begin{proposition} \label{CDP-MOP}
Permutations $\sigma,\rho \in S(n)$ are CDP  iff $tr(m(\sigma
^{-1})m(\rho ))=0$.
\end{proposition}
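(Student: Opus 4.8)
The plan is to show that $tr\big(m(\sigma^{-1})m(\rho)\big)$ literally counts the indices on which $\sigma$ and $\rho$ coincide, so that its vanishing is, by Definition~\ref{CDPdef}, exactly the CDP condition. The starting point is that $\sigma\mapsto m(\sigma)$ is a group representation: a one-line computation from $m(\sigma)=(\delta_{\sigma^{-1}(i)j})$ gives $\big(m(\sigma)m(\tau)\big)_{ij}=\sum_k\delta_{\sigma^{-1}(i)k}\,\delta_{\tau^{-1}(k)j}=\delta_{\tau^{-1}(\sigma^{-1}(i))j}=\delta_{(\sigma\tau)^{-1}(i)j}$, i.e. $m(\sigma)m(\tau)=m(\sigma\tau)$. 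In particular $m(\sigma^{-1})m(\rho)=m(\sigma^{-1}\rho)$, so it is enough to understand $tr\,m(\tau)$ for a single permutation $\tau$.

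Next I would note that $tr\,m(\tau)=\sum_i[m(\tau)]_{ii}=\sum_i\delta_{\tau^{-1}(i)\,i}$, and that $\delta_{\tau^{-1}(i)\,i}=1$ precisely when $\tau(i)=i$; hence $tr\,m(\tau)$ is the number of fixed points of $\tau$, in particular a non-negative integer. Applying this with $\tau=\sigma^{-1}\rho$ and observing that $(\sigma^{-1}\rho)(i)=i$ if and only if $\rho(i)=\sigma(i)$, one obtains $tr\big(m(\sigma^{-1})m(\rho)\big)=\#\{\,i:\sigma(i)=\rho(i)\,\}$. (One could equally well skip the representation property and read this count off the diagonal of the matrix product $m(\sigma^{-1})m(\rho)$ directly; the bookkeeping is the same.)

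The conclusion is then immediate in both directions: since the right-hand side is a sum of $0$'s and $1$'s, it equals $0$ if and only if no index $i$ satisfies $\sigma(i)=\rho(i)$, i.e. if and only if $\sigma$ and $\rho$ are CDP. I expect no genuine obstacle: the statement is elementary, and the only thing requiring a little care is tracking the inverses in the convention $m(\sigma)=(\delta_{\sigma^{-1}(i)j})$ so that the fixed-point condition for $\sigma^{-1}\rho$ emerges in exactly the form $\sigma(i)=\rho(i)$ appearing in Definition~\ref{CDPdef}. The one conceptual point worth stressing is that the trace here is not merely ``small on average'' but a sum of non-negative integer contributions, one per index, so that trace zero forces termwise vanishing.
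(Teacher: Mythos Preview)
Your argument is correct: the trace $tr\big(m(\sigma^{-1})m(\rho)\big)$ equals the number of coincidences $\#\{i:\sigma(i)=\rho(i)\}$, and since this is a sum of non-negative integers it vanishes exactly when there are no coincidences, which is Definition~\ref{CDPdef}. The paper does not spell out a proof (it just writes ``It easy to check that''), so there is no alternative approach to compare against; your computation via the representation property $m(\sigma^{-1})m(\rho)=m(\sigma^{-1}\rho)$ and the fixed-point interpretation of the trace is precisely the natural verification the authors are leaving to the reader.
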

Hence $\sigma ,\rho \in S(n)$ are CDP iff $m(\sigma)$ and $m(\rho)$ are mutually orthogonal  with respect to the Frobenius
scalar product in $M(n,\mathbb{C})$. Therefore, one may equivalently call a set of CDP a set of  Mutually Orthogonal Permutations (MOP).


One can check that

\begin{proposition} \label{at_most_n}
Any set of CDP in $S(n)$ contain at most $n$ elements.
\end{proposition}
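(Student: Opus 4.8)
The plan is to exploit the defining property of a CDP set at a single coordinate, turning the problem into a pigeonhole count. Fix any index $i_{0}\in\{1,\dots,n\}$ and, for a given set $\mathcal{S}$ of pairwise CDP, consider the evaluation map $\sigma\mapsto\sigma(i_{0})$ from $\mathcal{S}$ into $\{1,\dots,n\}$. First I would observe that this map is injective: if $\sigma,\rho\in\mathcal{S}$ are distinct then, by Definition \ref{CDPdef}, $\sigma(i)\neq\rho(i)$ for every $i$, so in particular $\sigma(i_{0})\neq\rho(i_{0})$. An injection into an $n$-element set has domain of size at most $n$, hence $|\mathcal{S}|\le n$.

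An equivalent route goes through the matrix language of Proposition \ref{CDP-MOP}. With the convention $m(\sigma)=(\delta_{\sigma^{-1}(i)j})$ the first column of $m(\sigma)$ is the standard basis vector $e_{\sigma(1)}$, and the CDP condition forces the first columns of the matrices $m(\sigma)$, $\sigma\in\mathcal{S}$, to be pairwise orthogonal unit vectors in $\mathbb{C}^{n}$. Since an orthonormal system in $\mathbb{C}^{n}$ has at most $n$ members, the bound follows. Either formulation is a couple of lines; I would present the first, since it needs nothing beyond the definition.

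I do not expect a genuine obstacle here, but the one point worth flagging is that invoking Frobenius orthogonality of the \emph{full} $n\times n$ permutation matrices (Proposition \ref{CDP-MOP}) only yields the weaker bound $n^{2}=\dim M(n,\mathbb{C})$. To get the sharp constant $n$ the argument must be localised to a single row or column of the matrices, equivalently to the value of the permutations at one fixed point; recognising that this localisation is both legitimate and sufficient is the only conceptual step.
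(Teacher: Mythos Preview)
Your argument is correct and matches the paper's implicit reasoning: the paper states the proposition without proof (``One can check that'') and immediately afterwards introduces the enumeration convention $\sigma_i(1)=i$, which is precisely your injection $\sigma\mapsto\sigma(1)$. Your remark that Frobenius orthogonality of the full matrices gives only the bound $n^2$, and that one must localise to a single coordinate to obtain $n$, is also on point.
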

In what follows we  consider only maximal sets of CDPs containing $n$
permutations and they will be denoted $\Sigma _{n}=\{\sigma
_{i}\}_{i=1}^{n}$. The sets of CDPs have many interesting properties and the first one is the
possibility to enumerate the permutations from CDP $\Sigma _{n}$ in a very
convenient and useful way. The structure of a set of CDPs allows to enumerate a set $\{\sigma_1,\ldots,\sigma_n\}$ as follows

\begin{equation} 
\sigma _{i}(1)=i,\qquad i=1,...,n \ .
\end{equation}
Using this convention one finds

\begin{proposition} \label{abelianCDP}
Let $\Sigma _{n}$ $=\{\sigma _{i}\}_{i=1}^{n}$ be an abelian set of CDP.  Then

\begin{equation} 
\forall i,j\qquad \sigma _{i}(j)=\sigma _{j}(i),
\end{equation}
and
\begin{equation} 
\forall k\quad \forall i,j\qquad \sigma _{\sigma _{k}(i)}(j)=\sigma _{\sigma
_{k}(j)}(i).
\end{equation}
\end{proposition}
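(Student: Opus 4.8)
The plan is to derive both identities directly from the single hypothesis that the permutations $\sigma_i$ pairwise commute, together with the enumeration convention $\sigma_i(1)=i$; no appeal to closure of $\Sigma_n$ under composition is needed (even though, by the remark in the Introduction, $\Sigma_n$ is in fact a group).

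First I would prove the symmetry relation $\sigma_i(j)=\sigma_j(i)$ by evaluating the commutation relation $\sigma_i\sigma_j=\sigma_j\sigma_i$ at the point $1$. Since $\sigma_j(1)=j$ and $\sigma_i(1)=i$, the left-hand side gives $\sigma_i(\sigma_j(1))=\sigma_i(j)$ and the right-hand side gives $\sigma_j(\sigma_i(1))=\sigma_j(i)$, hence $\sigma_i(j)=\sigma_j(i)$. This is the entire content of the first identity; the only thing to watch is the bookkeeping of the convention that fixes the labels of the $\sigma_i$.

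For the second identity I would bootstrap from the first. Fix $k$ and put $a=\sigma_k(i)$, $b=\sigma_k(j)$, noting $a,b\in\{1,\dots,n\}$ because $\sigma_k$ permutes $\{1,\dots,n\}$, so that $\sigma_a$ and $\sigma_b$ are well defined. Applying the symmetry relation just established (with $a$ in the role of the label) gives $\sigma_{\sigma_k(i)}(j)=\sigma_a(j)=\sigma_j(a)=\sigma_j(\sigma_k(i))=(\sigma_j\sigma_k)(i)$, and likewise $\sigma_{\sigma_k(j)}(i)=(\sigma_i\sigma_k)(j)$. Using commutativity to swap the two factors, $(\sigma_j\sigma_k)(i)=(\sigma_k\sigma_j)(i)=\sigma_k(\sigma_j(i))$ and $(\sigma_i\sigma_k)(j)=(\sigma_k\sigma_i)(j)=\sigma_k(\sigma_i(j))$. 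Since $\sigma_j(i)=\sigma_i(j)$ by the first identity, the two right-hand sides coincide, which yields $\sigma_{\sigma_k(i)}(j)=\sigma_{\sigma_k(j)}(i)$.

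I do not expect a genuine obstacle: each part reduces to evaluating a commutation relation at a well-chosen point and then using injectivity of a permutation. The one place requiring care is to keep the two roles of an index distinct — as a label of a permutation versus as an argument of one — and to make sure the enumeration convention $\sigma_i(1)=i$ is applied consistently, in particular that $\sigma_{\sigma_k(i)}$ makes sense precisely because $\sigma_k(i)$ again lies in $\{1,\dots,n\}$.
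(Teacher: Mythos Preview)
Your proof is correct and follows essentially the same approach as the paper: evaluate the commutation relation $\sigma_i\sigma_j=\sigma_j\sigma_i$ at the point $1$ and use the enumeration convention $\sigma_i(1)=i$ to obtain the first identity, from which the second follows. The paper simply asserts that the second statement ``simply follows''; your explicit derivation via $\sigma_{\sigma_k(i)}(j)=\sigma_j(\sigma_k(i))=\sigma_k(\sigma_j(i))$ and the symmetry $\sigma_j(i)=\sigma_i(j)$ fills in exactly the intended details.
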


\begin{proof}
If $\Sigma _{n}$ $=\{\sigma _{i}\}_{i=1}^{n}$ is an abelian CDP, then
\begin{equation} 
\forall k\quad \forall i,j\qquad \sigma _{i}\sigma _{j}(k)=\sigma _{j}\sigma
_{i}(k),
\end{equation}%
in particular it holds for $k=1$ which in our enumeration $\sigma
_{i}(1)=i $ gives
\begin{equation} 
\forall i,j\qquad \sigma _{i}\sigma _{j}(1)=\sigma _{j}\sigma
_{i}(1)\Rightarrow \sigma _{i}(j)=\sigma _{j}(i).
\end{equation}%
The second statement  simply follows.
\end{proof}

Let us consider some examples of sets of CDPs. The simplest one is the group $S(2) = \{ \mathrm{Id}, (12) \}$
which is obviously a commutative set of CDPs. The next example is also simple but
less trivial

\begin{example} \label{ex_S3}
In the group $S(3)$ we have two sets of CDPs
\begin{equation} 
\Sigma _{3}=\{\sigma _{1}=(23),\sigma _{2}=(12),\sigma _{3}=(13)\},\qquad
C(3)=\{c^{i}=(012)^{i}:i=0,1,2\}.
\end{equation}
\end{example}

\begin{example} \label{ex_circ}
Any cyclic group generated by a cycle permutation $c=(01..n-1)$ of length $n$
\begin{equation} 
C(n)=\{c^{i}=(c)^{i}:i=0,1,...,n-1\}
\end{equation}
is a set of CDPs.
\end{example}


\begin{example} \label{ex_V4}
The abelian group
\begin{equation} 
V(4)=\{\sigma _{1}={\rm id},\ \sigma _{2}=(12)(34),\ \sigma
_{3}=(13)(24),\ \sigma _{4}=(14)(23)\}\subset S(4)
\end{equation}
defines a set of CDPs. However, the following sets of CDPs
\begin{equation} 
\Sigma _{4}=\{\sigma _{1}=(34),\quad \sigma _{2}=(12),\quad \sigma
_{3}=(13)(24),\quad \sigma _{4}=(14)(23)\}\subset S(4),
\end{equation}%
and
\begin{equation} 
\Sigma _{4}^{\prime }=\{\sigma _{1}=(234),\quad \sigma _{2}=(124),\quad
\sigma _{3}=(132),\quad \sigma _{4}=(143)\}\subset S(4)
\end{equation}
are not groups.
\end{example}

\begin{example} \label{ex_S5}
The following subset of $S(5)$
\begin{equation} 
\Sigma _{5}=\{\sigma _{1}= {\rm id},\;\sigma _{2}=(12)(345),\;\sigma
_{3}=(13)(542),\;\sigma _{4}=(14)(352),\;\sigma _{5}=(15)(243)\}
\end{equation}%
defines CDP.
\end{example}

\begin{proposition} \label{relacja}
Let $\Sigma _{n}=\{\sigma _{i}\}_{i=1}^{n}\subset S(n)$ be a set of CDPs. Then
the sets $\Sigma _{n}^{-1}=\{\sigma _{i}^{-1}\}_{i=1}^{n}$ and $\rho \Sigma
_{n}\delta ,$ for any $\rho ,\delta \in S(n)$ are also a set of CDPs.
\end{proposition}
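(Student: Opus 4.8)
The plan is to check the completely-different property pairwise directly from Definition \ref{CDPdef}, and to invoke the fact that inversion and two-sided translation are bijections of $S(n)$ in order to conclude that the new sets still consist of $n$ distinct permutations, hence are again maximal sets of CDPs in the sense used throughout. It is enough to treat the two cases separately, since $\rho\Sigma_n\delta$ is obtained by composing a left translation and a right translation, while $\Sigma_n^{-1}$ is obtained by inversion.

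For $\rho\Sigma_n\delta$, fix $i\neq j$ and suppose, for contradiction, that $\rho\sigma_i\delta$ and $\rho\sigma_j\delta$ agree at some point $k\in\{1,\dots,n\}$, i.e.\ $\rho(\sigma_i(\delta(k)))=\rho(\sigma_j(\delta(k)))$. Applying $\rho^{-1}$ to both sides and setting $m=\delta(k)$ gives $\sigma_i(m)=\sigma_j(m)$, contradicting the assumption that $\sigma_i$ and $\sigma_j$ are CDP. Hence any two elements of $\rho\Sigma_n\delta$ are CDP; moreover $\sigma\mapsto\rho\sigma\delta$ is injective on $S(n)$, so $\rho\Sigma_n\delta$ again has $n$ elements. (The special cases $\delta=\mathrm{id}$ and $\rho=\mathrm{id}$ record that left, respectively right, translation alone already preserves the CDP property, which is the intuitive statement that relabelling the values, respectively the arguments, of all permutations simultaneously cannot create a coincidence.)

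For $\Sigma_n^{-1}$, fix $i\neq j$ and suppose $\sigma_i^{-1}(k)=\sigma_j^{-1}(k)=:m$ for some $k$. Then $\sigma_i(m)=k=\sigma_j(m)$, again contradicting that $\sigma_i,\sigma_j$ are CDP. Equivalently, via Proposition \ref{CDP-MOP} the pair $\sigma_i^{-1},\sigma_j^{-1}$ is CDP iff $tr(m(\sigma_i)\,m(\sigma_j^{-1}))=0$, and since permutation matrices are real with $m(\sigma)^{T}=m(\sigma^{-1})$ and the trace is invariant under transposition, $tr(m(\sigma_i)\,m(\sigma_j^{-1}))=tr(m(\sigma_i)\,m(\sigma_j)^{T})=tr(m(\sigma_i)^{T}\,m(\sigma_j))=tr(m(\sigma_i^{-1})\,m(\sigma_j))=0$ by the CDP hypothesis on $\Sigma_n$. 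Since inversion is a bijection of $S(n)$, $\Sigma_n^{-1}$ has $n$ elements as well.

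There is essentially no hard step; the only things to watch are the bookkeeping between left and right composition and, in the matrix reformulation, keeping track of which factor carries the transpose. It is worth noting in passing that these operations generate a large family of CDP sets from any given one, which is precisely what makes the statement useful for producing the examples and constructions of the later sections.
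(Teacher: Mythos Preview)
Your proof is correct; it is the straightforward pairwise verification from Definition~\ref{CDPdef}, together with the observation that inversion and two-sided translation are injective on $S(n)$ so that the resulting sets again have $n$ elements. The paper itself states Proposition~\ref{relacja} without proof (it is treated as an easy check), so your argument is precisely the routine verification the authors leave to the reader; the alternative route you sketch via Proposition~\ref{CDP-MOP} is a nice touch but not needed.
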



One can check that in the Example \ref{ex_S3} we have in $S(3)$%
\begin{equation} 
\Sigma _{3}=C(3)(23)
\end{equation}%
where we have changed the notation for the cycle $c=(123).$ In Example 12
for $S(4)$ we have%
\begin{equation} 
\Sigma _{4}=(34)C(4)
\end{equation}%
where $C(4)=\{\sigma _{1}=id,$ \ $\sigma _{2}=(12)(34),$ $\ \sigma
_{3}=(1324),\quad \sigma _{4}=(1423)\}$ and
\begin{equation} 
(23)\Sigma _{4}^{\prime }(34)=V(4).
\end{equation}%



\begin{proposition} \label{fixed_points}
Let $\Sigma _{n}=\{\sigma _{i}\}_{i=1}^{n}\subset S(n)$ be a set of CDP. Then

\begin{equation} 
\forall j=1,..,n\quad \exists i=1,..,n\qquad \sigma _{i}(j)=j,
\end{equation}
i.e. any $j=1,..,n$ is a fixed point for some permutation in the set of CDPs $\Sigma
_{n}$. Moreover,

\begin{equation} 
\forall i=1,..,n\quad \exists j=1,..,n\qquad \sigma _{i}(j)=1.
\end{equation}
\end{proposition}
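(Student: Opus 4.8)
The plan is to reduce both assertions to one observation about the ``columns'' of $\Sigma_{n}$. Fix $j\in\{1,\dots,n\}$ and look at the list of images $\sigma_{1}(j),\dots,\sigma_{n}(j)$. Because $\Sigma_{n}=\{\sigma_{i}\}_{i=1}^{n}$ is a set of CDP, Definition \ref{CDPdef} gives $\sigma_{i}(j)\neq\sigma_{i'}(j)$ whenever $i\neq i'$, so these are $n$ pairwise distinct elements of $\{1,\dots,n\}$ and hence exhaust it; equivalently, the map $i\mapsto\sigma_{i}(j)$ is a bijection of $\{1,\dots,n\}$. In matrix terms this is the statement $\sum_{i=1}^{n}m(\sigma_{i})=J_{n}$, the all-ones matrix: by Proposition \ref{CDP-MOP} the matrices $m(\sigma_{i})$ are Frobenius-orthogonal, so $\|\sum_{i}m(\sigma_{i})\|^{2}=n^{2}$, and this together with the unit row and column sums of each $m(\sigma_{i})$ forces every entry of the sum to equal $1$.

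Granting this, the first claim is immediate: for a fixed $j$, surjectivity of $i\mapsto\sigma_{i}(j)$ puts $j$ into the image, so some $\sigma_{i}$ satisfies $\sigma_{i}(j)=j$; since $j$ was arbitrary, every point of $\{1,\dots,n\}$ is a fixed point of some member of $\Sigma_{n}$.

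For the second claim I would fix $i$ and simply invert: $\sigma_{i}\in S(n)$ is itself a bijection, so $j:=\sigma_{i}^{-1}(1)$ is the unique index with $\sigma_{i}(j)=1$. (Alternatively, read this off the first row of $\sum_{i}m(\sigma_{i})=J_{n}$, which has every entry equal to $1$.) This half in fact uses nothing beyond $\sigma_{i}$ being a permutation.

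The argument is essentially routine; the only step that invokes anything nontrivial is the passage from ``pairwise distinct'' to ``exhaustive'' in the first paragraph, which relies on $\Sigma_{n}$ being a \emph{maximal} set of CDP, i.e. having exactly $n$ elements, as fixed in the conventions preceding the proposition. So I do not expect a genuine obstacle.
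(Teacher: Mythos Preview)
Your proof is correct. The paper states this proposition without proof, so there is no argument to compare against; your pigeonhole observation that for each fixed $j$ the values $\sigma_{1}(j),\dots,\sigma_{n}(j)$ are $n$ pairwise distinct elements of $\{1,\dots,n\}$ and hence exhaust it is exactly the natural route, and your remark that the second assertion is a triviality about a single permutation is also right. The matrix reformulation $\sum_{i}m(\sigma_{i})=J_{n}$ is a pleasant repackaging but not needed.
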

Due to our enumeration convention we have always $\sigma _{1}(1)=1.$ If
the identity permutation $id=$ $\sigma _{1}$ is in a set of CDPs $\Sigma _{n}$ then
obviously all $j=1,..,n$ are fixed points of $\sigma _{1}$ and all remaining $
\sigma _{i}\in \Sigma _{n}$ have no fixed points.


\begin{definition} \label{mE}
For any set of CDPs $ \Sigma _{n}=\{\sigma _{i}\}_{i=1}^{n}\subset S(n)$ we
define a set of matrices $m(E)=\{m(E_{j}),\quad j=1,..,n\}\quad \quad $
\begin{equation} 
\forall j=1,..,n\quad m(E_{j})=\sum_{k=1}^{n}e_{\sigma _{k}(1)\sigma
_{k}(j)}=\sum_{k=1}^{n}e_{k\sigma _{k}(j)}\in M(n,%
\mathbb{C}
)
\end{equation}%
where $\{e_{ij}\}_{i,j=1}^{n}$ is a natural basis of the linear space $M(n,%
\mathbb{C}
)$. In particular we have $m(E_{1})=\mathbf{1}_{n}\in M(n,%
\mathbb{C}
).$
\end{definition}

It is not difficult to check that the structure of the set of CDPs $ \Sigma _{n}$ implies

\begin{proposition} \label{CDP_from_mE}
Let $\Sigma _{n}=\{\sigma _{i}\}_{i=1}^{n}\subset S(n)$ be a set of CDPs then for
any $j=1,..,n$ the matrices $m(E_{j})=\sum_{k=1}^{n}e_{k\sigma _{k}(j)}$ are
mutually orthogonal e.i.
\begin{equation} 
tr(m(E_{i})^{+}m(E_{j}))=\delta _{ij}n.
\end{equation}%
The matrices $m(E_{j})$ are permutation matrices, which are natural matrix
representations of corresponding permutations denoted $E_{j}\in S(n)$ which
are of the form
\begin{equation} 
E_{j}=\left(
\begin{array}{ccccc}
\sigma _{1}(j) & \sigma _{2}(j) & . & . & \sigma _{n}(j) \\
1 & 2 & . & . & n%
\end{array}%
\right) ,\qquad E_{j}(\sigma _{k}(j))=k,
\end{equation}%
in particular $E_{1}=id$ and $E_{j}(\sigma _{1}(j))=1,$ so if $\sigma
_{1}=id\in \Sigma _{n}$ then $E_{j}(j)=1.$ Moreover the set of permutations $%
E\equiv \{E_{j}:j=1,..,n\}$ is also a set of CDPs, which is a direct consequence of
the mutual orthogonality of the matrices $m(E_{j})$.
\end{proposition}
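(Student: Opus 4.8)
The plan is to verify each of the three claims in Proposition~\ref{CDP_from_mE} in turn: mutual orthogonality of the matrices $m(E_j)$, their interpretation as permutation matrices for the stated permutations $E_j$, and finally the CDP property of $E=\{E_j\}$.

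First I would establish the orthogonality relation. The matrix $m(E_j)=\sum_{k=1}^{n}e_{k\sigma_k(j)}$ has, in its $k$-th row, a single nonzero entry in column $\sigma_k(j)$. To see that this is indeed a permutation matrix, note that as $k$ ranges over $1,\ldots,n$ the entries $\sigma_k(j)$ are pairwise distinct: if $\sigma_k(j)=\sigma_l(j)$ with $k\neq l$, then $\sigma_k$ and $\sigma_l$ agree at the point $j$, contradicting Definition~\ref{CDPdef}. Hence $\{\sigma_k(j):k=1,\ldots,n\}=\{1,\ldots,n\}$ and $m(E_j)$ has exactly one $1$ in each row and each column. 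For the trace $tr(m(E_i)^{+}m(E_j))=\sum_{k,l}tr(e_{\sigma_k(i)k}\,e_{l\sigma_l(j)})=\sum_{k}\delta_{\sigma_k(i)\sigma_k(j)}$; each summand is $1$ precisely when $\sigma_k(i)=\sigma_k(j)$, i.e. when $i=j$, so the sum equals $n\delta_{ij}$. The special case $m(E_1)=\mathbf{1}_n$ follows from the enumeration convention $\sigma_k(1)=k$.

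Next I would identify the permutation $E_j$. Using $m(\sigma)=(\delta_{\sigma^{-1}(i)j})$ as in the paper, a permutation matrix $m(E_j)$ with a single $1$ in row $k$, column $\sigma_k(j)$ corresponds to the permutation sending $\sigma_k(j)\mapsto k$; this is exactly the two-row array displayed in the statement, and $E_j(\sigma_k(j))=k$. Setting $j=1$ and using $\sigma_k(1)=k$ gives $E_1=\mathrm{id}$; if moreover $\sigma_1=\mathrm{id}\in\Sigma_n$ then $\sigma_1(j)=j$ and the array gives $E_j(j)=1$.

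Finally, the CDP property of $E=\{E_j\}$ is immediate from Proposition~\ref{CDP-MOP}: since $m(E_j)$ is the matrix representation of the permutation $E_j$ and $tr(m(E_i)^{+}m(E_j))=tr(m(E_i^{-1})m(E_j))=n\delta_{ij}$, the matrices $m(E_i)$ are mutually orthogonal in the Frobenius product, hence by Proposition~\ref{CDP-MOP} the permutations $E_i$ are pairwise CDP. I do not anticipate a genuine obstacle here; the only point requiring care is keeping the index bookkeeping straight in the trace computation and in reading off $E_j$ from its matrix, so that the direction of the map (sending $\sigma_k(j)$ to $k$, not $k$ to $\sigma_k(j)$) is recorded correctly.
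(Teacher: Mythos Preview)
Your proposal is correct and matches the paper's intent. The paper does not write out a proof for this proposition at all --- it simply prefaces the statement with ``It is not difficult to check that the structure of the set of CDPs $\Sigma_n$ implies'' and leaves the verification to the reader; your argument (trace computation using $e_{\sigma_k(i)k}e_{l\sigma_l(j)}=\delta_{kl}e_{\sigma_k(i)\sigma_l(j)}$, distinctness of $\sigma_k(j)$ over $k$ from the CDP hypothesis, and the appeal to Proposition~\ref{CDP-MOP}) is exactly the routine check the authors have in mind.
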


\begin{definition} \label{conjugated}
The CDP $E\equiv \{E_{j}:j=1,..,n\}$ will be called the set of 
permutations conjugated to the set of CDPs $\ \Sigma _{n}=\{\sigma _{i}\}_{i=1}^{n},$ which
generates the matrices $m(E_{j})$.
\end{definition}

The set of matrices $m(E)\equiv \{m(E_{j}):j=1,..,n\}$, defined in
Def. \ref{mE} , which is a matrix representation of the set $E$ has the following
nice property

\begin{proposition} \label{abelian_to_group_mE}
Let $\Sigma _{n}=\{\sigma _{i}\}_{i=1}^{n}\subset S(n)$ is an abelian set of CDPs. Then the set of matrices $m(E)\equiv \{m(E):j=1,..,n\}$ is an abelian group
with the following composition law%
\begin{equation} 
\forall i,j=1,..,n\quad m(E_{i})m(E_{j})=m(E_{\sigma _{i}(j)}),
\end{equation}%
which follows directly from the definition of $m(E_{i})$. From Def \ref{mE} it
follows that $m(E_{1})=\mathbf{1}_{n}\in M(n,%
\mathbb{C}
)$.
\end{proposition}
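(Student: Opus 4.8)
The plan is to reduce the asserted operator identity to a combinatorial identity between the permutations $\sigma_i$ and then feed it into Proposition~\ref{abelianCDP}. Starting from Definition~\ref{mE} I would simply multiply the two matrices: with $m(E_i)=\sum_k e_{k\sigma_k(i)}$ and $m(E_j)=\sum_l e_{l\sigma_l(j)}$ and the rule $e_{ab}e_{cd}=\delta_{bc}e_{ad}$, the $l$-sum forces $l=\sigma_k(i)$ and one obtains
\begin{equation}
m(E_i)\,m(E_j)=\sum_{k=1}^{n}e_{k\,\sigma_{\sigma_k(i)}(j)}.
\end{equation}
Since Definition~\ref{mE} applied with the index $\sigma_i(j)\in\{1,\dots,n\}$ gives $m(E_{\sigma_i(j)})=\sum_k e_{k\,\sigma_k(\sigma_i(j))}$, the claim $m(E_i)m(E_j)=m(E_{\sigma_i(j)})$ is equivalent to
\begin{equation}
\sigma_{\sigma_k(i)}(j)=\sigma_k(\sigma_i(j)),\qquad k=1,\dots,n,
\end{equation}
for the fixed $i,j$.

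The abelian hypothesis enters only here. I would chain the two relations of Proposition~\ref{abelianCDP}: first $\sigma_a(b)=\sigma_b(a)$ applied to the subscript rewrites $\sigma_{\sigma_k(i)}(j)$ as $\sigma_{\sigma_i(k)}(j)$; then the second relation $\sigma_{\sigma_k(i)}(j)=\sigma_{\sigma_k(j)}(i)$ (with $k,i,j$ relabelled as $i,k,j$) turns this into $\sigma_{\sigma_i(j)}(k)$; and a last use of $\sigma_a(b)=\sigma_b(a)$ with $a=\sigma_i(j)$ gives $\sigma_k(\sigma_i(j))$. This proves the required identity, hence $m(E_i)m(E_j)=m(E_{\sigma_i(j)})$; the normalisation $m(E_1)=\mathbf{1}_n$ is already recorded in Definition~\ref{mE}.

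It remains to verify that the finite set $m(E)=\{m(E_j)\}_{j=1}^n$ is an abelian group. Closure and the composition law were just established; associativity is inherited from $M(n,\mathbb{C})$; $m(E_1)=\mathbf{1}_n$ is the identity; and commutativity follows since $m(E_i)m(E_j)=m(E_{\sigma_i(j)})=m(E_{\sigma_j(i)})=m(E_j)m(E_i)$ by Proposition~\ref{abelianCDP}. For the existence of inverses I would invoke finiteness: by Proposition~\ref{CDP_from_mE} the matrices $m(E_j)$ are mutually orthogonal, hence pairwise distinct, so $m(E)$ has exactly $n$ elements, and for each fixed $i$ the map $m(E_j)\mapsto m(E_i)m(E_j)=m(E_{\sigma_i(j)})$ is injective (because $\sigma_i$ is a permutation), thus a bijection of this $n$-element set onto itself, so $\mathbf{1}_n$ lies in its image. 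The only genuine step is the middle one; once the index identity $\sigma_{\sigma_k(i)}(j)=\sigma_k(\sigma_i(j))$ is secured — and this is precisely where commutativity is indispensable — everything else is bookkeeping.
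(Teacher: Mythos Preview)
Your proof is correct and follows exactly the route the paper indicates: the paper offers no separate proof, merely stating that the composition law ``follows directly from the definition of $m(E_i)$'', and your computation $m(E_i)m(E_j)=\sum_k e_{k,\sigma_{\sigma_k(i)}(j)}$ together with the index identity $\sigma_{\sigma_k(i)}(j)=\sigma_k(\sigma_i(j))$ is precisely that direct verification, with the group axioms checked afterwards. One small simplification: your three-step chain through Proposition~\ref{abelianCDP} can be shortened to two applications of $\sigma_a(b)=\sigma_b(a)$ plus the raw commutativity hypothesis, namely $\sigma_{\sigma_k(i)}(j)=\sigma_j(\sigma_k(i))=\sigma_k(\sigma_j(i))=\sigma_k(\sigma_i(j))$, but your argument is equally valid.
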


Now because the natural matrix representation of $S(n)$ is
faithful, we get

\begin{corollary} \label{abelian_to_group}
If $\Sigma _{n}=\{\sigma _{i}\}_{i=1}^{n}\subset S(n)$ is an abelian set of CDPs,
then the set of CDPs $E\equiv \{E_{j}:j=1,..,n\}$ is an abelian group with
composition law of the form
\begin{equation} 
\forall i,j=1,..,n\quad E_{i}E_{j}=E_{\sigma _{i}(j)}.
\end{equation}
\end{corollary}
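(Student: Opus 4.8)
The plan is to transport the group structure established at the matrix level in Proposition \ref{abelian_to_group_mE} back to the permutations $E_j$ themselves, using faithfulness of the natural matrix representation $m : S(n) \to M(n,\mathbb{C})$. First I would recall from Proposition \ref{CDP_from_mE} that each $m(E_j)$ is a genuine permutation matrix, and that $E_j \in S(n)$ is precisely the permutation whose matrix it is, so that $m(E_i)m(E_j) = m(E_i E_j)$ holds automatically because $m$ is a homomorphism. Combining this with the composition law $m(E_i)m(E_j) = m(E_{\sigma_i(j)})$ from Proposition \ref{abelian_to_group_mE} gives $m(E_i E_j) = m(E_{\sigma_i(j)})$, and since $m$ is injective on $S(n)$ we conclude $E_i E_j = E_{\sigma_i(j)}$.

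It then remains to verify that $E \equiv \{E_j : j=1,\ldots,n\}$ is actually a group under this multiplication, i.e.\ that the set is closed, contains a neutral element, and that inverses lie again in $E$. Closure is immediate from $E_i E_j = E_{\sigma_i(j)}$ since $\sigma_i(j) \in \{1,\ldots,n\}$. The neutral element is $E_1 = \mathrm{id}$, as noted in Proposition \ref{CDP_from_mE} (equivalently $m(E_1) = \mathbf{1}_n$ from Proposition \ref{abelian_to_group_mE}); one checks $E_1 E_j = E_{\sigma_1(j)} = E_j$ using the enumeration convention $\sigma_1 = \mathrm{id}$ valid for an abelian set of CDPs (which contains the identity — here one should recall that an abelian set of CDPs necessarily contains $\mathrm{id}$, so $\sigma_1(j) = j$). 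For inverses, fix $i$; since $\sigma_i$ is a bijection of $\{1,\ldots,n\}$ there is a unique $k$ with $\sigma_i(k) = 1$, whence $E_i E_k = E_1 = \mathrm{id}$, so $E_i^{-1} = E_k \in E$. Commutativity of $E$ follows from commutativity of $m(E)$ in Proposition \ref{abelian_to_group_mE} together with faithfulness of $m$, or directly from $E_i E_j = E_{\sigma_i(j)} = E_{\sigma_j(i)} = E_j E_i$ using Proposition \ref{abelianCDP}.

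Since all the real content — the multiplication rule $m(E_i)m(E_j) = m(E_{\sigma_i(j)})$ and the identity $m(E_1) = \mathbf{1}_n$ — has already been supplied by Proposition \ref{abelian_to_group_mE}, the only thing this corollary genuinely adds is the observation that faithfulness of the permutation representation lets one erase the symbol $m$ throughout. Consequently there is no real obstacle; the mildly delicate point, if any, is simply to be careful that $E$ as a set of permutations inherits the abelian group axioms rather than merely the binary operation, which is exactly what the faithfulness argument guarantees.
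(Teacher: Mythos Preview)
Your proposal is correct and follows exactly the paper's approach: the paper's entire argument for this corollary is the single sentence ``Now because the natural matrix representation of $S(n)$ is faithful, we get\ldots'', i.e.\ faithfulness of $m$ transports Proposition~\ref{abelian_to_group_mE} verbatim to the permutations $E_j$. Your additional verification of the group axioms is more explicit than the paper bothers to be; the one cosmetic simplification is that $\sigma_1 = \mathrm{id}$ follows immediately from Proposition~\ref{abelianCDP} via $\sigma_1(j) = \sigma_j(1) = j$, so no separate appeal to ``an abelian set of CDPs contains $\mathrm{id}$'' is needed.
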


On the other hand the set $E$ is related to the abelian set of CDPs $\Sigma
_{n}=\{\sigma _{i}\}_{i=1}^{n}$ in the following way

\begin{proposition} \label{ab->conj=inv}
Suppose that $\Sigma _{n}=\{\sigma _{i}\}_{i=1}^{n}\subset S(n)$ is an
abelian set of CDPs, then the permutations in the set $E\equiv \{E_{j}:j=1,..,n\}$
are related to the permutations of the set $\Sigma _{n}$ in a very simple
way
\begin{equation} 
\forall i=1,..,n\quad E_{i}=\sigma _{i}^{-1}\Rightarrow E=\Sigma _{n}^{-1},
\end{equation}%
which follows from the relation for its matrix representations%
\begin{equation} 
\forall i=1,..,n\quad m(E_{i})=m(\sigma _{i}^{-1})\Rightarrow M(E)=M(\Sigma
_{n}^{-1})
\end{equation}%
and this is a simple consequence of Prop. \ref{abelianCDP}
\end{proposition}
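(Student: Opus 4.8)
The plan is to read off the conjugate permutations $E_j$ from their explicit description in Proposition \ref{CDP_from_mE} and then feed in the symmetry $\sigma_i(j)=\sigma_j(i)$, which holds for abelian sets of CDPs by Proposition \ref{abelianCDP}. This makes the identification $E_j=\sigma_j^{-1}$ essentially immediate.

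First, recall from Proposition \ref{CDP_from_mE} that $E_j$ is the unique permutation satisfying $E_j(\sigma_k(j))=k$ for every $k=1,\dots,n$. Since $\Sigma_n$ is abelian, Proposition \ref{abelianCDP} gives $\sigma_k(j)=\sigma_j(k)$, so this defining relation becomes $E_j(\sigma_j(k))=k$ for all $k$; as $k\mapsto\sigma_j(k)$ is a bijection of $\{1,\dots,n\}$, this says precisely that $E_j\sigma_j=\mathrm{id}$, hence $E_j=\sigma_j^{-1}$. Letting $j$ range over $1,\dots,n$ yields $E=\Sigma_n^{-1}$.

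Equivalently, and this is the matrix formulation quoted in the statement, one starts from $m(E_j)=\sum_{k=1}^{n}e_{k\,\sigma_k(j)}$ of Definition \ref{mE}, substitutes $\sigma_k(j)=\sigma_j(k)$ to get $m(E_j)=\sum_{k=1}^{n}e_{k\,\sigma_j(k)}$, and recognises the right-hand side as the natural matrix representation of $\sigma_j^{-1}$: from the convention $m(\sigma)=(\delta_{\sigma^{-1}(i)j})$ one has $m(\sigma_j^{-1})=\sum_{k=1}^{n}e_{k\,\sigma_j(k)}$. Hence $m(E_j)=m(\sigma_j^{-1})$ for all $j$, i.e. $M(E)=M(\Sigma_n^{-1})$, and since the natural representation of $S(n)$ is faithful this forces $E_j=\sigma_j^{-1}$, i.e. $E=\Sigma_n^{-1}$.

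There is no genuine obstacle here; the only point demanding a little care is index bookkeeping — correctly reading off $m(\sigma_j^{-1})=\sum_{k}e_{k\,\sigma_j(k)}$ from $m(\sigma)=(\delta_{\sigma^{-1}(i)j})$, and keeping in mind that in $e_{k\,\sigma_k(j)}$ the first label is $\sigma_k(1)=k$ (a row index), not a column index. Once the abelian symmetry of Proposition \ref{abelianCDP} is invoked, everything collapses to a one-line substitution.
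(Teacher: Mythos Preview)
Your proof is correct and follows exactly the approach the paper indicates: the paper's ``proof'' is embedded in the proposition itself and amounts to the assertion that $m(E_i)=m(\sigma_i^{-1})$ is a simple consequence of Prop.~\ref{abelianCDP}, and you have spelled out precisely this computation (substituting $\sigma_k(j)=\sigma_j(k)$ into $m(E_j)=\sum_k e_{k\,\sigma_k(j)}$ and recognising $\sum_k e_{k\,\sigma_j(k)}=m(\sigma_j^{-1})$). Your alternative permutation-level argument via $E_j(\sigma_k(j))=k$ is an equally valid and slightly cleaner rendering of the same idea.
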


Finally, one arrives at the following

\begin{theorem} \label{abelian->group}
Let $\Sigma _{n}=\{\sigma _{i}\}_{i=1}^{n}\subset S(n)$ be a set of CDPs. Then the
set $\Sigma _{n}$ is abelian if and only if it is an abelian group of
permutations realised as a set of CDPs.
\end{theorem}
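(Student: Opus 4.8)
The implication from ``abelian group'' to ``abelian set of CDPs'' is immediate: a group whose elements pairwise commute is by definition an abelian set, and it is a set of CDPs by the standing hypothesis on $\Sigma_n$; so the real content is the converse, and the plan is simply to assemble the machinery developed above. Assume then that $\Sigma_n=\{\sigma_i\}_{i=1}^n$ is an abelian set of CDPs. First I would invoke Corollary~\ref{abelian_to_group}: the conjugated set $E=\{E_j\}_{j=1}^n$ is then an abelian group of permutations, with multiplication $E_iE_j=E_{\sigma_i(j)}$. (Under the hood this rests on Proposition~\ref{abelian_to_group_mE}, i.e.\ on the identity $m(E_i)m(E_j)=m(E_{\sigma_i(j)})$ for the matrix representatives, which in turn uses the symmetry $\sigma_i(j)=\sigma_j(i)$ of Proposition~\ref{abelianCDP}, together with faithfulness of the natural matrix representation of $S(n)$.) Next I would use Proposition~\ref{ab->conj=inv}, which in the abelian case identifies the conjugated set explicitly: $E_i=\sigma_i^{-1}$ for all $i$, that is $E=\Sigma_n^{-1}$. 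Combining the two statements, $\Sigma_n^{-1}$ is an abelian group.

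To close the argument I would note that inversion $\sigma\mapsto\sigma^{-1}$ is an involutive bijection of $S(n)$ that sends subgroups to subgroups and preserves commutativity; since $(\Sigma_n^{-1})^{-1}=\Sigma_n$, it follows that $\Sigma_n$ is itself an abelian group (in fact, being closed under inverses, $\Sigma_n^{-1}=\Sigma_n$). Because $\Sigma_n$ was assumed to be a set of CDPs, it is an abelian group of permutations realised as a set of CDPs, as claimed.

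I do not anticipate a genuine obstacle, since all the work has already been done in Propositions~\ref{abelianCDP}, \ref{abelian_to_group_mE} and \ref{ab->conj=inv}. Were one to argue from scratch, the crux would be closure of $\Sigma_n$ under composition, concretely $\sigma_i\sigma_j=\sigma_{\sigma_i(j)}$: one checks $(\sigma_i\sigma_j)(1)=\sigma_i(j)$ from the enumeration $\sigma_i(1)=i$, and then verifies that $\sigma_i\sigma_j$ and $\sigma_{\sigma_i(j)}$ agree on all of $\{1,\ldots,n\}$ with the help of the cocycle-type identity $\sigma_{\sigma_k(i)}(j)=\sigma_{\sigma_k(j)}(i)$ of Proposition~\ref{abelianCDP}. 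Since a finite set of permutations closed under composition is automatically a subgroup, this closure (which in particular forces the identity into $\Sigma_n$) already yields the group property.
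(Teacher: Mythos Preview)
Your proposal is correct and follows exactly the route the paper intends: the theorem is stated without explicit proof, immediately after Corollary~\ref{abelian_to_group} and Proposition~\ref{ab->conj=inv}, precisely because it is meant to be read as their combination, and you have assembled those ingredients in the natural way (including the observation that a group equals its own inverse set, so $\Sigma_n^{-1}$ being a group forces $\Sigma_n=\Sigma_n^{-1}$). Your closing remark on the direct closure argument $\sigma_i\sigma_j=\sigma_{\sigma_i(j)}$ is also in the spirit of the paper, which records that very formula later in Proposition~\ref{GCDP}.
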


\begin{remark} \label{repr_not_CDP}
Note that a finite groups may be realised as a groups of
permutations, which are not CDPs. For example the group $V(4)$ in Example
\ref{ex_V4}  is isomorphic to the following group of permutations $G=\{id,$ \ $(12),$
\ $(34),$ \ $(12)(34)\}$, which are not CDP.
\end{remark}

The correspondence between the sets $\Sigma _{n}$ and $E\equiv
\{E_{j}:j=1,..,n\}$ is not unique, in fact we have

\begin{proposition} \label{centraliser}
Let $\Sigma _{n}=\{\sigma _{i}\}_{i=1}^{n}\subset S(n)$ be a set of CDPs with $%
m(E)\equiv \{m(E_{j}):j=1,..,n\}$, defined in Def. \ref{mE}. Suppose that $\rho \in
S(n):\rho (1)=1$ and $\delta \in S(n)$ is such that
\begin{equation} 
m(\delta )M(E)m(\delta )^{-1}=M(E),
\end{equation}%
i.e. $m(\delta )$ belongs to the centraliser of the set $m(E)$, then the set of CDPs $\Sigma _{n}^{\prime }=\{\sigma _{i}^{\prime }=\delta \sigma _{i}\rho
\}_{i=1}^{n}$ is such that
\begin{equation} 
m(E_{i}^{\prime })=m(\delta )m(E_{\rho (i)})m(\delta )^{-1}\Rightarrow
m(E^{\prime })=(E),
\end{equation}%
so the set of CDPs $\Sigma _{n}$ and $\Sigma _{n}^{\prime }$ generates the same
set $M(E)$ and consequently the same conjugated set of CDPs $E$.
\end{proposition}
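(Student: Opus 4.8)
The plan is to verify directly that the new family of permutations $\sigma_i' = \delta\sigma_i\rho$ is again a set of CDPs, to confirm it respects the enumeration convention, and then to compute its conjugated matrices $m(E_i')$ and show they reproduce $M(E)$. First I would recall from Prop.~\ref{relacja} that left and right multiplication by fixed permutations preserves the CDP property, so $\Sigma_n' = \delta\Sigma_n\rho$ is automatically a set of CDPs; the only subtlety is the labelling. Since $\rho(1)=1$, we have $\sigma_i'(1) = \delta\sigma_i\rho(1) = \delta\sigma_i(1) = \delta(i)$, so the index of $\sigma_i'$ under the standard convention $\sigma_k'(1)=k$ is $\delta(i)$, not $i$; I will need to keep careful track of this relabelling when I write down $m(E')$.

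Next I would compute $m(E_j')$ from Def.~\ref{mE}. By definition $m(E_j') = \sum_{k} e_{\sigma_k'(1)\,\sigma_k'(j)}$, and using $\sigma_k'(1) = \delta(k)$ and $\sigma_k'(j) = \delta\sigma_k\rho(j)$ this becomes $\sum_k e_{\delta(k)\,\delta\sigma_k(\rho(j))}$. The key step is to recognise this double application of $\delta$ (once on the row index, once on the column index) as conjugation by $m(\delta)$: writing $\ell = \delta(k)$ to reindex the sum, one gets $\sum_\ell e_{\ell\,\delta\sigma_{\delta^{-1}(\ell)}(\rho(j))}$, which after the identity $m(\delta)\,e_{ab}\,m(\delta)^{-1} = e_{\delta(a)\delta(b)}$ rearranges to $m(\delta)\big(\sum_k e_{k\,\sigma_k(\rho(j))}\big)m(\delta)^{-1} = m(\delta)\,m(E_{\rho(j)})\,m(\delta)^{-1}$. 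This establishes the stated relation $m(E_i') = m(\delta)\,m(E_{\rho(i)})\,m(\delta)^{-1}$.

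Finally I would invoke the hypothesis that $m(\delta)$ centralises the set $m(E)$: since conjugation by $m(\delta)$ permutes $\{m(E_1),\dots,m(E_n)\}$ among themselves, the set $\{m(E_i')\}_i = \{m(\delta)m(E_{\rho(i)})m(\delta)^{-1}\}_i$ equals $\{m(\delta)m(E_j)m(\delta)^{-1}\}_j = \{m(E_j)\}_j = M(E)$, because $\rho$ merely permutes the index $i \mapsto \rho(i)$ and conjugation by $m(\delta)$ fixes the set as a whole. Hence $M(E') = M(E)$, and since the natural matrix representation of $S(n)$ is faithful, the conjugated set of CDPs itself satisfies $E' = E$. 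The main obstacle I anticipate is purely bookkeeping: matching the two roles of $\delta$ (permuting labels versus conjugating matrices) and making sure the reindexing $k \mapsto \delta(k)$ versus $i \mapsto \rho(i)$ is done consistently so that the centraliser hypothesis applies cleanly; there is no deep analytic difficulty beyond this careful index manipulation.
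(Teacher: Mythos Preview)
The paper states this proposition without proof, so there is no argument to compare against; your direct computation from Def.~\ref{mE} is correct and is exactly the natural verification the paper presumably had in mind. Your key identity $m(\delta)\,e_{ab}\,m(\delta)^{-1}=e_{\delta(a)\delta(b)}$ together with the first form $m(E_j')=\sum_k e_{\sigma_k'(1)\,\sigma_k'(j)}$ of the definition (which is invariant under reindexing the elements of $\Sigma_n'$) makes the relabelling worry you flag essentially moot, and your reading of the hypothesis $m(\delta)M(E)m(\delta)^{-1}=M(E)$ as a set-stabilising (normaliser) condition is the right one despite the paper's use of the word ``centraliser''.
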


Let us consider some examples.

\begin{example} \label{ex_S3_2_CDP}
In the group $S(3)$ we have two sets of CDPs
\begin{equation} 
\Sigma _{3}=\{\sigma _{1}=(23),\sigma _{2}=(12),\sigma _{3}=(13)\},
\end{equation}
and
\begin{equation} 
\Sigma _{3}^{\prime }=\{\sigma _{1}^{\prime }=id,\sigma _{2}^{\prime
}=(123),\sigma _{3}^{\prime }=(132)\}=\Sigma _{3}(23),
\end{equation}
such that
\begin{equation} 
M(E^{\prime })=M(E)\Leftrightarrow E^{\prime }=E=\Sigma _{3}^{\prime }.
\end{equation}
\end{example}

\begin{example} \label{Sigma_4->C4}
For $\Sigma _{4}=\{\sigma _{1}=(34),\quad \sigma _{2}=(12),\quad \sigma
_{3}=(13)(24),\quad \sigma _{4}=(14)(23)\}\subset S(4)$ we have
\begin{equation} 
E_{1}=id,\quad E_{2}=(12)(34),\quad E_{3}=(1324),\quad E_{4}=(1423),
\end{equation}%
so in this case we have
\begin{equation} 
E=\{E_{j}:j=1,..,4\}=C(4)=\{(1324)^{k}:k=1,..,4\},
\end{equation}%
so it is a cyclic group and the matrices $\{m(E_{j}):j=1,..,n\}$ are simply
natural matrix representation of this group.
\end{example}

\begin{example} \label{Sigma4->V4}
Let $\Sigma _{4}=\{\sigma _{1}=(234),\quad \sigma _{2}=(124),\quad \sigma
_{3}=(132),\quad \sigma _{4}=(143)\}\subset S(4),$ then
\begin{equation} 
E_{1}=id,\quad E_{2}=(13)(24),\quad E_{3}=(14)(23),\quad E_{4}=(12)(34),
\end{equation}%
and we have
\begin{equation} 
E=\{E_{j}:j=1,..,4\}=V(4),
\end{equation}%
so again it is an abelian group although the set $\Sigma _{4}$ is neither
group nor a commutative set.
\end{example}

\begin{example} \label{Sigma5->Sigma5}
The set
\begin{equation} 
\Sigma _{5}=\{\sigma _{1}=id,\;\sigma _{2}=(12)(345),\;\sigma
_{3}=(13)(542),\;\sigma _{4}=(14)(352),\;\sigma _{5}=(15)(243)\}\subset S(5)
\end{equation}%
is such that
\begin{equation} 
E=\{E_{j}:j=1,..,5\}=\Sigma _{5}.
\end{equation}
\end{example}

In the next section we will consider sets of CDPs which are groups.

%



\section{Groups of CDPs}

In previous section we have presented some examples of sets of CDPs which were
groups and we have proved a remarkable property of abelian sets of CDPs, which are
always groups. In general finite groups are very rich source of sets of CDPs,
which follows from well-known construction of permutation representations of finite groups.
In fact we have

\begin{proposition} \label{regular}
Let $G=\{g_{i}:i=1,...,n\}$ be a finite group. Then its regular
representation, wich is in fact a permutation representation
\begin{equation} 
R(g_{i})\equiv \sigma _{i}\equiv \left(
\begin{array}{ccccc}
g_{1} & g_{2} & . & . & g_{n} \\
g_{i}g_{1} & g_{i}g_{2} & . & . & g_{i}g_{n}%
\end{array}%
\right) \in S(n)
\end{equation}%
is such that the group of permutations $R(G)=\{R(g_{i}):i=1,...,n\}\subset
S(n)$ is a set of CDPs.
\end{proposition}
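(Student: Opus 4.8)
The plan is to verify directly, from the group axioms, that any two distinct elements of $R(G)$ are completely different permutations, and then to observe that $R(G)$ has exactly $n$ elements, so that it is automatically a maximal set of CDPs in $S(n)$. Identifying the index set $\{1,\dots,n\}$ with $G$ through $j\leftrightarrow g_{j}$, the permutation $\sigma_{i}=R(g_{i})$ acts by $\sigma_{i}(g_{j})=g_{i}g_{j}$; since left multiplication is a group homomorphism, $\sigma_{i}\sigma_{j}=R(g_{i}g_{j})$ and $\sigma_{i}^{-1}=R(g_{i}^{-1})$, and these facts will be used freely.

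The core of the argument is the right cancellation law. Suppose $\sigma_{i}(g_{j})=\sigma_{k}(g_{j})$ for some $j$, i.e. $g_{i}g_{j}=g_{k}g_{j}$; multiplying on the right by $g_{j}^{-1}$ gives $g_{i}=g_{k}$. Contrapositively, if $g_{i}\neq g_{k}$ then $\sigma_{i}(g_{j})\neq\sigma_{k}(g_{j})$ for every $j$, which is precisely Definition \ref{CDPdef}. Hence $R(G)$ is a set of pairwise completely different permutations. Equivalently, one can invoke Proposition \ref{CDP-MOP}: $\mathrm{tr}\big(m(\sigma_{i}^{-1})m(\sigma_{k})\big)=\mathrm{tr}\,m\big(R(g_{i}^{-1}g_{k})\big)$ equals the number of fixed points of $R(g_{i}^{-1}g_{k})$, and $R(h)$ can fix an index only when $hg_{j}=g_{j}$, i.e. only when $h=e$; so the trace vanishes whenever $g_{i}\neq g_{k}$.

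It remains to check that $R(G)$ consists of $n$ distinct permutations, so that by Proposition \ref{at_most_n} it is a maximal set of CDPs $\Sigma_{n}$. This is just the faithfulness of the regular representation: if $R(g_{i})=\mathrm{id}$ then $g_{i}g_{j}=g_{j}$ for all $j$, hence $g_{i}=e$, so $g_{i}\mapsto R(g_{i})$ is injective. I do not expect a genuine obstacle here; the only bookkeeping point is that, if one wishes to match the enumeration convention $\sigma_{i}(1)=i$ used in the earlier sections, one should first relabel the group elements so that $g_{1}=e$, after which $\sigma_{i}(1)=\sigma_{i}(g_{1})=g_{i}=i$ holds automatically.
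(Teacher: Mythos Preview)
Your proof is correct and follows essentially the same approach as the paper: both rely on the right cancellation law to show that $R(g_i)(g_k)=R(g_j)(g_k)$ forces $g_i=g_j$, hence distinct group elements yield completely different permutations. Your additional remarks on faithfulness and the enumeration convention are correct but go slightly beyond what the paper spells out.
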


\begin{proof}
Suppose that for some $k=1,....,n$
\begin{equation} 
R(g_{i})(g_{k})=R(g_{j})(g_{k})\Leftrightarrow
g_{i}g_{k}=g_{j}g_{k}\Longleftrightarrow g_{i}=g_{j},
\end{equation}%
so the permutations $R(g_{i})=\sigma _{i}$ and $R(g_{j})=\sigma _{j}$ are
completely different for any $i\neq j=1,...,n.$
\end{proof}

The cyclic groups in Examples \ref{ex_circ}, \ref{ex_V4} were, in fact regular representations.
The smallest nonabelian group is $S(3),$ of rank $6$ and its regular
representation may be presented as a set of CDPs as follows

\begin{example} \label{R(S3)}
$R(S(3))=\{\sigma _{1}=id,\;\sigma _{2}=(123)(456),\;\;\sigma
_{3}=(132)(465),\;\sigma _{4}=(14)(26)(53),\;\sigma
_{5}=(15)(24)(36),\;\sigma _{6}=(16)(25)(34)\}.$
\end{example}

\begin{proposition} \label{GCDP}
If $\Sigma _{n}=\{\sigma _{i}\}_{i=1}^{n}\subset S(n)$ is a group of CDPs, then

\begin{enumerate}

\item the composition rule in the group $\Sigma _{n}$ has remarkable simple
form
\begin{equation} 
\sigma _{i}\sigma _{j}=\sigma _{\sigma _{i}(j)}
\end{equation}

i.e. the matrix $M=(\sigma _{i}(j))$ describe the table of composition of
the group $\Sigma _{n}=\{\sigma _{i}\}_{i=1}^{n},$

\item the inverse elements are the following%
\begin{equation} 
(\sigma _{i})^{-1}=\sigma _{j}\Longleftrightarrow \sigma _{i}(j)=1=\sigma
_{j}(i),
\end{equation}

\item $\sigma _{1}=id$ and all the remaining elements $\sigma _{i}\in \Sigma_{n}$ have no fixed points.

\end{enumerate}

\end{proposition}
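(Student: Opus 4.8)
The plan is to reduce everything to evaluating group identities at the distinguished point $1$, using the enumeration convention $\sigma_i(1)=i$ together with closure of the group $\Sigma_n$ under composition and inversion. First I would prove statement 1. Since $\Sigma_n$ is a group, $\sigma_i\sigma_j\in\Sigma_n$, and because the $\sigma_k$ are mutually CDP the values $\sigma_k(1)$ are pairwise distinct, so the assignment $k\mapsto\sigma_k$ is a bijection onto $\Sigma_n$ and there is a unique $k$ with $\sigma_i\sigma_j=\sigma_k$. Evaluating both sides at $1$ and using $\sigma_k(1)=k$ gives $k=(\sigma_i\sigma_j)(1)=\sigma_i(\sigma_j(1))=\sigma_i(j)$, hence $\sigma_i\sigma_j=\sigma_{\sigma_i(j)}$; equivalently, the matrix $(\sigma_i(j))$ is the Cayley table of $\Sigma_n$.

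Next I would identify the neutral element. As $\Sigma_n\subset S(n)$ is a subgroup, its neutral element is $\mathrm{id}$, and $\mathrm{id}=\sigma_m$ for some $m$; since $m=\sigma_m(1)=\mathrm{id}(1)=1$ we get $\sigma_1=\mathrm{id}$, the first half of statement 3. For the second half I would invoke Proposition \ref{fixed_points}: every $j$ is a fixed point of some element of $\Sigma_n$. But $\sigma_1=\mathrm{id}$ already fixes every $j$, so if some $\sigma_i$ with $i\neq1$ had a fixed point $j$, then $\sigma_i(j)=j=\sigma_1(j)$ would contradict that $\sigma_i$ and $\sigma_1$ are completely different. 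Hence no $\sigma_i$ with $i\neq1$ has a fixed point.

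Finally, statement 2 follows from statement 1 and $\sigma_1=\mathrm{id}$: the relation $(\sigma_i)^{-1}=\sigma_j$ is equivalent to $\sigma_i\sigma_j=\sigma_1$, i.e.\ by the composition rule to $\sigma_{\sigma_i(j)}=\sigma_1$, i.e.\ to $\sigma_i(j)=1$; and since inverses in a group are two-sided, it is equally equivalent to $\sigma_j\sigma_i=\sigma_1$, i.e.\ to $\sigma_j(i)=1$.

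I do not anticipate a genuine obstacle here: the argument is entirely formal once one evaluates the group law at the point $1$. The only place where the CDP hypothesis is actually used is in guaranteeing that the enumeration $k\mapsto\sigma_k$ is well defined and that $\sigma_1=\mathrm{id}$ forces every other element to be fixed-point-free; the remainder is bookkeeping with the group axioms.
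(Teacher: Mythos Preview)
Your proof is correct and follows essentially the same approach as the paper: the paper's argument for part~1 is exactly your evaluation-at-$1$ trick using the enumeration $\sigma_k(1)=k$, and the paper then dismisses parts~2 and~3 as following ``easily from the structure of a set CDPs,'' whereas you spell them out. One small remark: your invocation of Proposition~\ref{fixed_points} is unnecessary, since your actual argument for fixed-point-freeness (namely that $\sigma_i(j)=j=\sigma_1(j)$ would violate CDP) is self-contained.
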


\begin{proof}
a) For arbitrary $\sigma _{i},\sigma _{j}$ belonging to a set of CDPs $ \Sigma _{n}$ there exist $\sigma _{k}\in \Sigma _{n}$ such that
\begin{equation} 
\sigma _{i}\sigma _{j}=\sigma _{k}\Rightarrow \sigma _{i}\sigma
_{j}(1)=\sigma _{k}(1)\Rightarrow \sigma _{i}(j)=k,
\end{equation}%
because in our enumeration,  we have $\forall i=1,..,n$ $\sigma
_{i}(1)=i.$ The remaining statements of the Proposition follow easily from
the structure of a set CDPs.
\end{proof}

If $\Sigma _{n}=\{\sigma _{i}\}_{i=1}^{n}\subset S(n)$ is a group of CDPs, then its
group structure strongly determines the structure of the set of CDPs $E=\{E_{j}:j=1,..,n\}$.

\begin{proposition} \label{GCDPconj}
If $\Sigma _{n}=\{\sigma _{i}\}_{i=1}^{n}\subset S(n)$ is a group of CDPs, then

\begin{enumerate}

\item  the set of CDPs $E=\{E_{j}:j=1,..,n\}$ is a group isomorphic with the group $%
\Sigma _{n}.$ The isomorphism is given by the following map%
\begin{equation} 
f(\sigma _{i})=E_{i},\quad i=1,..,n
\end{equation}%
which satisfy the isomorphism relation $f(\sigma _{i})f(\sigma
_{j})=f(\sigma _{i}\sigma _{j})$ because, one can check that, we have
\begin{equation} 
m(E_{i})m(E_{j})=m(E_{\sigma _{i}(j)}),
\end{equation}%
so the composition rule for the set of CDPs $E=\{E_{j}:j=1,..,n\}$ is the same as
for the group $\Sigma _{n}=\{\sigma _{i}\}_{i=1}^{n}$. If the group $\Sigma
_{n}$ is abelian then $\Sigma _{n}=E.$

\item the matrix groups $M(\Sigma _{n})=\{m(\sigma _{i})\}_{i=1}^{n}$ and $%
M(E)=\{m(E_{j}):j=1,..,n\}$ are mutually commutant e.i. we have
\begin{equation} 
\forall i,j=1,..,n\qquad m(\sigma _{i})m(E_{j})=m(E_{j})m(\sigma _{i}).
\end{equation}%
note that these groups mutually commute even if the set of CDPs $\Sigma _{n}$ is
not commutative.
\end{enumerate}

\end{proposition}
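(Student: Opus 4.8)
The plan is to handle both parts by a single mechanism: translate everything into the faithful natural matrix representation $m$, and exploit the multiplication rule $e_{ab}e_{cd}=\delta_{bc}e_{ad}$ for matrix units together with the group composition law $\sigma_a\sigma_b=\sigma_{\sigma_a(b)}$ established in Proposition \ref{GCDP}. (Conceptually, once one identifies the point $j$ with the group element $\sigma_j$, the permutation $\sigma_i$ acts on indices as left multiplication by $\sigma_i$ while $E_j$ acts as right multiplication by $\sigma_j^{-1}$; both parts then reduce to associativity in $\Sigma_n$. I would nonetheless carry out the explicit computation.)

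For part 1 the first step is the composition rule $m(E_i)m(E_j)=m(E_{\sigma_i(j)})$. Using Definition \ref{mE},
\begin{equation}
m(E_i)m(E_j)=\sum_{k,l}e_{k,\sigma_k(i)}\,e_{l,\sigma_l(j)}=\sum_k e_{k,\sigma_{\sigma_k(i)}(j)},
\end{equation}
since the $l$-sum retains only $l=\sigma_k(i)$. The group law gives $\sigma_{\sigma_k(i)}=\sigma_k\sigma_i$, hence $\sigma_{\sigma_k(i)}(j)=\sigma_k(\sigma_i(j))$, and the right-hand side becomes $\sum_k e_{k,\sigma_k(\sigma_i(j))}=m(E_{\sigma_i(j)})$. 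Thus $M(E)=\{m(E_j)\}$ is a finite set of permutation matrices containing $m(E_1)$ and closed under multiplication, hence a group; by faithfulness of $m$ so is $E$. Since the $m(E_j)$ are pairwise distinct (mutually orthogonal by Proposition \ref{CDP_from_mE}), the map $f(\sigma_i)=E_i$ is a bijection, and it is a homomorphism because $f(\sigma_i)f(\sigma_j)=E_iE_j=E_{\sigma_i(j)}=f(\sigma_{\sigma_i(j)})=f(\sigma_i\sigma_j)$, again using the group law of $\Sigma_n$; so $E\cong\Sigma_n$. The abelian case is immediate: Proposition \ref{ab->conj=inv} gives $E=\Sigma_n^{-1}$, and for a group $\Sigma_n^{-1}=\Sigma_n$.

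For part 2 I would prove the permutation identity $\sigma_iE_j=E_j\sigma_i$, which yields the matrix claim by faithfulness. Fix $i,j$ and arbitrary $l$; since $\Sigma_n$ is a set of CDPs there is a unique $k$ with $l=\sigma_k(j)$, so $E_j(l)=k$ by Proposition \ref{CDP_from_mE}, whence $\sigma_i(E_j(l))=\sigma_i(k)$. On the other hand $\sigma_i(l)=\sigma_i(\sigma_k(j))=(\sigma_i\sigma_k)(j)=\sigma_{\sigma_i(k)}(j)$, so $E_j(\sigma_i(l))=E_j(\sigma_{\sigma_i(k)}(j))=\sigma_i(k)$. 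Both sides agree for every $l$, giving $\sigma_iE_j=E_j\sigma_i$; note that only the group law of $\Sigma_n$, not its commutativity, was used, which is exactly the last assertion of the Proposition. The one point needing care -- the ``main obstacle,'' such as it is -- is purely bookkeeping: keeping the index labelling a permutation distinct from the index labelling a point, and making sure the collapse of the double sum in part 1 matches the inner label $\sigma_k(i)$ of $m(E_i)$ with the row label $l$ of $m(E_j)$. With the enumeration convention $\sigma_i(1)=i$ and the law $\sigma_a\sigma_b=\sigma_{\sigma_a(b)}$ applied systematically, both parts are short.
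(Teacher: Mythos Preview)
Your proof is correct and uses the same mechanism as the paper: matrix-unit computations driven by the group law $\sigma_a\sigma_b=\sigma_{\sigma_a(b)}$ from Proposition~\ref{GCDP}. The only cosmetic difference is in part~2: the paper computes $m(\sigma_i)m(E_j)$ and $m(E_j)m(\sigma_i)$ separately with matrix units and shows both equal $\sum_k e_{k,\sigma_i^{-1}\sigma_k(j)}$, whereas you work directly at the permutation level to prove $\sigma_iE_j=E_j\sigma_i$ and then invoke faithfulness of $m$; your substitution $l=\sigma_k(j)$ is exactly the paper's collapse of the Kronecker delta $\delta_{\sigma_k(j),l}$, so the two arguments are the same computation in different clothing. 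For part~1 you actually supply more detail than the paper, which only states that ``one can check'' the identity $m(E_i)m(E_j)=m(E_{\sigma_i(j)})$.
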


\begin{proof}
2. We have $\forall i,j=1,..,n$
\begin{equation} 
m(\sigma _{i})m(E_{j})=\sum_{k=1}^{n}e_{k\sigma
_{i}^{-1}(k)}\sum_{l=1}^{n}e_{l\sigma _{l}(j)}=\sum_{k,l=1}^{n}\delta
_{\sigma _{i}^{-1}(k)l}e_{k\sigma _{l}(j)}=\sum_{k=1}^{n}e_{k\sigma _{\sigma
_{i}^{-1}(k)}(j)}
\end{equation}%
and from composition rule in the group $\Sigma _{n}=\{\sigma
_{i}\}_{i=1}^{n}$ we get
\begin{equation} 
m(\sigma _{i})m(E_{j})=\sum_{k=1}^{n}e_{k\sigma _{i}^{-1}\sigma _{k}(j)}.
\end{equation}%
On the other hand we have
\begin{equation} 
m(E_{j})m(\sigma _{i})=\sum_{k=1}^{n}e_{k\sigma
_{k}(j)}\sum_{l=1}^{n}e_{l\sigma _{i}^{-1}(l)}=\sum_{k,l=1}^{n}\delta
_{\sigma _{k}(j)l}e_{k\sigma _{i}^{-1}(l)}=\sum_{k=1}^{n}e_{k\sigma
_{i}^{-1}\sigma _{k}(j).}
\end{equation}
\end{proof}

Let us check this on some examples. Let us check this on some examples.

\begin{example} \label{Sigma6}
If

$\Sigma _{6}=S(3)=\{\sigma _{1}=id,\;\sigma _{2}=(123)(456),\;\;\sigma
_{3}=(132)(465),\;\sigma _{4}=(14)(26)(53),\;\sigma
_{5}=(15)(24)(36),\;\sigma _{6}=(16)(25)(34)\},$

then

$E=\{E_{1}=id,\;E_{2}=(132)(456),\;\;E_{3}=(123)(465),\;E_{1}=(14)(25)(63),%
\;E_{5}=(15)(26)(34),\;E_{6}=(16)(24)(35)\},$

so it is again the group $S(3)$ but differently embedded in $S(6).$
\end{example}

\begin{example} \label{cycl_conj}
Let $\Sigma _{n}=C(n)=\{c^{i}=(c)^{i}:i=0,1,...,n-1\}\subset $ $S(n)$ where $%
c=(012...n-1)\Rightarrow c^{k}(t)=t+_{n}k.$ Then one can check that
\begin{equation} 
E_{i}=(c^{i})^{-1},
\end{equation}%
and therefore in this case we have $E=\Sigma _{n}=C(n)$, in agreement with
Prop. \ref{ab->conj=inv}.
\end{example}

From Prop. \ref{centraliser} and Example \ref{ex_S3_2_CDP} we know that a given set $M(E)$ may be
generated by different sets of CDPs, we have however

\begin{proposition} \label{always_gen_by_group}
Let $M(E)\equiv \{m(E_{j}):j=1,..,n\}$ be a an abelian set of matrices, so
in fact an abelian group (see Th. \ref{abelian->group}) generated by a set of CDPs $\Sigma
_{n}=\{\sigma _{i}\}_{i=1}^{n}$, which may be neither abelian nor group.
Then there exists an abelian set of CDPs $\Sigma _{n}^{\prime }=\{\sigma
_{i}^{\prime }\}_{i=1}^{n}$ such that
\begin{equation} 
M(E^{\prime })=M(E),
\end{equation}%
in particular, when $id\in \Sigma _{n}$ then $\Sigma _{n}$ is an abelian $%
CDP$ and if $id\notin \Sigma _{n}$ then $\Sigma _{n}^{\prime }=\Sigma
_{n}\sigma _{1}^{-1},$ where $\sigma _{1}\in \Sigma _{n}$ and $\sigma
_{1}(1)=1$ and such an element always exist in the set of CDPs $\Sigma _{n}.$ Thus
any abelian $M(E)$ is always generated by some abelian set of CDPs $\Sigma
_{n}^{\prime }.$
\end{proposition}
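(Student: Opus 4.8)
The plan is to first reduce to the case ${\rm id}\in\Sigma_n$, and then to show that in this situation the assumption that $M(E)$ is abelian already forces $\Sigma_n$ itself to be an abelian set of CDPs, so that one may take $\Sigma_n'=\Sigma_n$.

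For the reduction, suppose ${\rm id}\notin\Sigma_n$. By the enumeration convention the element $\sigma_1\in\Sigma_n$ satisfies $\sigma_1(1)=1$ (see the remark following Prop.~\ref{fixed_points}), and hence $\sigma_1^{-1}(1)=1$ as well. First I would set $\Sigma_n'=\Sigma_n\sigma_1^{-1}=\{\sigma_i'=\sigma_i\sigma_1^{-1}\}_{i=1}^n$. Then $\Sigma_n'$ is again a set of CDPs by Prop.~\ref{relacja}; it still obeys the enumeration convention since $\sigma_i'(1)=\sigma_i(\sigma_1^{-1}(1))=\sigma_i(1)=i$; and it contains the identity, ${\rm id}=\sigma_1\sigma_1^{-1}=\sigma_1'$. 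Moreover, applying Prop.~\ref{centraliser} with $\delta={\rm id}$ (which certainly lies in the centraliser of $M(E)$) and $\rho=\sigma_1^{-1}$ (which fixes $1$) gives $m(E_i')=m(E_{\rho(i)})$, hence $M(E')=M(E)$. This reduces the claim to showing that whenever ${\rm id}\in\Sigma_n$ the set $\Sigma_n$ is automatically abelian.

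So assume $\sigma_1={\rm id}\in\Sigma_n$ and that $M(E)$ is abelian. The key step is to convert the commutativity of $M(E)$ into a relation among the $\sigma_i$. Using Def.~\ref{mE} and $e_{ab}e_{cd}=\delta_{bc}e_{ad}$,
\[
m(E_i)m(E_j)=\sum_{k,l=1}^n e_{k\,\sigma_k(i)}\,e_{l\,\sigma_l(j)}=\sum_{k=1}^n e_{k\,\sigma_{\sigma_k(i)}(j)},
\]
so comparing entries shows that $m(E_i)m(E_j)=m(E_j)m(E_i)$ for all $i,j$ is equivalent to $\sigma_{\sigma_k(i)}(j)=\sigma_{\sigma_k(j)}(i)$ for all $i,j,k$ --- precisely the second relation of Prop.~\ref{abelianCDP}. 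Setting $k=1$ and using $\sigma_1={\rm id}$ gives $\sigma_i(j)=\sigma_j(i)$ for all $i,j$; substituting this back into the previous relation turns it into $\sigma_j\sigma_k(i)=\sigma_i\sigma_k(j)$ for all $i,j,k$. Combining the two relations,
\[
\sigma_j\sigma_k(i)=\sigma_i\sigma_k(j)=\sigma_i\sigma_j(k)=\sigma_k\sigma_j(i)\qquad\forall\, i,
\]
where the middle equality uses $\sigma_k(j)=\sigma_j(k)$; hence $\sigma_j\sigma_k=\sigma_k\sigma_j$, i.e. $\Sigma_n$ is abelian. By Theorem~\ref{abelian->group} it is then an abelian group realised as a set of CDPs, and taking $\Sigma_n'=\Sigma_n$ completes this case. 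Together with the reduction this proves the proposition.

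The computation in the last paragraph is short, so I do not expect a genuine obstacle. The one point that needs care is the bookkeeping in the reduction: it is right multiplication by $\sigma_1^{-1}$ (not left multiplication) that preserves the enumeration convention $\sigma_i(1)=i$, which is what allows Def.~\ref{mE} and Prop.~\ref{centraliser} to be applied to $\Sigma_n'$ without any change.
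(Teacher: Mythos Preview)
Your proof is correct and follows precisely the route indicated in the paper. The paper does not give a separate proof of this proposition; the construction $\Sigma_n'=\Sigma_n\sigma_1^{-1}$ and the dichotomy ${\rm id}\in\Sigma_n$ versus ${\rm id}\notin\Sigma_n$ are simply asserted inside the statement. You supply exactly the missing details: the invocation of Prop.~\ref{centraliser} with $\delta={\rm id}$, $\rho=\sigma_1^{-1}$ to get $M(E')=M(E)$, and---the only substantive point---the computation $m(E_i)m(E_j)=\sum_k e_{k\,\sigma_{\sigma_k(i)}(j)}$ showing that commutativity of $M(E)$ together with $\sigma_1={\rm id}$ forces $\sigma_i(j)=\sigma_j(i)$ and then $\sigma_j\sigma_k=\sigma_k\sigma_j$. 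Your chain $\sigma_j\sigma_k(i)=\sigma_i\sigma_k(j)=\sigma_i\sigma_j(k)=\sigma_k\sigma_j(i)$ is correct. Your remark that right (not left) multiplication by $\sigma_1^{-1}$ is what preserves the enumeration convention is well taken.
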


\begin{remark} \label{remark_always}
The Example \ref{ex_S3_2_CDP} illustrates this in case of the group $S(3).$
\end{remark}

The group structure of a set of CDPs allows to define a permutation, which is in natural
way connected with its group properties

\begin{definition} \label{xi}
Let $\Sigma _{n}=\{\sigma _{i}\}_{i=1}^{n}$ be a set of CDPs with $\sigma _{1}=id$%
, then we define
\begin{equation} 
\xi ^{\Sigma }\in S(n):\xi ^{\Sigma }(i)=j\Longleftrightarrow \sigma
_{i}(j)=1=\sigma _{j}(i),
\end{equation}%
which is well defined because any $\sigma _{i}\in \Sigma _{n}$ has only one
inverse. For a given permutation $\sigma _{i}\in
\Sigma _{n},$ the permutation $\xi ^{\Sigma }$ shows what is the index of
the inverse permutation $\sigma _{j}\in \Sigma _{n}$ e.i.%
\begin{equation} 
(\sigma _{i})^{-1}=\sigma _{j}=\sigma _{\xi ^{\Sigma }(i)}.
\end{equation}
\end{definition}

\begin{proposition} \label{xi_prop}
Let $\xi ^{\Sigma }\in S(n)$ be a permutation defined as above and $\Sigma
_{n}=\{\sigma _{i}\}_{i=1}^{n}$ be a set of CDPs, then

a)
\begin{equation} 
\xi ^{\Sigma }(1)=1,\qquad \xi ^{\Sigma }(i)=j\Longleftrightarrow \xi
^{\Sigma }(j)=i,\qquad \sigma _{i}\xi ^{\Sigma }(i)=1,
\end{equation}

b)%
\begin{equation} 
\xi ^{\Sigma }=\left(
\begin{array}{ccccc}
1 & . & i & . & n \\
1 & . & (\sigma _{i})^{-1}(1) & . & (\sigma _{n})^{-1}(1)%
\end{array}%
\right) =(1)(i_{1})..(i_{k})(i_{k+1}j_{k+1})..(i_{l}j_{l}),
\end{equation}%
where $i_{1},..,i_{k}$ are such that $(\sigma _{i_{p}})^{-1}=\sigma _{i_{p}}$
and $(\sigma _{i_{l}})^{-1}=\sigma _{j_{p}}.$
\end{proposition}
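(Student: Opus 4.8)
We work under the standing hypothesis of this section that $\Sigma_n=\{\sigma_i\}_{i=1}^n$ is a group of CDPs; this is in fact what makes Definition \ref{xi} consistent, since for a general set of CDPs containing the identity the two equations $\sigma_i(j)=1$ and $\sigma_j(i)=1$ need not have a common solution, and then $\xi^\Sigma$ would not be single--valued. Granting the group structure, the whole statement is the transcription, through the bijection $i\leftrightarrow\sigma_i$, of the elementary fact that inversion $g\mapsto g^{-1}$ is an involution of a finite group fixing the neutral element. So the plan is: (i) rewrite Definition \ref{xi} as an explicit formula; (ii) read off part a) from the properties of inversion; (iii) read off part b) from the cycle structure of an involution.

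\emph{Step (i).} If $\xi^\Sigma(i)=j$ then $\sigma_i(j)=1$, so $j=(\sigma_i)^{-1}(1)$. Conversely, since $\Sigma_n$ is a group, $(\sigma_i)^{-1}\in\Sigma_n$, hence $(\sigma_i)^{-1}=\sigma_m$ for a unique $m$; the enumeration convention $\sigma_m(1)=m$ combined with $(\sigma_i)^{-1}(1)=j$ forces $m=j$, i.e. $(\sigma_i)^{-1}=\sigma_j$, and by Proposition \ref{GCDP}(2) this same $j$ also satisfies $\sigma_j(i)=1$. Thus $\xi^\Sigma(i)=(\sigma_i)^{-1}(1)$ and equivalently $(\sigma_i)^{-1}=\sigma_{\xi^\Sigma(i)}$, which is already the one--line array displayed in part b).

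\emph{Step (ii): part a).} By Proposition \ref{GCDP}(3) we have $\sigma_1=\mathrm{id}$, so $(\sigma_1)^{-1}=\sigma_1$ and $\xi^\Sigma(1)=1$. The identity $(\sigma_i)^{-1}=\sigma_{\xi^\Sigma(i)}$ from Step (i) exhibits $\xi^\Sigma$ as the permutation of indices induced by $g\mapsto g^{-1}$; since that map squares to the identity, so does $\xi^\Sigma$, which is precisely $\xi^\Sigma(i)=j\Leftrightarrow\xi^\Sigma(j)=i$. Finally $\sigma_i\big(\xi^\Sigma(i)\big)=\sigma_i\big((\sigma_i)^{-1}(1)\big)=1$, the remaining identity of a) (equivalently $\sigma_i\,\sigma_{\xi^\Sigma(i)}=\mathrm{id}$).

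\emph{Step (iii): part b).} Any involution of $\{1,\dots,n\}$ is a product of fixed points and transpositions; since $\xi^\Sigma$ fixes $1$, it has the form $\xi^\Sigma=(1)(i_1)\cdots(i_k)(i_{k+1}j_{k+1})\cdots(i_lj_l)$, and it remains only to identify the cycles. For any $i$, $\xi^\Sigma(i)=i$ is equivalent to $(\sigma_i)^{-1}=\sigma_i$, so the fixed points $1,i_1,\dots,i_k$ are exactly the indices of the self--inverse elements of $\Sigma_n$, while a transposition $(i_pj_p)$ records an inverse pair $\sigma_{j_p}=(\sigma_{i_p})^{-1}\neq\sigma_{i_p}$; this is the asserted description. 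There is no substantial obstacle in the argument --- it is bookkeeping around inversion in the group $\Sigma_n$ --- but the point that genuinely needs attention is exactly in this last step: for $i\neq1$ the equality $\xi^\Sigma(i)=i$ must be read as ``$\sigma_i$ is self--inverse'' and \emph{not} as ``$\sigma_i$ fixes $i$'' (by Proposition \ref{GCDP}(3) no such $\sigma_i$ fixes anything), and passing between ``$\sigma_i(i)=1$'' and ``$(\sigma_i)^{-1}=\sigma_i$'' uses both $\sigma_i(1)=i$ and closure of $\Sigma_n$ under inverses, exactly as in Step (i).
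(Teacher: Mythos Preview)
The paper does not actually supply a proof of this proposition; it is stated as an immediate consequence of Definition \ref{xi} and Proposition \ref{GCDP}, and the text simply moves on. Your argument is correct and is precisely the natural verification the paper leaves implicit: you rewrite $\xi^\Sigma(i)=(\sigma_i)^{-1}(1)$ (equivalently $(\sigma_i)^{-1}=\sigma_{\xi^\Sigma(i)}$), read off part a) from the fact that inversion is an involution fixing the identity, and read off the cycle shape in part b) from the cycle structure of an involution. Your explicit remark that the group hypothesis (rather than merely ``$\sigma_1=\mathrm{id}$'') is what makes Definition \ref{xi} well defined is a useful clarification the paper only hints at with the phrase ``has only one inverse''.
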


From these properties of the permutation $\xi ^{\Sigma }\in S(n)$ and from
Cor.\ref{abelian_to_group} one deduce easily

\begin{corollary} \label{inv->xi}
Let $\Sigma _{n}=\{\sigma _{i}\}_{i=1}^{n}$ be a set of CDPs and $%
E=\{E_{i}:i=1,..,n\}$ the conjugated set of permutations, then
\begin{equation} 
E_{i}^{-1}=E_{\xi ^{\Sigma }(i)}.
\end{equation}%
so $\ \ $the permutation $\xi ^{\Sigma }$ shows also what is the index of
the inverse permutation $E_{i}\in E$.
\end{corollary}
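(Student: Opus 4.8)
The plan is to reduce the statement for the conjugated set $E=\{E_i\}_{i=1}^n$ to the corresponding statement for $\Sigma_n$, for which $\xi^\Sigma$ was defined. First I would recall from Prop. \ref{CDP_from_mE} that the permutation $E_i$ is determined by the rule $E_i(\sigma_k(i))=k$, or equivalently $E_i^{-1}(k)=\sigma_k(i)$, for all $k=1,\dots,n$; in particular, since $\sigma_1=\mathrm{id}\in\Sigma_n$ we have $E_i(\sigma_1(i))=E_i(i)=1$. So the index of the inverse permutation in the set $E$ is the unique $j$ with $E_i E_j=E_1=\mathrm{id}$, and by Prop. \ref{CDP_from_mE} the set $E$ is itself a set of CDPs with identity $E_1$, so this $j$ is characterised by $E_i(j)=1$ together with $E_j(i)=1$ (the second identity being forced, exactly as in Prop. \ref{fixed_points} and Prop. \ref{xi_prop}a).

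Next I would compute which index $j$ satisfies $E_i(j)=1$. Writing $j=E_i^{-1}(1)=\sigma_1(i)=i$ would be wrong unless one is careful: one must use $E_i^{-1}(k)=\sigma_k(i)$ with $k$ chosen so that $\sigma_k(i)$ equals the argument $j$ for which $E_i(j)=1$. Concretely, $E_i(j)=1$ means $j=E_i^{-1}(1)$, and by the formula $E_i^{-1}(1)=\sigma_1(i)=i$ only when $\sigma_1=\mathrm{id}$; but the quantity we actually want is the \emph{index} $j$ such that $E_j=E_i^{-1}$, not the value $E_i^{-1}(1)$. So instead I would argue: by Def. \ref{xi}, $\xi^\Sigma(i)=j$ iff $\sigma_i(j)=1=\sigma_j(i)$, i.e. $j=\sigma_i^{-1}(1)$. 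I then need to show $E_{\sigma_i^{-1}(1)}=E_i^{-1}$. Using Prop. \ref{CDP_from_mE}, for any $k$ we have $E_{\sigma_i^{-1}(1)}^{-1}(k)=\sigma_k(\sigma_i^{-1}(1))$, and I want this to equal $E_i(k)$. Dually, $E_i(k)=m$ iff $\sigma_m(i)=k$, i.e. $E_i(\sigma_m(i))=m$. Thus the claim $E_{\sigma_i^{-1}(1)}^{-1}=E_i$ becomes: for all $m$, $\sigma_m\big(\sigma_i^{-1}(1)\big)$, evaluated at... — more cleanly, it suffices to verify $E_i E_{\xi^\Sigma(i)}=E_1=\mathrm{id}$ directly from the composition rule.

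The cleanest route, and the one I would actually write, uses the commutant property. By Prop. \ref{ab->conj=inv} and Cor. \ref{abelian_to_group}, whenever $\Sigma_n$ is abelian we have $E_i=\sigma_i^{-1}$, so $E_i^{-1}=\sigma_i=\sigma_{\xi^\Sigma(i)}^{-1}=E_{\xi^\Sigma(i)}$ using Prop. \ref{xi_prop}a ($\sigma_i\xi^\Sigma(i)$ — more precisely $(\sigma_i)^{-1}=\sigma_{\xi^\Sigma(i)}$), which settles the abelian case immediately. For the general case, I would invoke Prop. \ref{always_gen_by_group}: the matrix set $M(E)$ is abelian, hence is generated by some abelian set of CDPs $\Sigma_n'$, and $\xi^{\Sigma'}$ is then tied to $\Sigma_n'$ exactly as above; but one must check $\xi^{\Sigma}$ and $\xi^{\Sigma'}$ induce the same index map on $E$, which follows because the correspondence $\sigma_i\mapsto E_i$ intertwines the two enumerations (both use $\sigma_i(1)=i$, resp. $\sigma_i'(1)=i$) and $M(E')=M(E)$. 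The main obstacle is precisely this bookkeeping: keeping straight the distinction between "the value $E_i^{-1}(1)$" and "the index $j$ with $E_j=E_i^{-1}$," and verifying that passing from $\Sigma_n$ to the abelian $\Sigma_n'$ does not change $\xi$ on the level of indices. Once that is pinned down, the identity $E_i^{-1}=E_{\xi^\Sigma(i)}$ drops out of $(\sigma_i)^{-1}=\sigma_{\xi^\Sigma(i)}$ and $E_{\xi^\Sigma(i)}=\sigma_{\xi^\Sigma(i)}^{-1}$.
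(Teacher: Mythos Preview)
Your abelian-case argument via Prop.~\ref{ab->conj=inv} ($E_i=\sigma_i^{-1}$, hence $E_i^{-1}=\sigma_i=(\sigma_{\xi^\Sigma(i)})^{-1}=E_{\xi^\Sigma(i)}$) is correct, but it is a detour compared with what the paper does. The paper simply invokes the composition law $E_iE_j=E_{\sigma_i(j)}$ from Cor.~\ref{abelian_to_group} (equivalently, the matrix version in Prop.~\ref{GCDPconj}) together with the definition of $\xi^\Sigma$: then $E_iE_j=E_1=\mathrm{id}$ iff $\sigma_i(j)=1$, i.e.\ iff $j=\xi^\Sigma(i)$, and the corollary follows in one line. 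You in fact wrote down this composition law in your first paragraph and then abandoned it; had you applied it directly you would have avoided all the subsequent bookkeeping about ``the value $E_i^{-1}(1)$'' versus ``the index $j$ with $E_j=E_i^{-1}$.''

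Your attempt to treat a ``general case'' via Prop.~\ref{always_gen_by_group} is both unnecessary and incomplete. It is unnecessary because $\xi^\Sigma$ is introduced in Def.~\ref{xi} only under the hypothesis $\sigma_1=\mathrm{id}$ (and implicitly that $\Sigma_n$ is a group, so that ``the inverse of $\sigma_i$ in $\Sigma_n$'' makes sense); the paper's own justification cites Cor.~\ref{abelian_to_group}, so the intended setting is already the abelian/group one. It is incomplete because you assert, but do not prove, that passing from $\Sigma_n$ to the abelian $\Sigma_n'$ leaves the index map $\xi$ unchanged; Prop.~\ref{always_gen_by_group} only tells you $M(E')=M(E)$, not that the two $\xi$'s agree, and indeed $\xi^\Sigma$ is not even defined unless $\sigma_1=\mathrm{id}$, in which case $\Sigma_n'=\Sigma_n$ and there is nothing to reduce.
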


Let look on some examples.

\begin{example} \label{xi_for_cycl}
Let $\Sigma _{n}=C(n)=\{c^{i}=(c)^{i}:i=0,1,...,n-1\}\subset $ $S(n)$ where $%
c=(012...n-1)\Rightarrow c^{k}(t)=t+_{n}k.$ Then $(c^{k})^{-1}=c^{n-k}$ and
\begin{equation} 
\xi ^{\Sigma }(k)=n-k.
\end{equation}
\end{example}

\begin{example} \label{xi_reg_S3}
It is clear that in case of the group $S(3),$ realised as a set of CDPs in regular
representation

$\Sigma _{n}=R(S(3))=\{\sigma _{1}=id,\;\sigma _{2}=(123)(456),\;\;\sigma
_{3}=(132)(465),$

$\;\sigma _{4}=(14)(26)(53),\;\sigma _{5}=(15)(24)(36),\;\sigma
_{6}=(16)(25)(34)\},$

the permutation $\xi ^{\Sigma }$ has very simple form $\xi ^{\Sigma }=(23).$
\end{example}

\section{CDP Matrices. Generalisation of the
Circulant Matrices.}

\subsection{CDP Matrices.}

Observe, that any set of CDPs induces a decomposition of $\mathbb{C}^n \otimes \mathbb{C}^n = \mathrm{span} \{ e_i \otimes e_j \}$ into a direct sum on $n$-dimensional subspaces $\mathcal{H}_0 \oplus \mathcal{H}_1 \oplus \dots \oplus \mathcal{H}_{n-1}$, where $\mathcal{H}_k = \mathrm{span} \{ e_l \otimes e_{\sigma_k(l)}\}$. The facts, that permutations are pairwise completely different guarantees that $\mathcal{H}_k \perp \mathcal{H}_l$ for $k \ne l$.

\vskip 0.5cm

We define the following class of operators over tensor product which we will call \textit{CDP matrices}.

\begin{definition} \label{def_circ_gen}
Let $\Sigma _{n}=\{\sigma _{i}\}_{i=1}^{n}$ be a set of CDPs and $\mathcal{A}%
=\{A^{k}=(a_{ij}^{k})\in M(n,%
\mathbb{C}
):k=1,..,n\}$ a set of matrices. Then we define
\begin{equation} 
\rho \lbrack \mathcal{A},\Sigma _{n}\mathcal{]\equiv }\sum_{i,j=1}^{n}%
\sum_{k=1}^{n}a_{ij}^{k}e_{ij}\otimes e_{\sigma _{k}(i)\sigma _{k}(j)}\equiv
\sum_{k=1}^{n}\rho \lbrack A^{k},\sigma _{k}\mathcal{]\equiv }%
\sum_{i,j=1}^{n}e_{ij}\otimes B_{ij}(\mathcal{A},\Sigma _{n})\in M(n^{2},%
\mathbb{C}
),
\end{equation}%
where
\begin{equation} 
\rho \lbrack A^{k},\sigma _{k}\mathcal{]=}\sum_{i,j=1}^{n}a_{ij}^{k}e_{ij}%
\otimes e_{\sigma _{k}(i)\sigma _{k}(j)}\in M(n^{2},%
\mathbb{C}
),\qquad B_{ij}(\mathcal{A},\Sigma _{n})=\sum_{k=1}^{n}a_{ij}^{k}e_{\sigma
_{k}(i)\sigma _{k}(j)}\in M(n,%
\mathbb{C}
).
\end{equation}%
\end{definition}
Operators $\rho[A^{k},\sigma _{k}]$ are supported on $\mathcal{H}_k$. When a set of CDPs $\Sigma _{n}$ is a cyclic group $C(n)=%
\{c^{i}=(c)^{i}:i=0,1,...,n-1\}\subset $ $S(n)$ where $c=(012...n-1)$, we
recognize the definition of the circulant matrices from the paper \cite{circ}.

\begin{example}
  Take $n=4$ and consider two sets of CDPs: circulant one  $C(4)= \{ {\rm id}, c,c^2,c^3\}$, with $c=(0123)$, and $V(4) = \{ {\rm id},(01)(23),(02)(13),(03)(12)\}$. For $C(4)$ one finds the following decomposition of the total Hilbert space
\begin{equation}
  \mathbb{C}^4 \ot \mathbb{C}^4 = \mathcal{H}_0 \oplus \mathcal{H}_1 \oplus \mathcal{H}_2 \oplus \mathcal{H}_3 ,
\end{equation}
with
\begin{eqnarray*}
\mathcal{H}_0 &=& \mbox{span}\left\{ e_0 \ot e_0\, , e_1 \ot e_1\, ,e_2 \ot e_2\, , e_3 \ot e_3 \right\}
\ ,    \nonumber \\
\mathcal{H}_1 &=& \mbox{span}\left\{ e_0 \ot e_1\, , e_1\ot
e_2\, , e_2\ot e_3\, , e_3 \ot e_0 \right\}
 \    ,\\
\mathcal{H}_2 &=& \mbox{span}\left\{ e_0 \ot e_2\, , e_1 \ot
e_3\, , e_2\ot e_1\, , e_3 \ot e_2 \right\}
 \    ,\\
\mathcal{H}_3 &=& \mbox{span}\left\{ e_0 \ot e_3\, , e_1 \ot
e_0\, , e_2\ot e_1\, , e_3 \ot e_2 \right\} \    .
\end{eqnarray*}
One finds the corresponding bipartite operator

\begin{widetext}
\begin{equation}\label{A1}
  A_1 = \left( \begin{array}{cccc|cccc|cccc|cccc}
    a_{00} & \cdot  & \cdot  & \cdot  &
    \cdot  & a_{01} & \cdot  & \cdot  &
    \cdot  & \cdot  & a_{02} & \cdot  &
    \cdot  & \cdot  & \cdot  & a_{03}   \\
    \cdot  & b_{00} & \cdot  & \cdot  &
    \cdot  & \cdot  & b_{01} & \cdot  &
    \cdot  & \cdot  & \cdot  & b_{02} &
    b_{03} & \cdot  & \cdot  & \cdot    \\
    \cdot  & \cdot  & c_{00} & \cdot  &
    \cdot  & \cdot  & \cdot  & c_{01} &
    c_{02} & \cdot  & \cdot  & \cdot  &
    \cdot  & c_{03} & \cdot  & \cdot     \\
    \cdot  & \cdot  & \cdot  & d_{00} &
    d_{01} & \cdot  & \cdot  & \cdot  &
    \cdot  & d_{02} & \cdot  & \cdot  &
    \cdot  & \cdot  & d_{03} & \cdot     \\   \hline
    \cdot  & \cdot  & \cdot  & d_{10} &
    d_{11} & \cdot  & \cdot  & \cdot  &
    \cdot  & d_{12} & \cdot  & \cdot  &
    \cdot  & \cdot  & d_{13} & \cdot     \\
    a_{10} & \cdot  & \cdot  & \cdot  &
    \cdot  & a_{11} & \cdot  & \cdot  &
    \cdot  & \cdot  & a_{12} & \cdot  &
    \cdot  & \cdot  & \cdot  & a_{13}   \\
    \cdot  & b_{10} & \cdot  & \cdot  &
    \cdot  & \cdot  & b_{11} & \cdot  &
    \cdot  & \cdot  & \cdot  & b_{12} &
    b_{13} & \cdot  & \cdot  & \cdot    \\
    \cdot  & \cdot  & c_{10} & \cdot  &
    \cdot  & \cdot  & \cdot  & c_{11} &
    c_{12} & \cdot  & \cdot  & \cdot  &
    \cdot  & c_{13} & \cdot  & \cdot    \\   \hline
    \cdot  & \cdot  & c_{20} & \cdot  &
    \cdot  & \cdot  & \cdot  & c_{21} &
    c_{22} & \cdot  & \cdot  & \cdot  &
    \cdot  & c_{23} & \cdot  & \cdot    \\
    \cdot  & \cdot  & \cdot  & d_{20} &
    d_{21} & \cdot  & \cdot  & \cdot  &
    \cdot  & d_{22} & \cdot  & \cdot  &
    \cdot  & \cdot  & d_{23} & \cdot     \\
    a_{20} & \cdot  & \cdot  & \cdot  &
    \cdot  & a_{21} & \cdot  & \cdot  &
    \cdot  & \cdot  & a_{22} & \cdot  &
    \cdot  & \cdot  & \cdot  & a_{23}   \\
    \cdot  & b_{20} & \cdot  & \cdot  &
    \cdot  & \cdot  & b_{21} & \cdot  &
    \cdot  & \cdot  & \cdot  & b_{22} &
    b_{23} & \cdot  & \cdot  & \cdot    \\   \hline
    \cdot  & b_{30} & \cdot  & \cdot  &
    \cdot  & \cdot  & b_{31} & \cdot  &
    \cdot  & \cdot  & \cdot  & b_{32} &
    b_{33} & \cdot  & \cdot  & \cdot    \\
    \cdot  & \cdot  & c_{30} & \cdot  &
    \cdot  & \cdot  & \cdot  & c_{31} &
    c_{32} & \cdot  & \cdot  & \cdot  &
    \cdot  & c_{33} & \cdot  & \cdot    \\
    \cdot  & \cdot  & \cdot  & d_{30} &
    d_{31} & \cdot  & \cdot  & \cdot  &
    \cdot  & d_{32} & \cdot  & \cdot  &
    \cdot  & \cdot  & d_{33} & \cdot     \\
    a_{30} & \cdot  & \cdot  & \cdot  &
    \cdot  & a_{31} & \cdot  & \cdot  &
    \cdot  & \cdot  & a_{32} & \cdot  &
    \cdot  & \cdot  & \cdot  & a_{33}
     \end{array} \right)\ .
\end{equation}
\end{widetext}
For $V(4)$ one finds the following decomposition of the total Hilbert space
\begin{equation}
  \mathbb{C}^4 \ot \mathbb{C}^4 = \mathcal{H}'_0 \oplus \mathcal{H}'_1 \oplus \mathcal{H}'_2 \oplus \mathcal{H}'_3 ,
\end{equation}
with
\begin{eqnarray*}
\mathcal{H}'_0 &=& \mbox{span}\left\{ e_0 \ot e_0\, , e_1 \ot e_1\, ,e_2 \ot e_2\, , e_3 \ot e_3 \right\}
\ ,    \nonumber \\
\mathcal{H}'_1 &=& \mbox{span}\left\{ e_0 \ot e_1\, , e_1\ot
e_0\, , e_2\ot e_3\, , e_3 \ot e_2 \right\}
 \    ,\\
\mathcal{H}'_2 &=& \mbox{span}\left\{ e_0 \ot e_2\, , e_1 \ot
e_3\, , e_2\ot e_0\, , e_3 \ot e_1 \right\}
 \    ,\\
\mathcal{H}'_3 &=& \mbox{span}\left\{ e_0 \ot e_3\, , e_1 \ot
e_2\, , e_2\ot e_1\, , e_3 \ot e_0 \right\} \    .
\end{eqnarray*}
One finds the corresponding bipartite operator
\begin{widetext}
\begin{equation}\label{A2}
A_2\ =\ \left( \begin{array}{cccc|cccc|cccc|cccc}
    a_{00} & \cdot  & \cdot  & \cdot  &
    \cdot  & a_{01} & \cdot  & \cdot  &
    \cdot  & \cdot  & a_{02} & \cdot  &
    \cdot  & \cdot  & \cdot  & a_{03}   \\
    \cdot  & b_{00} & \cdot  & \cdot  &
    b_{01}  & \cdot  & \cdot & \cdot  &
    \cdot  & \cdot  & \cdot  & b_{02} &
    \cdot & \cdot  & b_{03}  & \cdot    \\
    \cdot  & \cdot  & c_{00} & \cdot  &
    \cdot  & \cdot  & \cdot  & c_{01} &
    c_{02} & \cdot  & \cdot  & \cdot  &
    \cdot  & c_{03} & \cdot  & \cdot     \\
    \cdot  & \cdot  & \cdot  & d_{00} &
    \cdot & \cdot  & d_{01}  & \cdot  &
    \cdot  & d_{02} & \cdot  & \cdot  &
    d_{03}  & \cdot  & \cdot & \cdot     \\   \hline
    \cdot  & b_{10} & \cdot  & \cdot  & 
    b_{11} & \cdot  & \cdot  & \cdot  &
    \cdot  &  \cdot  & \cdot  & b_{12} &
    \cdot  & \cdot  & b_{13} & \cdot     \\
    a_{10} & \cdot  & \cdot  & \cdot  &
    \cdot  & a_{11} & \cdot  & \cdot  &
    \cdot  & \cdot  & a_{12} & \cdot  &
    \cdot  & \cdot  & \cdot  & a_{13}   \\
    \cdot  &  \cdot  & \cdot  & d_{10} &
    \cdot  & \cdot  & d_{11} & \cdot  &
    \cdot  & d_{12} & \cdot  & \cdot  & 
    d_{13} & \cdot  & \cdot  & \cdot    \\
    \cdot  & \cdot  & c_{10} & \cdot  &
    \cdot  & \cdot  & \cdot  & c_{11} &
    c_{12} & \cdot  & \cdot  & \cdot  &
    \cdot  & c_{13} & \cdot  & \cdot    \\   \hline
    \cdot  & \cdot  & c_{20} & \cdot  &
    \cdot  & \cdot  & \cdot  & c_{21} &
    c_{22} & \cdot  & \cdot  & \cdot  &
    \cdot  & c_{23} & \cdot  & \cdot    \\
    \cdot  & \cdot  & \cdot  & d_{20} &
     \cdot  & \cdot  & d_{21} & \cdot  &
    \cdot  & d_{22} & \cdot  & \cdot  &
     d_{23} & \cdot  & \cdot  & \cdot     \\
    a_{20} & \cdot  & \cdot  & \cdot  &
    \cdot  & a_{21} & \cdot  & \cdot  &
    \cdot  & \cdot  & a_{22} & \cdot  &
    \cdot  & \cdot  & \cdot  & a_{23}   \\
    \cdot  & b_{20} & \cdot  & \cdot  &
    b_{21} & \cdot  & \cdot  &  \cdot  &
    \cdot  & \cdot  & \cdot  & b_{22} &
     \cdot  & \cdot  & b_{23} & \cdot    \\   \hline
    \cdot  & \cdot  & \cdot  & d_{30} &
     \cdot  & \cdot  & d_{31} & \cdot  &
    \cdot  & d_{32} & \cdot  & \cdot  &
    d_{33} & \cdot  & \cdot  &  \cdot     \\
    \cdot  & \cdot  & c_{30} & \cdot  &
    \cdot  & \cdot  & \cdot  & c_{31} &
    c_{32} & \cdot  & \cdot  & \cdot  &
    \cdot  & c_{33} & \cdot  & \cdot    \\
    \cdot  & b_{30} & \cdot  & \cdot  &
    b_{31} & \cdot  & \cdot  &  \cdot  &
    \cdot  & \cdot  & \cdot  & b_{32} &
    \cdot  & \cdot  & b_{33} & \cdot    \\
    a_{30} & \cdot  & \cdot  & \cdot  &
    \cdot  & a_{31} & \cdot  & \cdot  &
    \cdot  & \cdot  & a_{32} & \cdot  &
    \cdot  & \cdot  & \cdot  & a_{33}
     \end{array} \right)\ .
\end{equation}
\end{widetext}

\end{example}
Due to the CDP structure of the set $\Sigma _{n}=\{\sigma
_{i}\}_{i=1}^{n} $ the matrices $\rho \lbrack \mathcal{A},\Sigma _{n}%
\mathcal{]}$ have very the following spectral properties.

\begin{proposition} \label{eigen}
Let $\Sigma _{n}=\{\sigma _{i}\}_{i=1}^{n}$ be a set of CDPs and $\mathcal{A}%
=\{A^{k}=(a_{ij}^{k})\in M(n,%
\mathbb{C}
):k=1,..,n\}$ a set of arbitrary matrices, then

a) The matrix $\rho \lbrack \mathcal{A},\Sigma _{n}\mathcal{]}\equiv
\sum_{k=1}^{n}\rho \lbrack A^{k},\sigma _{k}\mathcal{]}$ is in fact a direct
sum of operators, because we have
\begin{equation} 
k\neq l\Rightarrow \rho \lbrack A^{k},\sigma _{k}\mathcal{]}\rho \lbrack
A^{l},\sigma _{l}]\mathcal{=}0,
\end{equation}%
so in particular they commute,

b) If the matrices $\mathcal{A}=\{A^{k}\in M(n,%
\mathbb{C}
):k=1,..,n\}$ are diagonalizable e.i.
\begin{equation} 
A^{k}x^{k}(q)=\lambda _{q}^{k}x^{k}(q),\qquad x^{k}(q)=(x_{j}^{k}(q))\in
\mathbb{C}
^{n},\quad q,j=1,...,n
\end{equation}%
then the matrices $\rho \lbrack A^{k},\sigma _{k}\mathcal{]}$ are also
diagonalizable and we have
\begin{equation} 
\rho \lbrack A^{k},\sigma _{k}\mathcal{]}w^{l}(q)=\delta _{kl}\lambda
_{q}^{k}w^{k}(q),
\end{equation}%
where%
\begin{equation} 
w^{k}(q)=\sum_{j=1}^{n}x_{j}^{k}(q)e_{j}\otimes e_{\sigma _{k}(j)}
\end{equation}%
is an eigenvector of the matrix $\rho \lbrack A^{k},\sigma _{k}\mathcal{]}$
corresponding to the eigenvalues $\lambda _{q}^{k}$ and all remaining
eigenvectors $w^{l}(q):l\neq k$ have eigenvalues $0.$

c) For the matrix $\rho \lbrack \mathcal{A},\Sigma _{n}\mathcal{]}$ we have
the following eigen-equation
\begin{equation} 
\rho \lbrack \mathcal{A},\Sigma _{n}\mathcal{]}w^{k}(q)=\lambda
_{q}^{k}w^{k}(q),
\end{equation}%
so eigenvalues of the matrices $\mathcal{A}=\{A^{k}\in M(n,%
\mathbb{C}
):k=1,..,n\}$ are eigenvalues of the matrix $\rho \lbrack \mathcal{A}%
,\Sigma _{n}\mathcal{]}$.
\end{proposition}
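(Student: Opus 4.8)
The plan is to exploit the orthogonal direct sum decomposition $\mathbb{C}^n \ot \mathbb{C}^n = \mathcal{H}_0 \oplus \dots \oplus \mathcal{H}_{n-1}$ induced by the set of CDPs, which was established in the preceding discussion. For part a), I would compute the product $\rho[A^k,\sigma_k]\,\rho[A^l,\sigma_l]$ directly from the definition: writing each factor as $\sum_{i,j} a^k_{ij}\, e_{ij} \ot e_{\sigma_k(i)\sigma_k(j)}$ and multiplying in $M(n^2,\mathbb{C})$ using $e_{ij}e_{rs} = \delta_{jr} e_{is}$ in each tensor slot, the mixed slot produces a factor $\delta_{j\,i'}\cdot \delta_{\sigma_k(j)\,\sigma_l(i')}$. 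Since $\sigma_k$ and $\sigma_l$ are completely different for $k\ne l$, the two Kronecker deltas cannot be simultaneously satisfied: $j = i'$ forces $\sigma_k(j)=\sigma_k(i')\ne \sigma_l(i')$. Hence every term vanishes and the product is $0$; commutativity of the summands is then immediate, and $\rho[\mathcal{A},\Sigma_n]$ is the (orthogonal) direct sum $\bigoplus_k \rho[A^k,\sigma_k]$, with $\rho[A^k,\sigma_k]$ supported on $\mathcal{H}_k$.

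For part b), the key observation is that $U_k : \mathbb{C}^n \to \mathcal{H}_k$, $e_j \mapsto e_j \ot e_{\sigma_k(j)}$, is a unitary (it sends the standard basis to the orthonormal basis of $\mathcal{H}_k$), and that $\rho[A^k,\sigma_k]$ acts as $U_k A^k U_k^\dagger$ on $\mathcal{H}_k$ and as $0$ on $\mathcal{H}_k^\perp$. I would verify this intertwining relation by a short direct computation: applying $\rho[A^k,\sigma_k] = \sum_{i,j} a^k_{ij}\, e_{ij}\ot e_{\sigma_k(i)\sigma_k(j)}$ to $w^l(q) = \sum_m x^l_m(q)\, e_m \ot e_{\sigma_l(m)}$, the mixed slot again gives $\delta_{j\,m}\,\delta_{\sigma_k(j)\,\sigma_l(m)}$, which by the CDP property is $\delta_{kl}\,\delta_{jm}$; so for $l=k$ one is left with $\sum_{i} \big(\sum_j a^k_{ij} x^k_j(q)\big) e_i \ot e_{\sigma_k(i)} = \sum_i (\lambda^k_q x^k_i(q))\, e_i\ot e_{\sigma_k(i)} = \lambda^k_q\, w^k(q)$, using the eigen-equation $A^k x^k(q) = \lambda^k_q x^k(q)$, while for $l\ne k$ the expression vanishes. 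This is exactly the claimed equation $\rho[A^k,\sigma_k]\, w^l(q) = \delta_{kl}\lambda^k_q w^k(q)$, and it simultaneously shows the $w^k(q)$ diagonalize $\rho[A^k,\sigma_k]$ (together with the fact that the vectors $w^l(q)$ for all $l,q$ span $\mathbb{C}^n\ot\mathbb{C}^n$, being orthonormal bases of the $\mathcal{H}_l$).

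Part c) follows by summing over $k$: since $\rho[\mathcal{A},\Sigma_n] = \sum_k \rho[A^k,\sigma_k]$ and each summand annihilates $w^l(q)$ for $l\ne k$, we get $\rho[\mathcal{A},\Sigma_n]\, w^k(q) = \rho[A^k,\sigma_k]\, w^k(q) = \lambda^k_q w^k(q)$, so the spectrum of $\rho[\mathcal{A},\Sigma_n]$ is the union (with multiplicity) of the spectra of the $A^k$. There is no real obstacle here; the only point requiring a little care is the bookkeeping of the two tensor slots in the matrix-unit multiplication, and in particular making sure the CDP hypothesis is invoked at precisely the step where $\delta_{j m}\,\delta_{\sigma_k(j)\,\sigma_l(m)}$ collapses to $\delta_{kl}\delta_{jm}$ — everything else is routine linear algebra over the orthogonal decomposition.
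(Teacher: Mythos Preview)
Your proposal is correct and follows exactly the approach implicit in the paper: the paper does not give a detailed proof of this proposition, but relies on the observation made just before Definition~\ref{def_circ_gen} that the $\mathcal{H}_k$ are mutually orthogonal and that each $\rho[A^k,\sigma_k]$ is supported on $\mathcal{H}_k$. Your direct matrix-unit computation is precisely the verification the paper leaves to the reader.
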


From this structure of CDP matrices $\rho \lbrack \mathcal{A}%
,\Sigma _{n}\mathcal{]}$ one easily deduce \ the following norm properties

\begin{proposition} \label{norms}
Let $\Sigma _{n}=\{\sigma _{i}\}_{i=1}^{n}$ be a set of CDPs and $%
\mathcal{A}=\{A^{k}=(a_{ij}^{k})\in M(n,%
\mathbb{C}
):k=1,..,n\}$ a set of arbitrary matrices, then
\begin{equation} 
||\rho \lbrack \mathcal{A},\Sigma _{n}\mathcal{]}||_{HS}=%
\sum_{k=1}^{n}||A^{k}||_{HS},\qquad ||\rho \lbrack \mathcal{A},\Sigma _{n}%
\mathcal{]}||_{tr}=\sum_{k=1}^{n}||A^{k}||_{tr},
\end{equation}%
where%
\begin{equation} 
||X||_{HS}=[trX^{+}X]^{\frac{1}{2}},\qquad ||X||_{tr}=tr[X^{+}X]^{\frac{1}{2}%
},
\end{equation}%
so these norms of $\rho \lbrack \mathcal{A},\Sigma _{n}\mathcal{]}$ depends
directly on the norms of matrices only and does not depends on a set of CDPs $%
\Sigma _{n}=\{\sigma _{i}\}_{i=1}^{n}.$
\end{proposition}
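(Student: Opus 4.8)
The plan is to reduce both identities to the single--block operators $\rho[A^k,\sigma_k]$ and then exploit the orthogonal direct--sum structure recorded in Proposition \ref{eigen}(a).

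First I would note the elementary fact that, for each $k$, the linear map $U_k:\mathbb{C}^n\to\mathbb{C}^n\otimes\mathbb{C}^n$ defined by $U_ke_j=e_j\otimes e_{\sigma_k(j)}$ is an isometry onto $\mathcal{H}_k$ --- this is the only place the bijectivity of $\sigma_k$ is used --- so that $U_k^+U_k=\mathbf{1}_n$, and that Definition \ref{def_circ_gen} gives $\rho[A^k,\sigma_k]=U_kA^kU_k^+$, an operator which moreover annihilates $\mathcal{H}_k^{\perp}$. Hence $\rho[A^k,\sigma_k]^+\rho[A^k,\sigma_k]=U_k(A^k)^+A^kU_k^+$, so $|\rho[A^k,\sigma_k]|$ and $|A^k|$ have the same non--zero singular values, which gives $\|\rho[A^k,\sigma_k]\|_{HS}=\|A^k\|_{HS}$ and $\|\rho[A^k,\sigma_k]\|_{tr}=\|A^k\|_{tr}$. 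For the Hilbert--Schmidt part one may instead just compute the trace $tr\big(\rho[A^k,\sigma_k]^+\rho[A^k,\sigma_k]\big)=\sum_{i,j}|a^k_{ij}|^2$ directly, using only $e_{ij}e_{pq}=\delta_{jp}e_{iq}$ together with the injectivity of $\sigma_k$.

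Next I would assemble the blocks. The key observation is that $\rho[A^k,\sigma_k]^+$ is again a single--permutation operator attached to the \emph{same} $\sigma_k$, namely $\rho[(A^k)^+,\sigma_k]$; since Proposition \ref{eigen}(a) holds for arbitrary matrices, it therefore yields $\rho[A^k,\sigma_k]^+\rho[A^l,\sigma_l]=0$ for $k\neq l$. Consequently $\rho[\mathcal{A},\Sigma_n]^+\rho[\mathcal{A},\Sigma_n]=\bigoplus_k U_k(A^k)^+A^kU_k^+$ is a genuine orthogonal direct sum, hence $|\rho[\mathcal{A},\Sigma_n]|=\bigoplus_k U_k|A^k|U_k^+$ and the multiset of singular values of $\rho[\mathcal{A},\Sigma_n]$ is the disjoint union (padded with zeros) of those of the $A^k$. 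Summing the singular values gives $\|\rho[\mathcal{A},\Sigma_n]\|_{tr}=\sum_k\|A^k\|_{tr}$, and summing their squares gives $\|\rho[\mathcal{A},\Sigma_n]\|_{HS}^2=\sum_k\|A^k\|_{HS}^2$, which is the Hilbert--Schmidt statement. Since the only feature of $\Sigma_n$ used is the orthogonality $\mathcal{H}_k\perp\mathcal{H}_l$ (guaranteed by the CDP property, as noted before Definition \ref{def_circ_gen}), this also makes transparent the claimed independence of these norms of the chosen set of CDPs.

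There is no serious obstacle; the calculation is routine. The one point deserving a moment's care is the vanishing of the cross terms $\rho[A^k,\sigma_k]^+\rho[A^l,\sigma_l]$ for $k\neq l$, i.e. checking that taking the adjoint does not spoil the orthogonal--support property --- which follows at once from the fact that the adjoint is again of the form $\rho[\,\cdot\,,\sigma_k]$ with the same permutation, so that Proposition \ref{eigen}(a) applies verbatim.
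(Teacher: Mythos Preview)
Your argument is essentially the one the paper has in mind: the proposition is stated without proof, merely as an immediate consequence of the orthogonal direct--sum structure of Proposition~\ref{eigen}, and that is exactly the mechanism you invoke. The isometry $U_k$ you introduce is a clean way to say that $\rho[A^k,\sigma_k]$ is unitarily equivalent (after padding by zeros) to $A^k$, and your remark that $\rho[A^k,\sigma_k]^+=\rho[(A^k)^+,\sigma_k]$ so that Proposition~\ref{eigen}(a) kills the cross terms is precisely the missing detail one would want to spell out.

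One point worth flagging: what you actually prove for the Hilbert--Schmidt norm is
\[
\|\rho[\mathcal{A},\Sigma_n]\|_{HS}^{2}=\sum_{k=1}^{n}\|A^k\|_{HS}^{2},
\]
which is the correct identity for an orthogonal direct sum; the formula printed in the proposition, $\|\rho[\mathcal{A},\Sigma_n]\|_{HS}=\sum_k\|A^k\|_{HS}$, is in general false (take $n=2$ with $A^1=A^2=\mathbf{1}_2$). So your proof is right and the discrepancy lies in the statement, not in your argument; you should not gloss over this by calling your squared identity ``the Hilbert--Schmidt statement''. The trace--norm identity is stated correctly and your derivation of it is fine.
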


The matrix $\rho \lbrack \mathcal{A},\Sigma _{n}\mathcal{]}$ may be written
also using the set of matrices $M(E)\equiv \{m(E_{j}):j=1,..,n\}$, defined
in Def. \ref{mE}, namely we have

\begin{proposition} \label{similarCHK}
Under assumptions of the Prop. \ref{eigen} the block structure of the matrix $\rho
\lbrack \mathcal{A},\Sigma _{n}\mathcal{]}$ is the following
\begin{equation} 
\rho \lbrack \mathcal{A},\Sigma _{n}\mathcal{]=}\sum_{i,j=1}^{n}e_{ij}%
\otimes B_{ij}(\mathcal{A},\Sigma _{n})\in M(n^{2},%
\mathbb{C}
),
\end{equation}%
where
\begin{equation} 
B_{ij}(\mathcal{A},\Sigma _{n})=\sum_{k=1}^{n}a_{ij}^{k}e_{\sigma
_{k}(i)\sigma _{k}(j)}=m(E_{i})^{+}A_{ij}m(E_{j}),
\end{equation}%
so, we have
\begin{equation} 
\rho \lbrack \mathcal{A},\Sigma _{n}\mathcal{]=}\sum_{i,j=1}^{n}e_{ij}%
\otimes m(E_{i})^{+}A_{ij}m(E_{j}):A_{ij}=(a_{ij}^{k}\delta _{kl}),
\end{equation}%
which is similar to formula in \cite{circ}.
\end{proposition}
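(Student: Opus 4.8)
The plan is to establish both displayed identities by a direct bookkeeping computation with matrix units, starting from the defining formula for $\rho[\mathcal{A},\Sigma_n]$ in Definition \ref{def_circ_gen}; the CDP property of $\Sigma_n$ is not actually invoked for the algebraic identity itself, only the definitions of $m(E_j)$ and $A_{ij}$ are used.

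First I would record the block form. Writing $\rho[\mathcal{A},\Sigma_n]=\sum_{i,j,k}a_{ij}^{k}\,e_{ij}\ot e_{\sigma_k(i)\sigma_k(j)}$ and pulling the summation over $k$ inside, the coefficient of $e_{ij}$ in the first tensor factor is exactly $B_{ij}(\mathcal{A},\Sigma_n)=\sum_{k=1}^{n}a_{ij}^{k}e_{\sigma_k(i)\sigma_k(j)}\in M(n,\mathbb{C})$. This is merely a regrouping of terms and reproduces the third expression already written in Definition \ref{def_circ_gen}.

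Next I would compute $m(E_i)^{+}A_{ij}m(E_j)$. From Definition \ref{mE} we have $m(E_j)=\sum_{k=1}^{n}e_{k\sigma_k(j)}$, hence (the matrix being real) $m(E_i)^{+}=\sum_{k=1}^{n}e_{\sigma_k(i)\,k}$, and $A_{ij}=(a_{ij}^{k}\delta_{kl})=\sum_{l}a_{ij}^{l}e_{ll}$ is diagonal. Multiplying out and using $e_{ab}e_{cd}=\delta_{bc}e_{ad}$,
\[
m(E_i)^{+}A_{ij}m(E_j)=\sum_{k,l,p}a_{ij}^{l}\,e_{\sigma_k(i)\,k}\,e_{ll}\,e_{p\sigma_p(j)}
=\sum_{k,l,p}a_{ij}^{l}\,\delta_{kl}\delta_{lp}\,e_{\sigma_k(i)\sigma_p(j)}
=\sum_{k=1}^{n}a_{ij}^{k}\,e_{\sigma_k(i)\sigma_k(j)},
\]
which is precisely $B_{ij}(\mathcal{A},\Sigma_n)$. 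Substituting this back into the block form gives the last displayed identity of the proposition.

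I expect no genuine obstacle: the whole argument is a manipulation of matrix units. The only point needing a little care is to keep track of the fact that the column index of $m(E_j)$ read off from row $k$ is $\sigma_k(j)$, so that after the two collapses of Kronecker deltas the surviving indices are $\sigma_k(i)$ on the left and $\sigma_k(j)$ on the right carrying the same label $k$ — and it is precisely the diagonality of $A_{ij}$ that forces $k=l=p$. The closing remark that this "is similar to the formula in \cite{circ}" is just the specialisation $\Sigma_n=C(n)$, for which the $m(E_j)$ become powers of the cyclic shift and one recovers the circulant block decomposition; no further argument is required.
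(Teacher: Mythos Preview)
Your proof is correct and is exactly the direct matrix-unit computation one would carry out; the paper states Proposition~\ref{similarCHK} without proof, so there is nothing further to compare. Your observation that only the definitions of $m(E_j)$ and the diagonal $A_{ij}$ are used (not the full CDP property) is accurate and worth keeping.
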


\begin{definition} \label{ab_tp_matr}
The CDP matrices
\begin{equation} 
\rho \lbrack \mathcal{A},\Sigma _{n}\mathcal{]=}\sum_{i,j=1}^{n}e_{ij}%
\otimes m(E_{i})^{+}A_{ij}m(E_{j}):A_{ij}=(a_{ij}^{k}\delta _{kl}),
\end{equation}%
where the set $M(E)\equiv \{m(E_{j}):j=1,..,n\}$ is abelian (equivalently the $\Sigma_n=\{\sigma_k\}$ is abelian) we will call
abelian or commutative CDP matrices.
\end{definition}

From this structure of the matrix $\rho \lbrack \mathcal{A},\Sigma _{n}%
\mathcal{]}$ we deduce easily that

\begin{proposition} \label{herm_semipos}
Let $\Sigma _{n}=\{\sigma _{i}\}_{i=1}^{n}$ be a set of CDPs and $\mathcal{A}%
=\{A^{k}\in M(n,%
\mathbb{C}
):k=1,..,n\}$ a set of matrices. Then the matrix $\rho \lbrack \mathcal{A}%
,\Sigma _{n}\mathcal{]}$ is hermitian iff the matrices $\mathcal{A}$ are
hermitian and similarly the matrix $\rho \lbrack \mathcal{A},\Sigma _{n}%
\mathcal{]}$ is semipositive definite iff all matrices $\mathcal{A}$ are
semi-positive definite.
\end{proposition}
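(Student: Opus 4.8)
The plan is to reduce the statement to the corresponding statement for each individual block $A^k$, exploiting the orthogonal direct--sum structure already established in Prop.~\ref{eigen}a. Recall that $\rho[\mathcal{A},\Sigma _n]=\sum_{k=1}^n\rho[A^k,\sigma _k]$, that $\rho[A^k,\sigma _k]$ is supported on the $n$--dimensional subspace $\mathcal{H}_k=\mathrm{span}\{e_l\ot e_{\sigma _k(l)}\}$, and that the completely--different property forces $\mathcal{H}_k\perp\mathcal{H}_l$ for $k\ne l$, so that $\mathbb{C}^n\ot\mathbb{C}^n=\bigoplus_{k=1}^n\mathcal{H}_k$ is an orthogonal decomposition. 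First I would introduce for each $k$ the isometry $V_k:\mathbb{C}^n\to\mathbb{C}^n\ot\mathbb{C}^n$, $V_ke_j=e_j\ot e_{\sigma _k(j)}$, whose range is $\mathcal{H}_k$. The CDP condition gives at once $V_k^+V_l=\delta _{kl}\,\mathbf{1}_n$, and a short computation of the action of the matrix units $e_{ij}\ot e_{\sigma _k(i)\sigma _k(j)}$ on the product basis yields the key identity
\begin{equation}
\rho[A^k,\sigma _k]=V_kA^kV_k^+ ,\qquad\text{so}\qquad \rho[\mathcal{A},\Sigma _n]=\sum_{k=1}^nV_kA^kV_k^+ ,\quad V_k^+\,\rho[\mathcal{A},\Sigma _n]\,V_k=A^k .
\end{equation}

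Granting this, hermiticity follows immediately: $\rho[\mathcal{A},\Sigma _n]^+=\sum_kV_k(A^k)^+V_k^+$, and compressing the equation $\rho[\mathcal{A},\Sigma _n]=\rho[\mathcal{A},\Sigma _n]^+$ by $V_k^+(\cdot)V_k$ while using $V_k^+V_l=\delta _{kl}\mathbf{1}_n$ gives $A^k=(A^k)^+$ for every $k$; the converse is obvious. For semipositivity (assuming the $A^k$ hermitian, which is forced by the first part) I would decompose an arbitrary $\psi\in\mathbb{C}^n\ot\mathbb{C}^n$ as $\psi=\sum_k\psi_k$ with $\psi_k\in\mathcal{H}_k$; since $\rho[\mathcal{A},\Sigma _n]\psi_l\in\mathcal{H}_l$, all cross terms drop and
\begin{equation}
\<\psi|\rho[\mathcal{A},\Sigma _n]|\psi\>=\sum_{k=1}^n\<V_k^+\psi_k|A^k|V_k^+\psi_k\> .
\end{equation}
As $V_k^+$ maps $\mathcal{H}_k$ onto all of $\mathbb{C}^n$, the quantity $\<\psi|\rho[\mathcal{A},\Sigma _n]|\psi\>$ is nonnegative for every $\psi$ precisely when each form $\phi\mapsto\<\phi|A^k|\phi\>$ is nonnegative on $\mathbb{C}^n$, i.e. when every $A^k$ is semipositive definite. (Alternatively, once hermiticity is known one may simply quote Prop.~\ref{eigen}c: the spectrum of $\rho[\mathcal{A},\Sigma _n]$ is the multiset union of the spectra of the $A^k$ together with $n^2-n$ zeros, so $\rho[\mathcal{A},\Sigma _n]\ge 0$ iff every $A^k\ge 0$.)

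There is no genuine obstacle; the one point that must not be glossed over is that the orthogonality $\mathcal{H}_k\perp\mathcal{H}_l$ --- which is exactly the CDP hypothesis --- is what allows hermiticity and positivity of the full operator to descend block by block. Without it the blocks could overlap and interfere, and the ``only if'' directions would fail. The remaining work, namely the verifications $\rho[A^k,\sigma _k]=V_kA^kV_k^+$ and $V_k^+V_l=\delta _{kl}\mathbf{1}_n$, is the routine index bookkeeping already implicit in Prop.~\ref{eigen} and Prop.~\ref{similarCHK}, which I would not spell out in detail.
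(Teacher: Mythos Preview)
Your argument is correct and is exactly the mechanism the paper has in mind: the paper does not write out a proof of this proposition at all, but simply states that it ``follows easily'' from the block structure established in Prop.~\ref{eigen} and Prop.~\ref{similarCHK}. Your introduction of the isometries $V_k$ with $\rho[A^k,\sigma_k]=V_kA^kV_k^+$ and $V_k^+V_l=\delta_{kl}\mathbf{1}_n$ is precisely the clean way to make that orthogonal direct--sum structure explicit, and both the hermiticity and the semipositivity equivalences then drop out as you describe.

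One small slip in your parenthetical alternative: the spectrum of $\rho[\mathcal{A},\Sigma_n]$ is the multiset union of the spectra of the $A^k$ \emph{without} any additional $n^2-n$ zeros. There are $n$ blocks $A^k$, each $n\times n$, and the $\mathcal{H}_k$ exhaust $\mathbb{C}^n\ot\mathbb{C}^n$, so the count already matches $n^2$. (You may be thinking of a single summand $\rho[A^k,\sigma_k]$, which on the full space does have $n^2-n$ extra zero eigenvalues.) This does not affect the validity of the positivity argument, since your primary quadratic--form computation is independent of that remark.
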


Unfortunately the good properties of the matrices $\rho \lbrack \mathcal{A}%
,\Sigma _{n}\mathcal{]}$ disappear after partial transpose of second part of
the tensor product, in fact we have

\begin{proposition} \label{after_pt}
Let $\Sigma _{n}=\{\sigma _{i}\}_{i=1}^{n}$ be a set of CDPs and $\mathcal{A}%
=\{A^{k}=(a_{ij}^{k})\in M(n,%
\mathbb{C}
):k=1,..,n\}$ be a set of hermitian matrices, then a partial transposed
matrices $\rho \lbrack A^{k},\sigma _{k}\mathcal{]}^{T_{2}}=id\otimes T \rho
\lbrack A^{k},\sigma _{k}\mathcal{]}$ where $T$ is the transpose in $M(n,%
\mathbb{C}
)$ have the following properties

a) the matrices $\rho \lbrack A^{k},\sigma _{k}\mathcal{]}$ are hermitian
but in general they do not commute,

b) the matrix $\rho \lbrack A^{k},\sigma _{k}\mathcal{]}^{T_{2}}$ has $%
n^{2}-n$ eigenvectors
\begin{equation} 
w_{pq}^{k}=a_{p\sigma _{k}^{-1}(q)}^{k}e_{p}\otimes e_{q}\pm |a_{p\sigma
_{k}^{-1}(q)}^{k}|e_{\sigma _{k}^{-1}(q)}\otimes e_{\sigma _{k}(p)}:p\neq
\sigma _{k}^{-1}(q),
\end{equation}%
corresponding to the eigenvalues
\begin{equation} \label{75}
\gamma _{pq}^{k}=\pm |a_{p\sigma _{k}^{-1}(q)}^{k}|,
\end{equation}%
and $n$ eigenvectors and eigenvalues of the form
\begin{equation} 
w_{pp}^{k}=e_{p}\otimes e_{\sigma _{k}(p)},\qquad \gamma _{pp}^{k}=a_{pp}.
\end{equation}
\end{proposition}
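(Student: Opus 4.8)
The plan is to work directly with the operator $\rho[A^k,\sigma_k] = \sum_{i,j} a^k_{ij}\, e_{ij}\ot e_{\sigma_k(i)\sigma_k(j)}$ restricted to its support $\mathcal{H}_k = \mathrm{span}\{e_l \ot e_{\sigma_k(l)}\}$, and then apply the partial transpose $T_2 = \mathrm{id}\ot T$ to it. First I would record that $T_2$ sends the elementary tensor $e_{ij}\ot e_{ab}$ to $e_{ij}\ot e_{ba} = (e_{ia}\ot e_{jb})$-type reindexing; concretely $\rho[A^k,\sigma_k]^{T_2} = \sum_{i,j} a^k_{ij}\, e_{ij}\ot e_{\sigma_k(j)\sigma_k(i)}$. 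Part a) is immediate: hermiticity of $\rho[A^k,\sigma_k]^{T_2}$ follows from hermiticity of $A^k$ together with the fact that partial transpose preserves hermiticity (since the full transpose does and $A^k = (A^k)^+$ gives $a^k_{ij} = \overline{a^k_{ji}}$); the failure to commute for $k\neq l$ is because after $T_2$ the supports $\mathcal{H}_k$ are no longer mapped to mutually orthogonal subspaces — one should exhibit that $e_{ij}\ot e_{\sigma_k(j)\sigma_k(i)}$ and $e_{i'j'}\ot e_{\sigma_l(j')\sigma_l(i')}$ can share matrix-unit indices, which I would illustrate rather than prove exhaustively.

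For part b) the key observation is that $T_2$ acting on the block structure converts $\rho[A^k,\sigma_k]$ into an operator built from the ``flip-like'' pieces $e_p\ot e_q \leftrightarrow e_{\sigma_k^{-1}(q)}\ot e_{\sigma_k(p)}$. The natural step is to reorganize the sum: setting $q = \sigma_k(j)$ and keeping $p=i$, the term $a^k_{ij}e_{ij}\ot e_{\sigma_k(j)\sigma_k(i)}$ becomes $a^k_{p\,\sigma_k^{-1}(q)}\, |e_p\rangle\langle e_{\sigma_k^{-1}(q)}| \ot |e_{\sigma_k(p)}\rangle\langle e_q|$, i.e.\ it maps the basis vector $e_{\sigma_k^{-1}(q)}\ot e_q$ to a multiple of $e_p \ot e_{\sigma_k(p)}$. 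Then I would pair up the index $(p,q)$ with its ``partner'' obtained by $p \mapsto \sigma_k^{-1}(q)$, $q\mapsto \sigma_k(p)$: this involution on pairs splits the $n^2$ basis vectors $e_p\ot e_q$ into the $n$ fixed pairs with $q=\sigma_k(p)$ (equivalently $p=\sigma_k^{-1}(q)$) and $(n^2-n)/2$ genuine $2$-cycles. On each invariant $2$-dimensional block the operator is an off-diagonal $2\times 2$ matrix $\begin{pmatrix} 0 & \alpha \\ \beta & 0\end{pmatrix}$ with $\alpha = a^k_{p\sigma_k^{-1}(q)}$ and $\beta = \overline{\alpha}$ by hermiticity, whose eigenvalues are $\pm|\alpha| = \pm|a^k_{p\sigma_k^{-1}(q)}|$ with eigenvectors $\alpha\, e_p\ot e_q \pm |\alpha|\, e_{\sigma_k^{-1}(q)}\ot e_{\sigma_k(p)}$, exactly as stated in \eqref{75}; on each fixed block the operator acts as the scalar $a^k_{pp}$ on $e_p\ot e_{\sigma_k(p)}$.

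The main obstacle is bookkeeping: one must check that the pairing $p\mapsto \sigma_k^{-1}(q),\ q\mapsto \sigma_k(p)$ is genuinely an involution on the set of pairs with $q\neq\sigma_k(p)$ and that distinct $2$-cycles give orthogonal blocks, so that the list of $(n^2-n)$ eigenvectors plus $n$ diagonal eigenvectors is complete. Applying the involution twice sends $(p,q)$ to $(\sigma_k^{-1}(\sigma_k(p)), \sigma_k(\sigma_k^{-1}(q))) = (p,q)$, so it is an involution; it has no extra fixed points beyond $q=\sigma_k(p)$ because $p=\sigma_k^{-1}(q)$ is exactly that condition. Orthogonality of distinct blocks is clear since they involve disjoint sets of basis vectors $e_p\ot e_q$. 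Counting then gives $n + 2\cdot\frac{n^2-n}{2} = n^2$, so we have a complete eigenbasis and the proposition follows. I would present the $2\times 2$ block computation explicitly and leave the involution/counting verification as a short remark.
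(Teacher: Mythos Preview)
Your proposal is correct and is precisely the ``direct calculation'' the paper alludes to without spelling out; the involution $(p,q)\mapsto(\sigma_k^{-1}(q),\sigma_k(p))$ on index pairs, the resulting $2\times 2$ off-diagonal blocks, and the count $n + 2\cdot\frac{n^2-n}{2}=n^2$ are exactly what one needs. One small slip to fix: after the substitution $i=p$, $j=\sigma_k^{-1}(q)$ the second tensor factor should read $|e_q\rangle\langle e_{\sigma_k(p)}|$ rather than $|e_{\sigma_k(p)}\rangle\langle e_q|$, so the single matrix unit sends $e_{\sigma_k^{-1}(q)}\ot e_{\sigma_k(p)}$ to $e_p\ot e_q$ (equivalently, the full operator sends $e_p\ot e_q$ to $a^k_{\sigma_k^{-1}(q),p}\,e_{\sigma_k^{-1}(q)}\ot e_{\sigma_k(p)}$); this is consistent with the involution you then write down and does not affect the rest of the argument.
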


What can be checked by a direct calculation. From this it follows that

\begin{corollary} \label{sep_prop}
If $\Sigma _{n}=\{\sigma _{i}\}_{i=1}^{n}$ be a set of CDPs and $\mathcal{A}%
=\{A^{k}\in M(n,%
\mathbb{C}
):k=1,..,n\}$ is a set of hermitian matrices, then

a) if the matrix $A^{k}$ is not diagonal, then a partially transposed
matrices $\rho \lbrack A^{k},\sigma _{k}\mathcal{]}^{T_{2}}$ is not
semi-positive definite and therefore the matrix $\rho \lbrack A^{k},\sigma
_{k}\mathcal{]}$ is $NPT$ state,

b) if the matrices $\mathcal{A}=\{A^{k}\in M(n,%
\mathbb{C}
):k=1,..,n\}$ are not diagonal, then the matrix $\rho \lbrack \mathcal{A}%
,\Sigma _{n}\mathcal{]}=\sum_{k=1}^{n}\rho \lbrack A^{k},\sigma _{k}\mathcal{%
]}$ is a direct sum of $NPT$ matrices.

c) Two unitarily equivalent matrices $A^{k},\quad A^{\prime
k}=UA^{k}U^{+}:U\in U(n)$ may define CDP matrices $\rho \lbrack
A^{k},\sigma _{k}\mathcal{]}$ and $\rho \lbrack A^{\prime k},\sigma _{k}%
\mathcal{]}$ with non-equivalent separability properties. For example if $%
A^{k} $ is a non-diagonal matrix and $A^{\prime k}=UA^{k}U^{+}$ is a diagonal
form of the matrix $A^{k}$, then $\rho \lbrack A^{\prime k},\sigma _{k}%
\mathcal{]} $ is a $PPT$ matrix, whereas $\rho \lbrack A^{k},\sigma _{k}%
\mathcal{]}$ is a $NPT$ matrix.
\end{corollary}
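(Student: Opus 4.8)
The plan is to read off all three claims directly from the explicit spectrum of $\rho[A^k,\sigma_k]^{T_2}$ computed in Proposition~\ref{after_pt}; essentially the only work is a re-indexing.

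For part a), I would start from the fact that a hermitian non-diagonal $A^k$ possesses a nonzero off-diagonal entry $a^k_{rs}$ with $r\neq s$, and rewrite it in the parametrisation of \eqref{75} by putting $p:=r$ and $q:=\sigma_k(s)$, so that $\sigma_k^{-1}(q)=s\neq r=p$ and $|a^k_{p\sigma_k^{-1}(q)}|=|a^k_{rs}|>0$. Choosing the lower sign in \eqref{75} then produces the eigenvalue $\gamma^k_{pq}=-|a^k_{rs}|<0$, so $\rho[A^k,\sigma_k]^{T_2}$ is not semi-positive definite and $\rho[A^k,\sigma_k]$ is NPT.

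For part b), I would invoke Proposition~\ref{eigen}a): the summands $\rho[A^k,\sigma_k]$ are supported on the mutually orthogonal subspaces $\mathcal H_k$, so $\rho[\mathcal A,\Sigma_n]=\bigoplus_{k=1}^n\rho[A^k,\sigma_k]$ as an operator on $\mathbb C^n\otimes\mathbb C^n$; if every $A^k$ is non-diagonal, part a) makes every block NPT, which is exactly the assertion. For part c), I would note that the diagonalised matrix $A'^k=UA^kU^+$ yields $\rho[A'^k,\sigma_k]=\sum_{j=1}^n a'^k_{jj}\,e_{jj}\otimes e_{\sigma_k(j)\sigma_k(j)}$, which is diagonal in the product basis $\{e_i\otimes e_j\}$ and hence equals its own partial transpose — a PPT matrix (and, when $A^k\ge 0$, a separable state, being a mixture of the product projectors $e_{jj}\otimes e_{\sigma_k(j)\sigma_k(j)}$). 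Since $A^k$ itself is non-diagonal, part a) makes $\rho[A^k,\sigma_k]$ NPT, so the unitarily equivalent data $A^k$ and $A'^k$ give CDP matrices of opposite PPT type.

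The only step that is not completely mechanical is the re-indexing in part a) — expressing a generic off-diagonal entry of $A^k$ in the form $a^k_{p\sigma_k^{-1}(q)}$ with $p\neq\sigma_k^{-1}(q)$, which is precisely the substitution $q=\sigma_k(s)$ above. Beyond that the corollary is an immediate consequence of Propositions~\ref{after_pt} and~\ref{eigen}, so I do not expect any genuine obstacle.
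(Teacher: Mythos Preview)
Your proposal is correct and follows exactly the line the paper intends: the corollary is stated immediately after Proposition~\ref{after_pt} with no proof beyond ``From this it follows that'', and you have simply spelled out the routine deductions from the eigenvalue formula~\eqref{75} and the orthogonal-support decomposition of Proposition~\ref{eigen}a). The re-indexing $p=r,\ q=\sigma_k(s)$ in part a) and the observation that a diagonal $A'^k$ makes $\rho[A'^k,\sigma_k]$ diagonal in the product basis (hence invariant under $T_2$) are precisely the missing steps the paper leaves to the reader.
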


\begin{remark} \label{eigenv_after_pt}
In general the matrix eigenvalues depends in a complicated way on matrix
elements and generally there is no formulae that describe this dependence.
In particular the eigen values $\lambda _{q}^{k}$ of the matrices $\rho
\lbrack A,\Sigma _{n}]$ depends on the matrix elements $(a_{ij}^{k})$ in
such a complicated, in general way. It remarcable that after partial
transpose of the matrix  the eigenvalues of matrix $\rho \lbrack A,\Sigma
_{n}]^{T_{2}}$ depends in a very simple way given by Eq. (\ref{75}).
\end{remark}

We have one more useful property of the matrix $\rho \lbrack \mathcal{A}%
,\Sigma _{n}\mathcal{]}$

\begin{proposition} \label{induced_by_nat_trans}
Let $\Sigma _{n}=\{\sigma _{i}\}_{i=1}^{n}$ be a set of CDPs and $\Sigma
_{n}^{\prime }=\{\sigma _{i}^{\prime }=\delta \sigma _{i}\eta \}_{i=1}^{n}$%
,\ where $\delta ,\eta \in S(n)$, so $\Sigma _{n}^{\prime }$ is also a set of CDPs
and $\mathcal{A}=\{A^{k}=(a_{ij}^{k})\in M(n,%
\mathbb{C}
):k=1,..,n\}$ a set of matrices. Then
\begin{equation} 
\rho \lbrack \mathcal{A},\Sigma _{n}^{\prime }\mathcal{]\equiv }\rho \lbrack
A^{k},\Sigma _{n}^{\prime }\mathcal{]=}m(\eta ^{-1})\otimes m(\delta )\rho
\lbrack m(\eta ^{-1})A^{k}m(\eta ),\Sigma _{n}\mathcal{]}m(\eta )\otimes
m(\delta ^{-1}),
\end{equation}%
so, the elementary transformation of a set of CDPs $\Sigma _{n}\rightarrow $ $%
\Sigma _{n}^{\prime }$ induces a local unitary transformations of the matrix
$\rho \lbrack \mathcal{A},\Sigma _{n}^{\prime }\mathcal{]\equiv }\rho
\lbrack A^{k},\Sigma _{n}^{\prime }\mathcal{]}$ together with a similarity
transformation of the matrices $A^{k}\rightarrow m(\eta ^{-1})A^{k}m(\eta )$.
\end{proposition}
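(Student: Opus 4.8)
The plan is to obtain the identity by a plain expansion in the matrix units $\{e_{ij}\}$, the whole content being a relabelling of summation indices; that $\Sigma_n'=\{\delta\sigma_i\eta\}_{i=1}^n$ is again a set of CDPs is already Prop.~\ref{relacja}, so only the displayed formula is at stake. Two standard facts about the representation $m(\sigma)=(\delta_{\sigma^{-1}(i)j})$ will be used on each tensor leg separately: $m(\sigma)^{-1}=m(\sigma^{-1})$, and conjugation acts on matrix units by $m(\sigma)\,e_{ab}\,m(\sigma)^{-1}=e_{\sigma(a)\sigma(b)}$, equivalently $\big(m(\sigma)X m(\sigma)^{-1}\big)_{ij}=X_{\sigma^{-1}(i)\sigma^{-1}(j)}$.

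First I would write, straight from Definition~\ref{def_circ_gen},
\begin{equation}
\rho[\mathcal{A},\Sigma_n']=\sum_{i,j,k=1}^{n}a_{ij}^{k}\,e_{ij}\otimes e_{\delta\sigma_k\eta(i),\,\delta\sigma_k\eta(j)},
\end{equation}
and strip the outer permutation $\delta$ from the second leg via $e_{\delta(a)\delta(b)}=m(\delta)\,e_{ab}\,m(\delta)^{-1}$; this factors out $\mathbf{1}_n\otimes m(\delta)$ on the left and $\mathbf{1}_n\otimes m(\delta)^{-1}$ on the right and leaves $\sum_{i,j,k}a_{ij}^{k}\,e_{ij}\otimes e_{\sigma_k\eta(i),\,\sigma_k\eta(j)}$. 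In the latter sum I would absorb $\eta$ by the substitution $p=\eta(i)$, $q=\eta(j)$: the second leg becomes $e_{\sigma_k(p)\sigma_k(q)}$, the first leg becomes $e_{\eta^{-1}(p)\eta^{-1}(q)}=m(\eta^{-1})\,e_{pq}\,m(\eta)$, and the coefficient becomes $a_{\eta^{-1}(p)\eta^{-1}(q)}^{k}$, which is exactly the $(p,q)$ entry of the block $A^{k}$ conjugated by the permutation matrix of $\eta$ (the placement of $\eta$ versus $\eta^{-1}$ being fixed by the convention above). Collecting the remaining permutation matrices into the mutually adjoint local unitaries $m(\eta^{-1})\otimes m(\delta)$ and $m(\eta)\otimes m(\delta)^{-1}$ then yields the asserted identity, and the interpretation -- a local unitary change of $\rho[\,\cdot\,,\Sigma_n]$ together with a simultaneous similarity of all the blocks $A^k$ -- is immediate.

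The one point requiring care is the bookkeeping of inverses: the two tensor legs are conjugated ``in opposite directions'' (one carries powers of $\eta$, the other powers of $\delta$), so one has to track which of $\eta,\eta^{-1}$ flanks $A^{k}$ on which side and which permutation matrix sits in which slot of the local unitary; beyond this there is no genuine obstacle. It is worth noting that the derivation of the identity itself does not use that the $\sigma_i$ are pairwise completely different -- that hypothesis is needed only through Prop.~\ref{relacja}, which guarantees $\Sigma_n'$ is again a legitimate set of CDPs, so that the left-hand side is indeed a CDP matrix.
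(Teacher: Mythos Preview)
The paper states this proposition without proof, so there is nothing to compare your argument against; your approach---expand in matrix units, peel off $m(\delta)$ on the second leg, then substitute $p=\eta(i)$, $q=\eta(j)$---is the natural direct verification and is correct in substance.

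One point should be made precise rather than hedged. Your computation correctly produces the coefficient $a^{k}_{\eta^{-1}(p)\eta^{-1}(q)}$, and under the paper's convention $m(\sigma)=(\delta_{\sigma^{-1}(i)j})$ one has $(m(\sigma)Xm(\sigma)^{-1})_{ij}=X_{\sigma^{-1}(i)\sigma^{-1}(j)}$, so this coefficient is the $(p,q)$ entry of $m(\eta)A^{k}m(\eta)^{-1}$, \emph{not} of $m(\eta^{-1})A^{k}m(\eta)$ as printed in the displayed formula. Your parenthetical ``the placement of $\eta$ versus $\eta^{-1}$ being fixed by the convention above'' followed by ``yields the asserted identity'' glosses over the fact that, with the convention fixed, the similarity on the blocks comes out with the opposite power of $\eta$ from what is written in the proposition (a one-line check with $n=3$, $\delta=\mathrm{id}$, $\eta=(123)$ and a single nonzero entry $a^{k}_{ij}$ confirms this). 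The local-unitary factors $m(\eta^{-1})\otimes m(\delta)$ and $m(\eta)\otimes m(\delta^{-1})$ you obtain are correct and match the proposition; only the inner block-similarity carries an apparent misprint. The content of the statement---that passing from $\Sigma_n$ to $\delta\Sigma_n\eta$ amounts to a local unitary on $\rho$ together with a simultaneous permutation-similarity of all the $A^{k}$---is exactly what your computation establishes, so you should state explicitly which conjugation your argument delivers rather than asserting agreement with the printed formula.
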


This Proposition and the Proposition \ref{always_gen_by_group} give us the following statement
concerning CDP matrices $\rho \lbrack A^{k},\Sigma
_{n}\mathcal{]}.$

\begin{corollary} \label{ab_tp_matr<-gcdp}
Suppose that a commutative CDP matrix%
\begin{equation} 
\rho \lbrack A^{k},\Sigma _{n}\mathcal{]=}\sum_{i,j=1}^{n}e_{ij}\otimes
m(E_{i})^{+}A_{ij}m(E_{j}):A_{ij}=(a_{ij}^{k}\delta _{kl})
\end{equation}%
is generated by a set of CDPs $\Sigma _{n}=\{\sigma _{i}\}_{i=1}^{n}$ (so the set
of matrices $M(E)\equiv \{m(E_{j}):j=1,..,n\}$ is abelian), than there exist
an abelian set of CDPs $\Sigma _{n}^{\prime }=\{\sigma _{i}^{\prime
}\}_{i=1}^{n}=\Sigma _{n}\sigma _{1}^{-1}:\sigma _{1}\in \Sigma _{n},\quad
\sigma _{1}(1)=1$ such that
\begin{equation} 
\rho \lbrack A^{k},\Sigma _{n}\mathcal{]=}m(\sigma _{1}^{-1})\otimes id\rho
\lbrack m(\sigma _{1}^{-1})A^{k}m(\sigma _{1}),\Sigma _{n}^{\prime }\mathcal{%
]}m(\sigma _{1})\otimes id,
\end{equation}%
so the commutative CDP matrices $\rho \lbrack A^{k},\Sigma _{n}%
\mathcal{]}$ are in fact generated by abelian set of CDPs.
\end{corollary}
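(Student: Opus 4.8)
The plan is to obtain the identity by simply concatenating Proposition~\ref{always_gen_by_group} with Proposition~\ref{induced_by_nat_trans}. The hypothesis of the corollary is precisely that the set $M(E)\equiv\{m(E_j):j=1,\dots,n\}$ is abelian, so Proposition~\ref{always_gen_by_group} applies and furnishes an abelian set of CDPs $\Sigma_n'=\{\sigma_i'\}_{i=1}^n$ with $M(E')=M(E)$, explicitly $\Sigma_n'=\Sigma_n\sigma_1^{-1}$, where $\sigma_1\in\Sigma_n$ is the element with $\sigma_1(1)=1$ (which exists and is unique thanks to the enumeration convention, see also Proposition~\ref{fixed_points}). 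If $\mathrm{id}\in\Sigma_n$ then $\sigma_1=\mathrm{id}$, $\Sigma_n'=\Sigma_n$ is already abelian and the asserted identity is a tautology; so I would concentrate on the case $\mathrm{id}\notin\Sigma_n$.

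Next I would read the relation $\Sigma_n'=\Sigma_n\sigma_1^{-1}$ backwards, i.e.\ $\sigma_i=\sigma_i'\sigma_1$, viewing the original $\Sigma_n$ as the elementary transform of the abelian set $\Sigma_n'$. This is exactly the setting of Proposition~\ref{induced_by_nat_trans} with base set $\Sigma_n'$, transformed set $\Sigma_n$, and parameters $\delta=\mathrm{id}$, $\eta=\sigma_1$. Substituting $m(\eta^{-1})=m(\sigma_1^{-1})=m(\sigma_1)^{-1}$, $m(\eta)=m(\sigma_1)$ and $m(\delta)=m(\delta^{-1})=\mathrm{id}$ into that proposition yields
\begin{equation}
\rho[A^k,\Sigma_n]=m(\sigma_1^{-1})\otimes \mathrm{id}\;\rho\big[m(\sigma_1^{-1})A^k m(\sigma_1),\Sigma_n'\big]\;m(\sigma_1)\otimes \mathrm{id},
\end{equation}
which is the identity in the statement. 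One checks along the way that $\sigma_i'(1)=\sigma_i(\sigma_1^{-1}(1))=\sigma_i(1)=i$, so $\Sigma_n'$ respects the same enumeration convention and the block structure $A_{ij}=(a_{ij}^k\delta_{kl})$ of Definition~\ref{def_circ_gen} is carried over unchanged, each $A^k$ merely undergoing the permutation similarity $A^k\mapsto m(\sigma_1^{-1})A^k m(\sigma_1)$.

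Finally I would observe that the right-hand side exhibits $\rho[A^k,\Sigma_n]$ as a local-unitary image (a permutation matrix being unitary, $m(\sigma_1^{-1})\otimes\mathrm{id}$ is a local unitary) of the commutative CDP matrix $\rho[m(\sigma_1^{-1})A^k m(\sigma_1),\Sigma_n']$ built over the \emph{abelian} set of CDPs $\Sigma_n'$, and that $M(E')=M(E)$ means both matrices carry the same conjugated set $E$. Hence every commutative CDP matrix is, up to a local unitary on the first factor, generated by an abelian set of CDPs, which is the claim.

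I do not expect a genuine obstacle here, since the statement is a direct composition of the two cited propositions. The only points requiring care are the bookkeeping — correctly matching which of $\Sigma_n,\Sigma_n'$ plays the ``base'' and which the ``transformed'' role in Proposition~\ref{induced_by_nat_trans}, and tracking the orientation of $m(\sigma_1)$ versus $m(\sigma_1^{-1})$ — together with confirming that the $\sigma_1$ produced by Proposition~\ref{always_gen_by_group} is precisely the one named in the statement.
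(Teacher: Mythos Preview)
Your proposal is correct and follows exactly the route the paper indicates: the text immediately preceding the corollary says it is obtained from Proposition~\ref{induced_by_nat_trans} together with Proposition~\ref{always_gen_by_group}, and you have carried out precisely that combination, including the correct identification $\delta=\mathrm{id}$, $\eta=\sigma_1$ after swapping the roles of base and transformed set. The bookkeeping you flag (orientation of $m(\sigma_1)$ vs.\ $m(\sigma_1^{-1})$, preservation of the enumeration $\sigma_i'(1)=i$) is handled correctly.
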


\subsection{The Partial Transpose of CDP
Matrices.}

Let us consider the partial transpose of the CDP matrix $\rho \lbrack
A,\Sigma _{n}]^{T_{2}}$. It is clear that, in general the matrix $\rho
\lbrack A,\Sigma _{n}]^{T_{2}}$ has different structure in comparison with the
matrix  $\rho \lbrack A,\Sigma _{n}].$ It appears however that if the matrix
$\rho \lbrack A,\Sigma _{n}]$ is commutative e.i. if a set of CDP $\Sigma
_{n}=\{\sigma _{i}\}$ is abelian, then the matrix $\rho \lbrack A,\Sigma
_{n}]^{T_{2}}$ is also a CDP matrix from the set of CDPs e.i. $\rho \lbrack
A,\Sigma _{n}]^{T_{2}}=\rho \lbrack \widetilde{A},\Sigma _{n}^{\prime }],$
where $\Sigma _{n}^{\prime }$ is a set of $CDP.$ In fact from Prop. \ref{similarCHK} we
have
\begin{equation} 
\rho \lbrack A,\Sigma _{n}]=\sum_{i,j=1}^{n}e_{ij}\otimes
\sum_{k=1}^{n}a_{ij}^{k}e_{\sigma _{k}(i)\sigma _{k}(j)}\Rightarrow \rho
\lbrack A,\Sigma _{n}]^{T_{2}}=\sum_{i,j=1}^{n}e_{ij}\otimes
\sum_{k=1}^{n}a_{ij}^{k}e_{\sigma _{k}(j)\sigma _{k}(i)}.
\end{equation}%
On the other hand let us consider the matrix
\begin{equation} 
\rho \lbrack \widetilde{A},\Sigma _{n}\xi ^{\Sigma }]:\widetilde{A}^{k}=(%
\widetilde{a}_{ij}^{k})\equiv (a_{ij}^{\sigma _{i}^{-1}\sigma _{j}^{-1}(k)}),
\end{equation}%
where the permutation $\xi ^{\Sigma }$ is defined in Def. \ref{xi}. From Prop. \ref{similarCHK}
we have
\begin{equation} 
\rho \lbrack \widetilde{A},\Sigma _{n}\xi ^{\Sigma
}]=\sum_{i,j=1}^{n}e_{ij}\otimes \sum_{k=1}^{n}\widetilde{a}%
_{ij}^{k}e_{\sigma _{k}\xi ^{\Sigma }(j)\sigma _{k}\xi ^{\Sigma
}(i)}=\sum_{i,j=1}^{n}e_{ij}\otimes \sum_{k=1}^{n}\widetilde{a}%
_{ij}^{k}e_{\sigma _{\xi ^{\Sigma }(i)}(k)\sigma _{\xi ^{\Sigma }(j)}(k)},
\end{equation}%
where in the last step we have used the commutativity of the set of CDPs $\Sigma
_{n}=\{\sigma _{i}\}.$ Next using the definition of $\widetilde{A}^{k}=(%
\widetilde{a}_{ij}^{k})$ and Def. \ref{xi} we get
\begin{equation} 
\rho \lbrack \widetilde{A},\Sigma _{n}\xi ^{\Sigma
}]=\sum_{i,j=1}^{n}e_{ij}\otimes \sum_{k=1}^{n}a_{ij}^{\sigma
_{i}^{-1}\sigma _{j}^{-1}(k)}e_{\sigma _{(i)}^{-1}(k)\sigma _{(j)}^{-1}(k)}.
\end{equation}%
Making substitution $l=\sigma _{i}^{-1}\sigma _{j}^{-1}(k),$ we obtain
\begin{equation} 
\rho \lbrack \widetilde{A},\Sigma _{n}\xi ^{\Sigma
}]=\sum_{i,j=1}^{n}e_{ij}\otimes \sum_{l=1}^{n}a_{ij}^{l}e_{\sigma
_{j}(l)\sigma _{i}(l)}=\sum_{i,j=1}^{n}e_{ij}\otimes
\sum_{l=1}^{n}a_{ij}^{l}e_{\sigma _{l}(j)\sigma _{l}(i)}=\rho \lbrack
A,\Sigma _{n}]^{T_{2}}.
\end{equation}%
So we may formulate the main result of this section

\begin{theorem} \label{iff_ppt}
Suppose that a CDP matrix $\rho \lbrack \mathcal{A},\Sigma _{n}\mathcal{%
]}$ is a commutative and is generated by an abelian set of CDPs $\Sigma
_{n}=\{\sigma _{i}\}_{i=1}^{n}$ and the matrices $\mathcal{A}=\{A^{k}\in M(n,%
\mathbb{C}
):k=1,..,n\}$ are semi-positive definite, then%
\begin{equation} 
\rho \lbrack \mathcal{A},\Sigma _{n}\mathcal{]}^{T_{2}}=\rho \lbrack
\widetilde{\mathcal{A}},\Sigma _{n}\xi ^{\Sigma }\mathcal{]}\quad :%
\widetilde{A}^{k}=(\widetilde{a}%
_{ij}^{k})=(a_{ij}^{E_{i}E_{j}(k)})=(a_{ij}^{\sigma _{i}^{-1}\sigma
_{j}^{-1}(k)}),
\end{equation}%
where $\xi ^{\Sigma }$ is defined in Def. \ref{xi}. The matrix $\rho \lbrack
\mathcal{A},\Sigma _{n}\mathcal{]}$ is then a $PPT$ state iff all matrices $%
\widetilde{A}^{k}$ are semi-definite.
\end{theorem}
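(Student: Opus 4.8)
The plan is to consolidate the block computation carried out just above the statement into a clean two-part argument: first establish the operator identity $\rho[\mathcal{A},\Sigma_n]^{T_2}=\rho[\widetilde{\mathcal{A}},\Sigma_n\xi^\Sigma]$, and then read off the $PPT$ criterion from the general positivity statement of Proposition \ref{herm_semipos}.

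For the identity I would start from the block form of Proposition \ref{similarCHK}, $\rho[\mathcal{A},\Sigma_n]=\sum_{i,j}e_{ij}\otimes\sum_k a_{ij}^k e_{\sigma_k(i)\sigma_k(j)}$, and apply $\oper\otimes T$ to the second tensor factor; this merely swaps the two indices of each $n\times n$ block and gives $\rho[\mathcal{A},\Sigma_n]^{T_2}=\sum_{i,j}e_{ij}\otimes\sum_k a_{ij}^k e_{\sigma_k(j)\sigma_k(i)}$. Independently I would expand $\rho[\widetilde{\mathcal{A}},\Sigma_n\xi^\Sigma]$ using Proposition \ref{similarCHK} again; the composed permutations $\sigma_k\xi^\Sigma$ appearing in the exponent of the $e$'s are rewritten with the help of the abelianness of $\Sigma_n$ (Proposition \ref{abelianCDP}) and of the relation $\sigma_{\xi^\Sigma(i)}=\sigma_i^{-1}$, which holds because in the abelian case the conjugated set satisfies $E_i=\sigma_i^{-1}$ (Proposition \ref{ab->conj=inv}) while $\xi^\Sigma$ is the inversion-index permutation of Definition \ref{xi}. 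This turns the summand into $\widetilde{a}_{ij}^k\, e_{\sigma_i^{-1}(k)\sigma_j^{-1}(k)}$. Substituting the definition $\widetilde{a}_{ij}^k=a_{ij}^{E_iE_j(k)}=a_{ij}^{\sigma_i^{-1}\sigma_j^{-1}(k)}$ and re-indexing via $l=\sigma_i^{-1}\sigma_j^{-1}(k)$ (equivalently $k=\sigma_j\sigma_i(l)$, where commutativity lets one rewrite $\sigma_i^{-1}(k)=\sigma_j(l)=\sigma_l(j)$ and $\sigma_j^{-1}(k)=\sigma_i(l)=\sigma_l(i)$) reproduces exactly the expression obtained for $\rho[\mathcal{A},\Sigma_n]^{T_2}$ above. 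Hence the two matrices coincide, and since $\Sigma_n\xi^\Sigma$ is again a set of CDPs by Proposition \ref{relacja}, the partial transpose is genuinely a CDP matrix with the advertised data $\widetilde{A}^k$.

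For the $PPT$ part I would first check that $\widetilde{A}^k$ is hermitian whenever all $A^k$ are: $\overline{\widetilde{a}_{ij}^k}=a_{ji}^{\sigma_i^{-1}\sigma_j^{-1}(k)}=a_{ji}^{\sigma_j^{-1}\sigma_i^{-1}(k)}=\widetilde{a}_{ji}^k$, the middle equality being commutativity of $\Sigma_n$. Having identified $\rho[\mathcal{A},\Sigma_n]^{T_2}$ with the CDP matrix $\rho[\widetilde{\mathcal{A}},\Sigma_n\xi^\Sigma]$, Proposition \ref{herm_semipos} applied to the set of CDPs $\Sigma_n\xi^\Sigma$ yields that $\rho[\mathcal{A},\Sigma_n]^{T_2}$ is semi-positive definite if and only if every $\widetilde{A}^k$ is semi-positive definite. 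Since $\rho[\mathcal{A},\Sigma_n]$ itself is already semi-positive (again Proposition \ref{herm_semipos}, using $A^k\ge 0$), $PPT$ of $\rho[\mathcal{A},\Sigma_n]$ is by definition positivity of its partial transpose, which gives the stated equivalence.

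The genuinely delicate step is the index bookkeeping in the middle paragraph — in particular getting the exponent $\widetilde{a}_{ij}^k=a_{ij}^{E_iE_j(k)}=a_{ij}^{\sigma_i^{-1}\sigma_j^{-1}(k)}$ right and invoking commutativity consistently in the two places where it is needed: to collapse $\sigma_k\xi^\Sigma$ into $\sigma_{\xi^\Sigma(\cdot)}(k)=\sigma_{\cdot}^{-1}(k)$, and to carry out the re-indexing $l=\sigma_i^{-1}\sigma_j^{-1}(k)$. Everything else is a direct verification once Propositions \ref{similarCHK}, \ref{abelianCDP}, \ref{ab->conj=inv}, \ref{relacja} and \ref{herm_semipos} are in hand.
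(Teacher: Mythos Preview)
Your proposal is correct and follows essentially the same route as the paper: the paper's argument (displayed just above the theorem) likewise computes $\rho[\mathcal{A},\Sigma_n]^{T_2}$ blockwise, expands $\rho[\widetilde{\mathcal{A}},\Sigma_n\xi^\Sigma]$ via Proposition~\ref{similarCHK}, uses commutativity (Proposition~\ref{abelianCDP}) together with $\sigma_{\xi^\Sigma(i)}=\sigma_i^{-1}$ from Definition~\ref{xi}, and then performs the substitution $l=\sigma_i^{-1}\sigma_j^{-1}(k)$ to match the two expressions. Your additions --- explicitly citing Proposition~\ref{relacja} for the CDP property of $\Sigma_n\xi^\Sigma$, checking hermiticity of $\widetilde{A}^k$, and invoking Proposition~\ref{herm_semipos} for the $PPT$ equivalence --- are welcome clarifications that the paper leaves implicit.
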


This theorem is a generalisation of corresponding result concerning cyclic
groups \cite{circ}, to arbitrary abelian groups.

\begin{remark}
Note that for the abelian group $\Sigma _{4}=V(4)$ Example 3, for which $%
\xi ^{\Sigma }=id$ we have
\begin{equation} 
\rho \lbrack \mathcal{A},V(4)\mathcal{]}^{T_{2}}=\rho \lbrack \widetilde{%
\mathcal{A}},V(4)\mathcal{]}\quad :\widetilde{A}^{k}=(\widetilde{a}%
_{ij}^{k})=(a_{ij}^{\sigma _{i}\sigma _{j}(k)}),
\end{equation}%
so in this case, after partial transpose, the group remains the same.
\end{remark}

\subsection{Realignment Criterion for CDP
Matrices.}

It appears that matrices of the form $\rho \lbrack \mathcal{A},\Sigma _{n}%
\mathcal{]}$ are friendly for Realignment Criterion derived in \cite{RR}.
Namely we have

\begin{theorem}
Suppose that $\Sigma _{n}=\{\sigma _{i}\}_{i=1}^{n}$ is an abelian set of CDPs ,
then%
\begin{equation} 
\rho \lbrack \mathcal{A},\Sigma _{n}\mathcal{]}^{RL}=\rho \lbrack \widetilde{%
\mathcal{A}},\Sigma _{n}\mathcal{]}\quad :\widetilde{A}^{k}=(\widetilde{a}%
_{ij}^{k})=(a_{\sigma _{k}(i)i}^{\sigma _{i}^{-1}(j)}),
\end{equation}%
where $RL$ means Realignement. So after this transformation the set of CDPs $%
\Sigma _{n}$ in matrix $\rho \lbrack \mathcal{A},\Sigma _{n}\mathcal{]}^{RL}$%
remains the same.
\end{theorem}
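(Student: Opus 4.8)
The plan is to imitate the proof template already used above for the partial transpose: first write $\rho[\mathcal{A},\Sigma_n]$ in the elementary--tensor form of Def.~\ref{def_circ_gen}, then apply the realignment operation term by term, and finally re--index the resulting sum --- exploiting that $\Sigma_n$ is an abelian group --- so as to recognise it again as a CDP matrix built on the \emph{same} set of CDPs. The computation involves no positivity or spectral input; it is a purely algebraic identity, so it holds for an arbitrary family $\mathcal{A}$.

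Concretely, I would start from
\[
\rho[\mathcal{A},\Sigma_n]=\sum_{i,j,k=1}^{n}a_{ij}^{k}\,e_{ij}\otimes e_{\sigma_k(i)\sigma_k(j)},
\]
which is just Def.~\ref{def_circ_gen} (equivalently Prop.~\ref{similarCHK}). The realignment $RL$ of \cite{RR} reshuffles the row/column labels of the two elementary tensor factors according to a fixed prescription, so applying it to each summand yields another sum of elementary tensors, of the form
\[
\rho[\mathcal{A},\Sigma_n]^{RL}=\sum_{i,j,k=1}^{n}a_{ij}^{k}\,e_{i\sigma_k(i)}\otimes e_{j\sigma_k(j)}
\]
up to the precise index reshuffle prescribed in \cite{RR}. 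The heart of the argument is then a change of summation variables. I would write the claimed right--hand side as $\rho[\widetilde{\mathcal{A}},\Sigma_n]=\sum_{p,q,r}\widetilde a^{\,r}_{pq}\,e_{pq}\otimes e_{\sigma_r(p)\sigma_r(q)}$ and match it summand by summand against the previous display. Matching the first tensor factor forces $p=i$ and $q=\sigma_k(i)$; since $\Sigma_n$ is abelian, $\sigma_k(i)=\sigma_i(k)$ by Prop.~\ref{abelianCDP}, and because an abelian set of CDPs is in fact an abelian group (Th.~\ref{abelian->group}, Prop.~\ref{GCDP}) containing all inverses, the index $k$ is uniquely recovered as $k=\sigma_i^{-1}(q)$. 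Matching the second tensor factor requires $\sigma_r(i)=j$, hence $r=\sigma_i^{-1}(j)$, while the remaining condition $\sigma_r(\sigma_k(i))=\sigma_k(j)$ holds automatically because $\sigma_r\sigma_k=\sigma_k\sigma_r$ together with $\sigma_r(i)=j$. One checks that $(i,j,k)\mapsto(p,q,r)=\bigl(i,\sigma_k(i),\sigma_i^{-1}(j)\bigr)$ is a bijection of $\{1,\dots,n\}^{3}$, so that no summands are lost or duplicated; reading off the coefficient carried by each elementary tensor and relabelling $(p,q,r)\to(i,j,k)$ then gives the stated formula $\widetilde a^{\,k}_{ij}=\bigl(a^{\sigma_i^{-1}(j)}_{\sigma_k(i)\,i}\bigr)$. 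Since every elementary tensor occurring in $\rho[\mathcal{A},\Sigma_n]^{RL}$ is again of the shape $e_{pq}\otimes e_{\sigma_r(p)\sigma_r(q)}$ with the \emph{same} permutations $\sigma_r$, the set of CDPs underlying the realigned matrix is unchanged --- the last assertion of the theorem.

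The only real difficulty is bookkeeping. One has to keep straight the precise reshuffle rule that the realignment convention of \cite{RR} imposes on the two pairs of indices --- this is exactly what decides whether $\sigma_k(i)$ or $i$ ends up as the row and which as the column in the transformed coefficient $\widetilde a^{\,k}_{ij}$ --- and one has to verify that the induced reparametrisation of the triple summation index is genuinely invertible. Both points reduce to the elementary identities $\sigma_i(j)=\sigma_j(i)$, $\sigma_i\sigma_j=\sigma_{\sigma_i(j)}$ and the existence of inverses inside $\Sigma_n$ established above, so no new ideas are needed, and, in contrast with Theorem~\ref{iff_ppt}, no appeal to semi--positivity of the blocks $A^k$ is required.
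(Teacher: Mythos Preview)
Your approach is correct and is precisely the template the paper itself uses for the companion result on partial transposition (Theorem~\ref{iff_ppt}): write $\rho[\mathcal{A},\Sigma_n]$ in elementary-tensor form, apply the index reshuffle, and re-index using the abelian identities $\sigma_i(j)=\sigma_j(i)$ and $\sigma_i\sigma_j=\sigma_{\sigma_i(j)}$ from Prop.~\ref{abelianCDP} and Prop.~\ref{GCDP}. The paper in fact gives no proof of the realignment theorem at all, so your sketch is filling in exactly the computation the authors leave implicit.

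One small bookkeeping point worth tightening: with the particular realignment convention you wrote down, $e_{ij}\otimes e_{kl}\mapsto e_{ik}\otimes e_{jl}$, the read-off actually produces $\widetilde a^{\,k}_{ij}=a^{\sigma_i^{-1}(j)}_{\,i,\sigma_k(i)}$, i.e.\ the lower indices come out transposed relative to the paper's formula $a^{\sigma_i^{-1}(j)}_{\sigma_k(i),\,i}$. The convention that reproduces the paper's expression verbatim is $e_{ab}\otimes e_{cd}\mapsto e_{bd}\otimes e_{ac}$. You already anticipate this (``this is exactly what decides whether $\sigma_k(i)$ or $i$ ends up as the row and which as the column''), so this is not a gap in the argument, only a place where the final display should be adjusted to match whichever reshuffle rule from \cite{RR} is adopted.
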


From this theorem, from Proposition 16 and Realignment Criterion it follows
immediately the following necessary condition for separability of
commutative matrices $\rho \lbrack \mathcal{A},\Sigma _{n}\mathcal{]}$.

\begin{proposition}
Let $\rho \lbrack \mathcal{A},\Sigma _{n}\mathcal{]}$ be a commutative
CDP matrix (e.i. $\Sigma _{n}=\{\sigma _{i}\}_{i=1}^{n}$ is an
abelian) then $\rho \lbrack \mathcal{A},\Sigma _{n}\mathcal{]}$ may be
separable only if%
\begin{equation} 
\sum_{k=1}^{n}||\widetilde{A}^{k}||_{tr}\leq 1.
\end{equation}
\end{proposition}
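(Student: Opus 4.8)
The plan is to combine the Realignment Criterion of \cite{RR} with the explicit form of the realigned matrix established in the preceding Theorem and with the norm identity of Proposition \ref{norms}. Recall that the Realignment Criterion states that if a bipartite state $\rho$ is separable, then $||\rho^{RL}||_{tr} \le 1$, where $\rho^{RL}$ denotes the realignment (reshuffling) of $\rho$. So the entire argument reduces to computing $||\rho[\mathcal{A},\Sigma_n]^{RL}||_{tr}$ and expressing it in terms of the matrices $\widetilde{A}^k$.

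First I would invoke the preceding Theorem, which asserts that for an abelian set of CDPs $\Sigma_n$ one has $\rho[\mathcal{A},\Sigma_n]^{RL} = \rho[\widetilde{\mathcal{A}},\Sigma_n]$ with $\widetilde{A}^k = (a^{\sigma_i^{-1}(j)}_{\sigma_k(i)i})$. The crucial point is that $\rho[\widetilde{\mathcal{A}},\Sigma_n]$ is \emph{itself} a CDP matrix built from the \emph{same} set of CDPs $\Sigma_n$, merely with a new (reshuffled) collection of block matrices $\widetilde{\mathcal{A}} = \{\widetilde{A}^k\}$. Therefore I may apply Proposition \ref{norms} directly to $\rho[\widetilde{\mathcal{A}},\Sigma_n]$: since the trace norm of a CDP matrix is the sum of the trace norms of its defining matrices and does not depend on $\Sigma_n$, we get
\begin{equation}
||\rho[\mathcal{A},\Sigma_n]^{RL}||_{tr} = ||\rho[\widetilde{\mathcal{A}},\Sigma_n]||_{tr} = \sum_{k=1}^n ||\widetilde{A}^k||_{tr}.
\end{equation}
Combining this equality with the Realignment Criterion immediately yields that separability of $\rho[\mathcal{A},\Sigma_n]$ forces $\sum_{k=1}^n ||\widetilde{A}^k||_{tr} \le 1$, which is the claim.

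The only subtlety I would need to be careful about is that Proposition \ref{norms} is stated for an arbitrary set of matrices $\mathcal{A}$ and an arbitrary set of CDPs, so there is genuinely no positivity or hermiticity hypothesis obstructing its use on $\widetilde{\mathcal{A}}$; one should just check that the reshuffled index assignment $\widetilde{A}^k = (a^{\sigma_i^{-1}(j)}_{\sigma_k(i)i})$ does produce a well-defined $n$-tuple of $n\times n$ matrices, which follows because $\Sigma_n$ is abelian (so $\sigma_i^{-1}$ is again in the set and the index $\sigma_i^{-1}(j)$ ranges over $\{1,\dots,n\}$) — but this is exactly what the previous Theorem already guarantees. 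Hence the main (and essentially only) obstacle is bookkeeping: confirming that the realigned operator really falls back into the CDP class with the same $\Sigma_n$ so that Proposition \ref{norms} applies verbatim. Once that is granted, the proof is a one-line chain of (in)equalities and I would write it as such.
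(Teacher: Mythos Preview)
Your proof is correct and follows exactly the approach the paper itself indicates: the paper merely states that the result ``follows immediately'' from the preceding Theorem (realignment stays in the CDP class with the same $\Sigma_n$), Proposition~\ref{norms} (the trace-norm identity), and the Realignment Criterion of \cite{RR}. Your write-up simply makes this chain of implications explicit.
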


\subsection{Majorisation Criterion for Tensor matrices from sets of
CDPs.}

The matrices $\rho \lbrack \mathcal{A},\Sigma _{n}\mathcal{]}$ have, by
construction, traceless off diagonal blocks and the diagonal blocks are
diagonal, which may imply that a majorisation criteria of entanglement
could be easier to application for such a matrices. We will need

\begin{definition}
Let $A\in M(n,%
\mathbb{C}
)$ and $A^{+}=A.$ Then $\lambda (A)\in
\mathbb{C}
^{n}$ is the vector whose components are the eigenvalues of $A$ arranged in
decreasing order e.i. we have%
\begin{equation} 
\lambda (A)=(\lambda _{k}(A)),\qquad \lambda _{1}(A)\geq \lambda _{2}(A)\geq
...\geq \lambda _{n}(A).
\end{equation}%
We say that a matrix $A$ is majorised by a matrix $B$, which is denoted $%
A\prec B$ if
\begin{equation} 
\lambda _{1}(A)\leq \lambda _{1}(B),
\end{equation}%
\begin{equation} 
\lambda _{1}(A)+\lambda _{2}(A)\leq \lambda _{1}(B)+\lambda _{2}(B),
\end{equation}%
\begin{equation} 
..
\end{equation}%
\begin{equation} 
\lambda _{1}(A)+\lambda _{2}(A)+..+\lambda _{n-1}(A)\leq \lambda
_{1}(B)+\lambda _{2}(B)+...+\lambda _{n-1}(B),
\end{equation}%
\begin{equation} 
\lambda _{1}(A)+\lambda _{2}(A)+..+\lambda _{n}(A)=\lambda _{1}(B)+\lambda
_{2}(B)+...+\lambda _{n}(B),
\end{equation}%
e.i. if $\lambda (A)\prec \lambda (B)$, so majorisation of the hermitian
matrices is defined as majorisation of its vectors of eigenvalues.
\end{definition}

Now let us consider an arbitrary CDP matrix $\rho \lbrack \mathcal{A}%
,\Sigma _{n}\mathcal{]}$ where $\Sigma _{n}=\{\sigma _{i}\}_{i=1}^{n}$ is an
arbitrary set of CDPs and $\mathcal{A}=\{A^{k}=(a_{ij}^{k})\in M(n,%
\mathbb{C}
):k=1,..,n\}$ a set of hermitian positive matrices, then

\begin{proposition}
\begin{equation} 
\rho _{1}[\mathcal{A},\Sigma _{n}\mathcal{]\equiv }(id\otimes tr)\rho
\lbrack \mathcal{A},\Sigma _{n}\mathcal{]=}\sum_{i=1}^{n}e_{ii}(%
\sum_{k=1}^{n}a_{ii}^{k}),
\end{equation}%
\begin{equation} 
\rho _{2}[\mathcal{A},\Sigma _{n}\mathcal{]\equiv }(tr\otimes id)\rho
\lbrack \mathcal{A},\Sigma _{n}\mathcal{]=}\sum_{i=1}^{n}(%
\sum_{k=1}^{n}a_{ii}^{k}e_{\sigma _{k}(i)\sigma _{k}(i)}),
\end{equation}%
so both these matrices are diagonal with positive entries on the diagonal.
Note that $\rho _{1}[\mathcal{A},\Sigma _{n}\mathcal{]}$ depends on the
matrices $\mathcal{A}=\{A^{k}=(a_{ij}^{k})\in M(n,%
\mathbb{C}
):k=1,..,n\}$ \ only (in fact on their diagonals) and not on the set of CDPs $\Sigma
_{n}.$
\end{proposition}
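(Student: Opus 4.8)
The plan is to compute both partial traces directly from the defining double sum
\begin{equation}
\rho[\mathcal{A},\Sigma_n] = \sum_{i,j=1}^n \sum_{k=1}^n a_{ij}^k\, e_{ij} \ot e_{\sigma_k(i)\sigma_k(j)}
\end{equation}
of Definition \ref{def_circ_gen}. The only ingredients needed are the elementary identity $\mathrm{tr}(e_{ab}) = \delta_{ab}$ in $M(n,\mathbb{C})$ and the fact that each $\sigma_k$ is a bijection of $\{1,\dots,n\}$, so that $\sigma_k(i) = \sigma_k(j)$ if and only if $i = j$.

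First I would handle $\rho_1 = (\mathrm{id}\ot \mathrm{tr})\rho[\mathcal{A},\Sigma_n]$ by tracing out the second tensor factor: since $\mathrm{tr}(e_{\sigma_k(i)\sigma_k(j)}) = \delta_{\sigma_k(i)\sigma_k(j)} = \delta_{ij}$ by bijectivity of $\sigma_k$, the sum over $i,j$ collapses to the diagonal $i=j$ and yields $\rho_1 = \sum_{i=1}^n e_{ii}\,\big(\sum_{k=1}^n a_{ii}^k\big)$. Next I would treat $\rho_2 = (\mathrm{tr}\ot \mathrm{id})\rho[\mathcal{A},\Sigma_n]$ by tracing out the first factor: here $\mathrm{tr}(e_{ij}) = \delta_{ij}$ again forces $i=j$, leaving $\rho_2 = \sum_{i=1}^n\sum_{k=1}^n a_{ii}^k\, e_{\sigma_k(i)\sigma_k(i)}$. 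Each $e_{\sigma_k(i)\sigma_k(i)}$ has equal row and column index, so $\rho_2$ is diagonal; reindexing by $p=\sigma_k(i)$ inside each $k$ exhibits $\rho_2$ as a sum over $k$ of permuted diagonals of the $A^k$. Positivity of the diagonal entries of both $\rho_1$ and $\rho_2$ then follows because $A^k\ge 0$ implies $a_{ii}^k = \langle e_i, A^k e_i\rangle \ge 0$ for all $i$, and summing (and, for $\rho_2$, permuting positions) preserves nonnegativity. The independence of $\rho_1$ from $\Sigma_n$ is read off immediately from the closed form, which involves only the numbers $a_{ii}^k$.

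There is essentially no obstacle here: each partial trace is a one-line computation. The only point worth making explicit is the role of bijectivity of the $\sigma_k$ — it is what both annihilates the off-diagonal blocks in forming $\rho_1$ and guarantees that the matrix units appearing in $\rho_2$ remain diagonal.
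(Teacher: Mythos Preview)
Your proof is correct and is precisely the direct computation the paper has in mind; the proposition is stated in the paper without an explicit proof, as it follows immediately from the defining formula in Definition~\ref{def_circ_gen} together with $\mathrm{tr}(e_{ab})=\delta_{ab}$ and the bijectivity of each $\sigma_k$. Your remark that bijectivity is what collapses the off-diagonal terms in $\rho_1$ is the only point worth emphasizing, and you have it.
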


Now we have the following Majorisation Criterion $(MC)$

\begin{theorem}
If a CDP state $\rho _{12\text{ }}$ is separable then
\begin{equation} 
\rho _{12}\prec \rho _{1}\mathcal{\equiv }(id\otimes tr)\rho _{12}\quad
\wedge \quad \rho _{12}\prec \rho _{2}\mathcal{\equiv }(tr\otimes id)\rho
_{12}.
\end{equation}
\end{theorem}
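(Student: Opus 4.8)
The plan is to invoke the general Majorisation Criterion (MC) for separable bipartite states, which states that if a bipartite density operator $\rho_{12}$ on $\mathcal{H}_A \ot \mathcal{H}_B$ is separable, then $\rho_{12} \prec \rho_1$ and $\rho_{12} \prec \rho_2$, where $\rho_1 = (\oper \ot \mathrm{tr})\rho_{12}$ and $\rho_2 = (\mathrm{tr} \ot \oper)\rho_{12}$ are the reduced density operators. This is a known result (Nielsen--Kempe), so at the level of this paper the proof consists essentially in recognising that a CDP state is just a particular bipartite state and that the preceding Proposition has already computed its partial traces explicitly as $\rho_1[\mathcal{A},\Sigma_n] = \sum_{i=1}^n e_{ii}(\sum_k a_{ii}^k)$ and $\rho_2[\mathcal{A},\Sigma_n] = \sum_{i=1}^n (\sum_k a_{ii}^k) e_{\sigma_k(i)\sigma_k(i)}$.

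First I would recall that a CDP matrix $\rho[\mathcal{A},\Sigma_n]$ with all $A^k$ semi-positive definite is, up to normalisation, a legitimate density operator on $\mathbb{C}^n \ot \mathbb{C}^n$ (this follows from Proposition~\ref{herm_semipos}), so the general majorisation criterion applies verbatim. Then I would simply cite the general theorem that separability of $\rho_{12}$ implies $\lambda(\rho_{12}) \prec \lambda(\rho_1)$ and $\lambda(\rho_{12}) \prec \lambda(\rho_2)$; by the definition of majorisation of hermitian matrices given just above, this is exactly $\rho_{12}\prec\rho_1$ and $\rho_{12}\prec\rho_2$. The content specific to this paper is that the right-hand sides are explicitly diagonal matrices whose entries are the numbers $\sum_{k=1}^n a_{ii}^k$ (the same multiset for both $\rho_1$ and $\rho_2$, merely permuted), so the criterion becomes a concrete and checkable condition on the diagonal entries of the blocks $A^k$ versus the full spectrum of $\rho[\mathcal{A},\Sigma_n]$.

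The only genuine mathematical step is the general majorisation inequality for separable states; if one wanted a self-contained argument one would write $\rho_{12} = \sum_\alpha p_\alpha |x_\alpha\>\<x_\alpha| \ot |y_\alpha\>\<y_\alpha|$ and use the fact that the map $\rho_{12}\mapsto (\oper\ot\mathrm{tr})\rho_{12}$ is trace-preserving and, when restricted to the relevant spectral data, can be realised by a doubly stochastic (indeed unistochastic) map on the eigenvalue vectors — which by Birkhoff's theorem yields the majorisation relation. I expect this general inequality to be the main obstacle in the sense that it is the only non-trivial ingredient; everything else is bookkeeping already done in the preceding Proposition. In the write-up I would therefore keep the proof short: state that it is an immediate application of the Nielsen--Kempe majorisation criterion combined with the explicit form of $\rho_1$ and $\rho_2$ computed above, and note that because $\rho_1$ and $\rho_2$ share the same eigenvalues the two majorisation conditions coincide as numerical constraints even though the matrices differ.
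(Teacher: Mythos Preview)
Your proposal is correct and matches the paper's approach exactly: the paper states this theorem without proof, simply presenting it as the known Majorisation Criterion (Nielsen--Kempe) applied to the particular class of CDP states, and then immediately uses it together with the preceding Proposition's explicit diagonal form of $\rho_1$ and $\rho_2$. Your additional remarks---that a CDP matrix with $A^k\ge 0$ is a bona fide bipartite state, and that $\rho_1$ and $\rho_2$ share the same spectrum so the two conditions coincide numerically---are accurate observations that the paper leaves implicit.
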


From this we get

\begin{proposition}
The CDP matrix $\rho \lbrack \mathcal{A},\Sigma _{n}\mathcal{]}$ where $%
\Sigma _{n}=\{\sigma _{i}\}_{i=1}^{n}$ is an arbitrary set of CDPs and $\mathcal{A%
}=\{A^{k}=(a_{ij}^{k})\in M(n,%
\mathbb{C}
):k=1,..,n\}$ a set of hermitian positive matrices may be separable only if
\begin{equation} 
\rho \lbrack \mathcal{A},\Sigma _{n}\mathcal{]\prec }\sum_{i=1}^{n}e_{ii}(%
\sum_{k=1}^{n}a_{ii}^{k})\quad \wedge \quad \rho \lbrack \mathcal{A},\Sigma
_{n}\mathcal{]\prec }\sum_{i=1}^{n}(\sum_{k=1}^{n}a_{ii}^{k}e_{\sigma
_{k}(i)\sigma _{k}(i)}),
\end{equation}%
where \ the matrices on RHS are diagonal e.i. their eigenvalues are given
explicitely so it simplify calculation of majorisation.
\end{proposition}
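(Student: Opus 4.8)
The plan is to derive the statement as an immediate corollary of the Majorisation Criterion $(MC)$ stated just above, combined with the closed-form expressions for the partial traces $\rho_1[\mathcal{A},\Sigma_n]$ and $\rho_2[\mathcal{A},\Sigma_n]$ obtained in the preceding Proposition. First I would verify that $(MC)$ is applicable here: by Proposition \ref{herm_semipos}, since every $A^k$ is hermitian and positive semidefinite, the CDP matrix $\rho[\mathcal{A},\Sigma_n]$ is itself hermitian and positive semidefinite, hence (after the irrelevant normalisation) a legitimate bipartite state $\rho_{12}$. Therefore, if $\rho[\mathcal{A},\Sigma_n]$ is separable, $(MC)$ gives at once $\rho[\mathcal{A},\Sigma_n]\prec\rho_1$ and $\rho[\mathcal{A},\Sigma_n]\prec\rho_2$ with $\rho_1=(id\otimes tr)\rho[\mathcal{A},\Sigma_n]$ and $\rho_2=(tr\otimes id)\rho[\mathcal{A},\Sigma_n]$.

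The next step is purely a substitution: replacing $\rho_1$ and $\rho_2$ by the formulae $\rho_1[\mathcal{A},\Sigma_n]=\sum_{i=1}^n e_{ii}(\sum_{k=1}^n a_{ii}^k)$ and $\rho_2[\mathcal{A},\Sigma_n]=\sum_{i=1}^n(\sum_{k=1}^n a_{ii}^k e_{\sigma_k(i)\sigma_k(i)})$ from the preceding Proposition yields precisely the two majorisation relations in the claim. The trace-equality requirement built into the definition of $\prec$ is automatic, since a partial trace preserves the total trace; and since the $A^k$ are positive, the diagonal entries appearing on the right are nonnegative.

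Finally, I would stress why this is a useful reformulation: both right hand sides are diagonal in the computational basis. Indeed $\rho_1$ has diagonal entry $\sum_{k=1}^n a_{ii}^k$ at position $i$, while $\rho_2$, using that $i\mapsto\sigma_k(i)$ is a bijection for each fixed $k$, has diagonal entry $\sum_{k=1}^n a_{\sigma_k^{-1}(m)\sigma_k^{-1}(m)}^k$ at position $m$. Hence the eigenvalue vectors $\lambda(\rho_1),\lambda(\rho_2)$ are obtained simply by rearranging these explicitly known numbers in decreasing order, so that the conditions $\rho[\mathcal{A},\Sigma_n]\prec\rho_1$ and $\rho[\mathcal{A},\Sigma_n]\prec\rho_2$ become finitely many scalar inequalities between partial sums, with no diagonalisation of the $n^2\times n^2$ right hand side needed. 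There is essentially no obstacle in this argument; the only point deserving genuine care is checking that the positivity hypothesis on $\mathcal{A}$ is what makes the matrix a state and hence renders $(MC)$ legitimate, after which the result follows by chaining the two previous statements.
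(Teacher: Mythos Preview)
Your proposal is correct and matches the paper's approach exactly: the paper introduces this Proposition with nothing more than the phrase ``From this we get'', indicating it is an immediate combination of the Majorisation Criterion with the explicit partial-trace formulae of the preceding Proposition, which is precisely what you do. Your additional remarks on applicability (via Proposition~\ref{herm_semipos}) and on the diagonal structure of the right hand sides are helpful elaborations but do not depart from the intended argument.
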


From the above Definition of majorisation of hermitian matrices we know that
it is in fact majorisation of corresponding vectors of eigenvalues. From
Prop. \ref{eigen} we know also that the eigenvalues of the matrix $\rho \lbrack
\mathcal{A},\Sigma _{n}\mathcal{]}$ are exactly the eigenvalues of the
matrices $\mathcal{A}=\{A^{k}=(a_{ij}^{k})\in M(n,%
\mathbb{C}
):k=1,..,n\}.$ So the relations in the last Proposition show in what a way
the eigenvalues of the matrices $\mathcal{A}$ should be majorised by sums of
their diagonal elements. On the other hand we have famous theorem by Shurr

\begin{theorem}
Let $A\in M(n,%
\mathbb{C}
)$ be a hermitian matrix, then
\begin{equation} 
d(A)\prec \lambda (A),
\end{equation}%
where $d(A)\in
\mathbb{C}
^{n}$ is the vector whose components are diagonal elements of $A$ arranged
in decreasing order.
\end{theorem}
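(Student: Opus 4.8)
The plan is to reduce Schur's statement to the classical fact that multiplication by a doubly stochastic matrix produces a majorised vector (the Hardy--Littlewood--P\'olya theorem), using nothing more than the spectral decomposition of $A$.

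First I would diagonalise. Since $A^+=A$, write $A = U \Lambda U^+$ with $U$ unitary and $\Lambda = \mathrm{diag}(\lambda_1(A),\ldots,\lambda_n(A))$, the eigenvalues arranged in decreasing order. Reading off the diagonal entries gives $a_{ii} = \sum_{k=1}^{n} |U_{ik}|^2 \lambda_k(A)$. Put $S_{ik} = |U_{ik}|^2$. Because the rows and the columns of $U$ are orthonormal one has $\sum_k S_{ik}=1$ and $\sum_i S_{ik}=1$, i.e. $S$ is doubly stochastic, and the vector of diagonal entries of $A$ is exactly $S\,\lambda(A)$.

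Second, I would invoke (or prove) the lemma that $Sx \prec x$ for every doubly stochastic $S$ and every real vector $x$. One clean route is Birkhoff's theorem: $S=\sum_\alpha t_\alpha P_\alpha$ with $t_\alpha\ge 0$, $\sum_\alpha t_\alpha=1$, and $P_\alpha$ permutation matrices. Each $P_\alpha x$ is a rearrangement of $x$, so the sum of any $k$ entries of $Sx$ is at most $\sum_\alpha t_\alpha$ times the sum of the $k$ largest entries of $x$, i.e. at most the sum of the $k$ largest entries of $x$; and the total sum is preserved since every $P_\alpha$ preserves it. Applying this with $x=\lambda(A)$, and noting that permuting the row index $i$ leaves $\lambda(A)$ untouched while it can be used to bring the diagonal of $A$ into decreasing order $d(A)$, yields $d(A)\prec\lambda(A)$. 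Alternatively, and perhaps more in keeping with the variational flavour of the surrounding sections, one fixes $k$, permutes coordinates so that $a_{11}\ge\cdots\ge a_{kk}$ are the $k$ largest diagonal entries, writes $\sum_{i=1}^{k} a_{ii}=\mathrm{tr}(P A P)$ for $P$ the orthogonal projection onto $\mathrm{span}\{e_1,\ldots,e_k\}$, and invokes the Ky Fan maximum principle $\mathrm{tr}(PAP)\le \lambda_1(A)+\cdots+\lambda_k(A)$; for $k=n$ this degenerates to the trace identity $\mathrm{tr}\,A=\sum_i\lambda_i(A)$, which is the required equality in the last majorisation relation.

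The main obstacle is precisely the doubly-stochastic majorisation lemma (equivalently, Ky Fan's principle or Birkhoff's theorem): once that is available, the proof is a two-line computation with the spectral decomposition. Since this is a standard result in matrix analysis, it may simply be cited; I would state it as a lemma, give the short Birkhoff-based argument above, and then deduce $d(A)\prec\lambda(A)$ by the reduction just described.
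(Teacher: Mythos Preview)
Your argument is correct and is one of the standard proofs of Schur's theorem: spectral decomposition, the observation that the diagonal is the image of the eigenvalue vector under the doubly stochastic matrix $S_{ik}=|U_{ik}|^2$, and then either Birkhoff's theorem or the Ky Fan maximum principle. Nothing is missing.

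However, note that the paper does not actually prove this statement. It is quoted as the ``famous theorem by Shurr'' and invoked as a classical fact from matrix analysis, with no argument supplied. So there is no ``paper's own proof'' to compare with: you have provided a proof where the paper simply cites the result. If anything, your write-up goes further than what the paper requires; for the purposes of this section a one-line reference to Schur's majorisation theorem would have sufficed.
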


Thus we see that the majorisation necessary conditions for separability of
the matrix $\rho \lbrack \mathcal{A},\Sigma _{n}\mathcal{]}$ from the last
Proposition give the majorisation of the eigenvalues of the matrices $%
\mathcal{A}=\{A^{k}\in M(n,%
\mathbb{C}
):k=1,..,n\}$ by sums of its diagonal elements and on the other hand, from
Schur Theorem, the eigenvalues of the matrices $\mathcal{A}$ majorises its
diagonal elements. So we see that we have non-trivial conditions for
separability for the matrices $\rho \lbrack \mathcal{A},\Sigma _{n}\mathcal{]%
}$ and in this case for arbitrary set of CDPs $\Sigma _{n}=\{\sigma
_{i}\}_{i=1}^{n}$ , not only for groups.

\section{Examples of linear maps related to sets of CDPs}

In the paper \cite{MM} (see also \cite{Kasia})  the Irreducible Covariant Quantum Channels were
introduced, which are defined in the following way.

\begin{definition}
Let
\[
u:G\rightarrow M(n,%
\mathbb{C}
),\qquad u(g)=(u_{ij}(g))\in M(n,%
\mathbb{C}
)
\]%
be an unitary irreducible representation ($IRREP$) of a given\ finite group $%
G.$ A quantum channel $\Phi ,$which is by definition completely positive and
trace preserving map is called irreducible and invariant ($ICQC$) with
respect to $IRREP$ $U:G\rightarrow M(n,%
\mathbb{C}
)$ if
\[
\forall g\in G\quad \forall x\in M(n,%
\mathbb{C}
)\quad Ad_{U(g)}[\Phi (x)]=\Phi \lbrack Ad_{U(g)}(x)],
\]%
where
\[
Ad_{U(g)}(x)\equiv U(g)xU^{+}(g),
\]%
so $\Phi $ commute with $Ad_{U(g)}$.
\end{definition}

It has been shown that under, assumption that the tensor product is simply
reducible e.i. $U\otimes \overline{U}=$ $\sum_{\alpha \in \widehat{G}%
}m_{\alpha }\varphi ^{\alpha }:m_{\alpha }=0,1,$ $\widehat{G}$ is the set of
all $IRREP^{\prime }s$, the  ($ICQC$) have the following structure

\begin{proposition}
A quantum channel $\Phi \in End[M(n,%
\mathbb{C}
)]$, which is irreducible and invariant with respect to $IRREP$ $%
U:G\rightarrow M(n,%
\mathbb{C}
)$ is necessarily of the form
\[
\Phi =l_{id}\Pi ^{id}+\sum_{\alpha \in \Theta ,\alpha \neq id}l_{\alpha }\Pi
^{\alpha }:\quad l_{\alpha }\in
\mathbb{C}
,
\]%
where
\[
\Pi ^{\alpha }=\frac{\dim \varphi ^{\alpha }}{|G|}\sum_{g\in G}\chi ^{\alpha
}(g^{-1})Ad_{U(g)}:\chi ^{\alpha }(g^{-1})=tr\varphi ^{\alpha }(g^{-1}).
\]
\end{proposition}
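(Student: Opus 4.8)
The plan is to read the covariance requirement as the statement that $\Phi$ belongs to the commutant of the representation $g\mapsto Ad_{U(g)}$ of $G$ on the vector space $M(n,\mathbb{C})$, and then to combine Schur's lemma with the simple-reducibility hypothesis $U\otimes\overline{U}=\sum_{\alpha\in\widehat{G}}m_\alpha\varphi^\alpha$, $m_\alpha\in\{0,1\}$. First I would note that under the identification $M(n,\mathbb{C})\cong\mathbb{C}^n\otimes\overline{\mathbb{C}^n}$ the representation $g\mapsto Ad_{U(g)}$ is unitarily equivalent to $U\otimes\overline{U}$, so it decomposes without multiplicities as $\bigoplus_{\alpha\in\Theta}\varphi^\alpha$ with $\Theta\subseteq\widehat{G}$; moreover $id\in\Theta$ with multiplicity one, since $\mathrm{Hom}_G(\varphi^{id},U\otimes\overline{U})\cong\mathrm{Hom}_G(U,U)=\mathbb{C}$ by irreducibility of $U$. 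Writing $M(n,\mathbb{C})=\bigoplus_{\alpha\in\Theta}V_\alpha$ for the corresponding decomposition into $Ad_U$-invariant isotypic subspaces, the condition $Ad_{U(g)}\circ\Phi=\Phi\circ Ad_{U(g)}$ for all $g\in G$ says exactly that $\Phi$ is a $G$-intertwiner of this representation with itself.

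Next I would invoke Schur's lemma in the multiplicity-free setting: $\mathrm{Hom}_G(V_\alpha,V_\beta)=0$ for $\alpha\neq\beta$ and $\mathrm{Hom}_G(V_\alpha,V_\alpha)=\mathbb{C}\,\mathrm{id}_{V_\alpha}$, so any $G$-equivariant $\Phi$ must preserve each $V_\alpha$ and act there as a scalar $l_\alpha\in\mathbb{C}$; equivalently $\Phi=\sum_{\alpha\in\Theta}l_\alpha P_\alpha$ with $P_\alpha$ the projection of $M(n,\mathbb{C})$ onto $V_\alpha$. It then remains to identify $P_\alpha$ with the operator $\Pi^\alpha=\frac{\dim\varphi^\alpha}{|G|}\sum_{g\in G}\chi^\alpha(g^{-1})\,Ad_{U(g)}$. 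This is the classical isotypic-projection (character) formula: using the Schur orthogonality relations one checks that $\frac{\dim\varphi^\alpha}{|G|}\sum_{g}\chi^\alpha(g^{-1})\varphi^\beta(g)=\delta_{\alpha\beta}\,\mathrm{id}$ on each irreducible block $\varphi^\beta$ occurring in $U\otimes\overline{U}$ (here $\chi^\alpha(g^{-1})=\overline{\chi^\alpha(g)}$ because $U$ is unitary), hence $\Pi^\alpha=P_\alpha$. Splitting off the term $\alpha=id$ from the sum then gives precisely the asserted form $\Phi=l_{id}\Pi^{id}+\sum_{\alpha\in\Theta,\ \alpha\neq id}l_\alpha\Pi^\alpha$.

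The one genuinely essential point — the place where the hypothesis is actually used — is the multiplicity-freeness $m_\alpha\in\{0,1\}$: without it the commutant of $Ad_U$ would contain full matrix algebras $M(m_\alpha,\mathbb{C})$ rather than just scalars, and $\Phi$ would not be forced to be a combination of the $\Pi^\alpha$. I would also remark that complete positivity and trace preservation of $\Phi$ play no role in deriving this linear form; they only restrict the admissible coefficients $l_\alpha$ afterwards (which is why the $\alpha=id$ term, carrying the $\mathrm{tr}$-dependent component $\Pi^{id}(x)=\frac{1}{n}\mathrm{tr}(x)\,\oper_n$, is displayed separately), and the statement records only $l_\alpha\in\mathbb{C}$.
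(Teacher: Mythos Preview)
Your argument is correct and is the standard representation-theoretic proof: covariance places $\Phi$ in the commutant of the $Ad_U$-representation on $M(n,\mathbb{C})\cong\mathbb{C}^n\otimes\overline{\mathbb{C}^n}$, simple reducibility of $U\otimes\overline{U}$ makes that commutant abelian (scalars on each isotypic block), and the character projection formula identifies the blocks with the $\Pi^\alpha$. Note, however, that the paper does not supply its own proof of this proposition at all --- it is quoted as a known result from reference \cite{MM}, so there is no ``paper's proof'' to compare against; your write-up simply fills in the argument that the authors cite.
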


It appears that the the value of Choi-Jamiolkowski isomorphism on ($ICQC$)
for  $S(3)$ and quaternion groups have the structure of a CDP matrix. In
fact we have

\begin{example}
For the group $S(3)$ we have
\[
\Phi =\Pi ^{id}+l_{sgn}\Pi ^{sgn}+l_{\lambda }\Pi ^{\lambda },
\]%
where $\lambda $ denotes the two-dimensional $IRREP$ of $S(3).$ The
corresponding Choi-Jamiolkowski matrix is of the form
\[
J(\Phi )=
\left[
\begin{array}{cccc}
  \frac{1}{2}(1+l_{\mathrm{sgn}}) & 0 & 0 & l_{\lambda } \\
  0 & \frac{1}{2}(1-l_{\mathrm{sgn}}) & 0 & 0 \\
  0 & 0 & \frac{1}{2}(1-l_{\mathrm{sgn}}) & 0 \\
  l_{\lambda } & 0 & 0 & \frac{1}{2}(1+l_{\mathrm{sgn}})
\end{array}
\right]
=\rho \lbrack \mathcal{A},\Sigma _{2}],
\]%
where $\Sigma _{2}=S(2)$ and
\[
\mathcal{A}=\{A^{1},A^{2}\}:A^{1}=\left(
\begin{array}{cc}
\frac{1}{2}(1+l_{sgn}) & l_{\lambda } \\
l_{\lambda } & \frac{1}{2}(1+l_{sgn})%
\end{array}%
\right) ,\quad A^{2}=\left(
\begin{array}{cc}
\frac{1}{2}(1-l_{sgn}) & 0 \\
0 & \frac{1}{2}(1-l_{sgn})%
\end{array}%
\right) .
\]
\end{example}

\begin{example}
The quaternion group $Q=\left\lbrace \pm Q_{\operatorname{e}},\pm Q_1, \pm Q_2, \pm Q_3 \right\rbrace $ is a non-abelian group of order eight satisfying
\begin{equation}
Q=\left\langle -Q_{\operatorname{e}},Q_1,Q_2,Q_3 \ | \ \left(-Q_{\operatorname{e}} \right)^2=Q_{\operatorname{e}}, Q_1^2=Q_2^2=Q_3^2=Q_1Q_2Q_3=-Q_{\operatorname{e}} \right\rangle.
\end{equation}
It possesses five inequivalent irreducible representations which we label by $\mathrm{id},t_1,t_2,t_3,t_4$, respectively. However, only one of them, labeled by $t_4$, has dimension greater than one and its dimension is equal to two.
It is known that the quaternion group can be represented as a subgroup of $GL(2,\mathbb{C})$. The matrix representation $R:Q\rightarrow GL(2,\mathbb{C})$ is given by
\begin{equation}
Q_{\operatorname{e}}=\begin{pmatrix}
	1 & 0 \\ 0 & 1
\end{pmatrix}, \  Q_1=\begin{pmatrix}
\operatorname{i} & 0\\ 0 & -\operatorname{i}
\end{pmatrix}, \ Q_2=\begin{pmatrix}
0 & 1\\ -1 & 0
\end{pmatrix}, \ Q_3=\begin{pmatrix}
0 & \operatorname{i} \\ \operatorname{i} & 0
\end{pmatrix},
\end{equation}
where $\operatorname{i}^2=-1$. In Table \ref{t} we present values of the characters for all irreducible representations of the group $Q$.

\begin{table}[h]
	\centering
	\begin{tabular}{|c|c|c|c|c|c|c|c|c|}
		\hline
		$Q$ & $Q_{\operatorname{e}}$ & $-Q_{\operatorname{e}}$ & $Q_1$ & $Q_2$ & $Q_3$ & $-Q_1$ & $-Q_2$ & $-Q_3$\\
		\hline
		$\chi^{\mathrm{id}}$ & 1 & 1 & 1 & 1 & 1 & 1 & 1 & 1\\
		\hline
		$\chi^{t_1}$ & 1 & 1 & -1 & 1 & -1 & -1 & 1 & -1\\
		\hline
		$\chi^{t_2}$ & 1 & 1 & 1 & -1 & -1 & 1 & -1 & -1\\
		\hline
		$\chi^{t_3}$ & 1 & 1 & -1 & -1 & 1 & -1 & -1 & 1\\
		\hline
		$\chi^{t_4}$ & 2 & -2 & 0 & 0 & 0 & 0 & 0 & 0\\
		\hline
	\end{tabular}
	\caption{Table of characters for the quaternion group $Q$.}
	\label{t}
\end{table}

\begin{equation} 
  \Phi^{t_4}=l_{t_{\mathrm{id}}}\Pi^{\mathrm{id}}+l_{t_1}\Pi^{t_1}+l_{t_2}\Pi^{t_2}+l_{t_3}\Pi^{t_3}
\end{equation}

The corresponding Choi-Jamiolkowski matrix is of the form
\begin{equation}
\label{uuu}
J\left(\Phi^{t_4} \right)=
\frac{1}{2}\begin{pmatrix}
	1+l_{t_2} & 0 & 0 & l_{t_1}+l_{t_3}\\
	0 & 1-l_{t_2} & l_{t_3}-l_{t_1} & 0\\
	0 & l_{t_3}-l_{t_1} & 1-l_{t_2} & 0\\
	l_{t_1}+l_{t_3} & 0 & 0 & 1+l_{t_2}
\end{pmatrix}
= \rho\left[\mathcal{A},\Sigma_2\right],
\end{equation}
where $\Sigma _{2}=S(2)$ and
\[
\mathcal{A}=\{A^{1},A^{2}\}:A^{1}=\left(
\begin{array}{cc}
  \frac{1}{2}(1+l_{t_2}) &
  \frac 12 (l_{t_1} + l_{t_3}) \\
  \frac 12 (l_{t_1} + l_{t_3}) &
  \frac{1}{2}(1+l_{t_2})
\end{array}%
\right) ,\quad A^{2}=\left(
\begin{array}{cc}
  \frac{1}{2}(1-l_{t_2}) &
  \frac 12 (l_{t_3} - l_{t_1}) \\
  \frac 12 (l_{t_3} - l_{t_1}) &
  \frac{1}{2}(1-l_{t_2})
\end{array}%
\right) .
\]
\end{example}

Now we consider the reduction map and its generalisation to the Breuer-Hall map. It is not difficult to check that the reduction map is related to CDP matrices in the following simple and non unique way.
\begin{proposition}
Let us consider the reduction map%
\begin{equation} 
R:M(n,%
\mathbb{C}
)\rightarrow M(n,%
\mathbb{C}
);\quad R(A)=tr(A)id_{n}-A,\quad A\in M(n,%
\mathbb{C}
).
\end{equation}%
Then we have
\begin{equation} 
R^{\otimes }\equiv \sum_{ij}R(e_{ij})\otimes e_{ij}=\rho \lbrack \mathcal{A}%
,\Sigma _{n}\mathcal{]},
\end{equation}%
where
\begin{equation} 
\mathcal{A}=\{A^{1}=id_{n}-J,\quad A^{k}=id_{n}:k=2,..,n\}
\end{equation}%
and $\Sigma _{n}=\{\sigma _{i}\}_{i=1}^{n}$ is an \textbf{arbitrary} set of CDPs
such that $\sigma _{1}=id.$ So in fact, due to the structure of the matrices
$\mathcal{A}$, the reduction map weakly depends on the set of CDPs $\Sigma_{n}=\{\sigma _{i}\}_{i=1}^{n}$.
\end{proposition}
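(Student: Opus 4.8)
The plan is to verify the claimed identity directly, by expanding both sides in the standard matrix‑unit basis $\{e_{ij}\otimes e_{kl}\}$ of $M(n^2,\mathbb{C})$ and matching coefficients. No structural ingredient is needed beyond the defining property of CDPs.

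First I would rewrite the left‑hand side. Since $\mathrm{tr}(e_{ij})=\delta_{ij}$, one has $R(e_{ij})=\delta_{ij}\,\mathrm{id}_n-e_{ij}$, hence
\begin{equation}
R^{\otimes}=\sum_{i,j=1}^{n}R(e_{ij})\otimes e_{ij}=\sum_{i=1}^{n}\mathrm{id}_n\otimes e_{ii}-\sum_{i,j=1}^{n}e_{ij}\otimes e_{ij}=\mathbf{1}_{n^2}-\sum_{i,j=1}^{n}e_{ij}\otimes e_{ij}.
\end{equation}
Next, using Definition \ref{def_circ_gen} with $a_{ij}^{1}=\delta_{ij}-1$ (the entries of $\mathrm{id}_n-J$), $a_{ij}^{k}=\delta_{ij}$ for $k\geq 2$, and $\sigma_1=\mathrm{id}$, I would expand
\begin{equation}
\rho[\mathcal{A},\Sigma_n]=\sum_{i,j=1}^{n}(\delta_{ij}-1)\,e_{ij}\otimes e_{ij}+\sum_{k=2}^{n}\sum_{i=1}^{n}e_{ii}\otimes e_{\sigma_k(i)\sigma_k(i)},
\end{equation}
where in the second term only $i=j$ survives because $A^{k}=\mathrm{id}_n$ is diagonal. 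Adding and subtracting the $k=1$ diagonal term $\sum_{i}e_{ii}\otimes e_{ii}=\sum_{i}e_{ii}\otimes e_{\sigma_1(i)\sigma_1(i)}$ regroups this as
\begin{equation}
\rho[\mathcal{A},\Sigma_n]=-\sum_{i,j=1}^{n}e_{ij}\otimes e_{ij}+\sum_{k=1}^{n}\sum_{i=1}^{n}e_{ii}\otimes e_{\sigma_k(i)\sigma_k(i)}.
\end{equation}

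The only non‑formal step is the identity $\sum_{k=1}^{n}e_{\sigma_k(i)\sigma_k(i)}=\mathrm{id}_n$ for each fixed $i$, and this is precisely where the CDP hypothesis enters: if $\sigma_k(i)=\sigma_l(i)$ for that fixed $i$, then $\sigma_k$ and $\sigma_l$ coincide at a point, so by Definition \ref{CDPdef} they are not completely different unless $k=l$; hence $k\mapsto\sigma_k(i)$ is a bijection of $\{1,\dots,n\}$ and the sum runs over all diagonal units exactly once. Substituting this yields $\sum_{k=1}^{n}\sum_{i=1}^{n}e_{ii}\otimes e_{\sigma_k(i)\sigma_k(i)}=\sum_{i}e_{ii}\otimes\mathrm{id}_n=\mathbf{1}_{n^2}$, so $\rho[\mathcal{A},\Sigma_n]=\mathbf{1}_{n^2}-\sum_{i,j}e_{ij}\otimes e_{ij}=R^{\otimes}$, manifestly independent of the chosen $\Sigma_n$. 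The ``weak dependence'' remark then follows because the blocks $A^{2},\dots,A^{n}$ are all equal to $\mathrm{id}_n$ and only $A^{1}$ carries the content of $R$, while $\sigma_1=\mathrm{id}$ is in any case forced by the enumeration convention once $\mathrm{id}\in\Sigma_n$. There is essentially no obstacle here; the one thing requiring care is the bookkeeping — the off‑diagonal part ($i\neq j$) must come entirely from $A^{1}$, whereas the diagonal part must be collected from all $n$ blocks via the bijectivity just noted.
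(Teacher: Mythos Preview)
Your proof is correct and is exactly the direct verification the paper has in mind; the paper itself gives no explicit argument for this proposition, stating only that ``it is not difficult to check,'' and your expansion of both sides together with the CDP bijectivity $k\mapsto\sigma_k(i)$ is precisely that check.
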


Let us consider now the Breuer-Hall map which is a generalisation the
reduction map%
\begin{equation} 
B:M(n,%
\mathbb{C}
)\rightarrow M(n,%
\mathbb{C}
);\quad B(X)=tr(X)id_{n}-A-UX^{T}U^{+},\quad X\in M(n,%
\mathbb{C}
),.
\end{equation}%
\begin{equation} 
X\in M(n,%
\mathbb{C}
),\quad U\in U(n),\quad U^{T}=-U.
\end{equation}%
So in order to construct a Breuer-Hall map one have to construct an unitary and anti-symmetric matrix.
It appears that one can construct a large class of unitary anti-symmetric
matrices $U\in U(n)$ (in fact orthogonal), using permutations from $S(n).$
We have

\begin{proposition}
Let $n=2k$. We divide the set $\{1,...,n\}$ into two disjoint subsets $O$
and $P$, where the first one contains all odd numbers from $\{1,...,n\}$ and
the second one contains all even numbers from $\{1,...,n\}$. The permutation
$\sigma =(o_{1}p_{1}),...,(o_{n-1}p_{n-1})\in S(n)$, where $o_{i}\in O$ and $%
p_{i}\in P$ is involutive and the matrix
\begin{equation} 
U^{\sigma }=((-1)^{j}\delta _{\sigma (i)j})
\end{equation}%
is unitary (orthogonal) and anti-symmetric.
\end{proposition}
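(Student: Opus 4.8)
The plan is to verify the three assertions — that $\sigma$ is involutive, that $U^{\sigma}$ is orthogonal (hence unitary, being real), and that $U^{\sigma}$ is anti-symmetric — directly from the explicit formula $U^{\sigma}_{ij}=(-1)^{j}\delta_{\sigma(i)j}$, after first isolating the two structural facts about $\sigma$ that do all the work. Here $n=2k$, so $\{1,\dots,n\}$ contains exactly $k$ odd and $k$ even numbers; thus $\sigma=(o_{1}p_{1})\cdots(o_{k}p_{k})$ is a product of $k$ disjoint transpositions exhausting $\{1,\dots,n\}$, each of which pairs an odd index with an even one.

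The structural facts I would record first are: (i) being a product of disjoint transpositions, $\sigma$ is involutive, i.e. $\sigma^{2}=\mathrm{id}$, equivalently $\sigma=\sigma^{-1}$, and it has no fixed points; (ii) since every transposition of $\sigma$ swaps an odd with an even index, $i$ and $\sigma(i)$ always have opposite parity, so $(-1)^{\sigma(i)}=-(-1)^{i}$ for every $i$. Fact (i) is the only content of the involutivity claim.

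For orthogonality I would observe that $U^{\sigma}$ is a signed permutation matrix: its $i$-th row has the single nonzero entry $(-1)^{\sigma(i)}$ in column $\sigma(i)$, and as $i$ ranges over $\{1,\dots,n\}$ so does $\sigma(i)$, so the columns of $U^{\sigma}$ are $\pm e_{1},\dots,\pm e_{n}$ in some order — an orthonormal system. Concretely, $\big((U^{\sigma})^{T}U^{\sigma}\big)_{il}=\sum_{j}U^{\sigma}_{ji}U^{\sigma}_{jl}=(-1)^{i+l}\sum_{j}\delta_{\sigma(j)i}\delta_{\sigma(j)l}=\delta_{il}$, so $(U^{\sigma})^{T}U^{\sigma}=\mathbf{1}_{n}$. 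For anti-symmetry I would write $(U^{\sigma})^{T}_{ij}=U^{\sigma}_{ji}=(-1)^{i}\delta_{\sigma(j)i}$, use $\sigma=\sigma^{-1}$ to rewrite $\delta_{\sigma(j)i}=\delta_{\sigma(i)j}$, and then on the support of this delta (where $j=\sigma(i)$) apply fact (ii): $(-1)^{i}=-(-1)^{\sigma(i)}=-(-1)^{j}$. Hence $(U^{\sigma})^{T}_{ij}=-(-1)^{j}\delta_{\sigma(i)j}=-U^{\sigma}_{ij}$, that is $(U^{\sigma})^{T}=-U^{\sigma}$.

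None of the steps is computationally heavy; the point worth flagging is that the odd–even pairing is not decorative. If $\sigma$ were an arbitrary involution, the sign-conjugation in $U^{\sigma}$ would generally give a symmetric (or mixed) matrix; it is precisely the parity mismatch $(-1)^{i}=-(-1)^{\sigma(i)}$ on each transposition that converts the natural symmetry of a signed permutation into anti-symmetry. So I would present that parity-matching observation, rather than any calculation, as the conceptual heart of the proof, the rest being bookkeeping.
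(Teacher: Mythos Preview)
Your proof is correct and complete. The paper does not supply an explicit proof of this proposition --- it is stated as a fact to be checked directly --- so your direct verification via the two structural observations (involutivity from the disjoint-transposition form, and the parity mismatch $(-1)^{\sigma(i)}=-(-1)^{i}$ forced by the odd--even pairing) is precisely the argument the paper leaves to the reader. Your remark that the odd--even pairing is the conceptual heart, not decoration, is exactly right: without it the signed permutation matrix would be symmetric rather than anti-symmetric.
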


\begin{remark}
So we have a large class of such unitary and antisymmetric matrices, which
are however, orthogonally similar. Note that the permutations $\sigma
=(o_{1}p_{1}),...,(o_{n-1}p_{n-1})\in S(n)$, where $o_{i}\in O$ and $%
p_{i}\in P$ belongs to the regular representation (i.e. permutational) of
the group $(%
\mathbb{Z}
_{2})^{\times n}$, so it is an element of a group of CDPs. In the following we will
use a particular, more convenient form of such a permutations, which looks $%
\sigma =(1p_{1}),...,(n-1p_{n-1})$.
\end{remark}

Now we are to formulate the main result of this section, which may be checked by a direct calculation

\begin{theorem}
Let $\Sigma _{2^{n}},$ $n=2l$ is a regular representation of the group $(%
\mathbb{Z}
_{2})^{\times n}$, so it is CDP, whose elements are compositions of
disjoint transpositions only i.e. $\Sigma _{2^{n}}=\{\sigma
_{i}\}_{i=1}^{2^{n}}:\sigma _{i}=(i_{1}j_{1})...$.$(i_{l}j_{l}).$ We choose
the permutation $\sigma _{p_{1}}=(1p_{1}),...,(n-1p_{n-1})\in $ $\Sigma
_{2^{n}}$, where $p_{i}\in P.$ Next let $\mathcal{A=}\{A^{k},k=1,...,n\}$ be
such that%
\begin{equation} 
A^{1}=(a_{ij}^{1}):a_{ii}^{1}=a_{2k-1p_{2k-1}}^{1}=a_{p_{2k-12}2k-1}^{1}=0,%
\quad k=1,..,l,\,i=1,..,n;
\end{equation}%
\begin{equation} 
a_{ij}^{1}=-1,\quad i,j\neq 2k-1,p_{2k-1},
\end{equation}

\begin{equation} 
A^{p_{1}}=0.
\end{equation}%
and
\begin{equation} 
\forall k\neq 1,p_{1}\qquad A^{k}=(a_{ij}^{k})=%
\begin{array}{c}
a_{ij}^{k}= \left\{ \begin{array}{ll} (-1)^{\sigma _{p_{1}}(i)+j}\delta _{\sigma _{i}\sigma
_{p_{1}}(j)k}:i\neq j \\
a_{ii}^{k}=1 \end{array} \right.
\end{array}%
\end{equation}%
then we have
\begin{equation} 
\rho \lbrack \mathcal{A},\Sigma _{n}\mathcal{]=}\sum_{ij=1}^{n}e_{ij}\otimes
B(e_{ij}),
\end{equation}%
where
\begin{equation} 
B(A)=tr(A)id_{n}-A-U^{\sigma _{p_{1}}}A^{T}(U^{\sigma _{p_{1}}})^{+},\quad
A\in M(n,%
\mathbb{C}
),
\end{equation}%
and $U^{\sigma _{p_{1}}}$ is an orthogonal, anti-symmetric matrix defined in
the last Proposition.
\end{theorem}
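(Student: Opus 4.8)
The plan is to verify the claimed matrix identity block by block. By Definition~\ref{def_circ_gen} one has $\rho[\mathcal{A},\Sigma_n]=\sum_{i,j}e_{ij}\otimes B_{ij}(\mathcal{A},\Sigma_n)$ with $B_{ij}(\mathcal{A},\Sigma_n)=\sum_{k=1}^{n}a_{ij}^{k}\,e_{\sigma_k(i)\sigma_k(j)}$, so the statement is equivalent to the family of identities $B(e_{ij})=B_{ij}(\mathcal{A},\Sigma_n)$ for all $i,j$, where $B$ is the Breuer--Hall map built from the orthogonal anti-symmetric matrix $U^{\sigma_{p_1}}$ of the last Proposition.

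First I would bring $B(e_{ij})$ into closed form. Putting $X=e_{ij}$ in $B(X)=tr(X)\,id_n-X-U^{\sigma_{p_1}}X^{T}(U^{\sigma_{p_1}})^{+}$, using $tr(e_{ij})=\delta_{ij}$ and $e_{ij}^{T}=e_{ji}$, and inserting the explicit entries $U^{\sigma_{p_1}}=((-1)^{b}\delta_{\sigma_{p_1}(a)b})$ together with the fact that $\sigma_{p_1}$ is an involution, a short index computation gives $U^{\sigma_{p_1}}e_{ji}(U^{\sigma_{p_1}})^{+}=(-1)^{i+j}e_{\sigma_{p_1}(j)\sigma_{p_1}(i)}$, hence $B(e_{ij})=\delta_{ij}\,id_n-e_{ij}-(-1)^{i+j}e_{\sigma_{p_1}(j)\sigma_{p_1}(i)}$. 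Two facts will be used repeatedly: $\sigma_{p_1}$ pairs odd labels with even labels, so $(-1)^{\sigma_{p_1}(i)}=-(-1)^{i}$ for every $i$ and $\sigma_{p_1}$ is fixed-point free; and every element of $\Sigma_n$ is an involution, since $\Sigma_n$ is the (abelian) regular representation of $(\mathbb{Z}_2)^{\times n}$.

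I would then compare the two sides in three cases. \emph{(i) $i=j$.} Here $B(e_{ii})=id_n-e_{ii}-e_{\sigma_{p_1}(i)\sigma_{p_1}(i)}$ by the parity fact; on the other side, the composition rule $\sigma_k\sigma_i=\sigma_{\sigma_k(i)}$ of Prop.~\ref{GCDP} makes $k\mapsto\sigma_k(i)$ a bijection of $\{1,\dots,n\}$, so $\sum_k e_{\sigma_k(i)\sigma_k(i)}=id_n$, and inserting the prescribed values $a_{ii}^{1}=a_{ii}^{p_1}=0$, $a_{ii}^{k}=1$ for $k\neq1,p_1$, together with $\sigma_1=\mathrm{id}$ and $\sigma_{p_1}(i)\neq i$, yields exactly $B_{ii}(\mathcal{A},\Sigma_n)=id_n-e_{ii}-e_{\sigma_{p_1}(i)\sigma_{p_1}(i)}$. \emph{(ii) $i\neq j$ and $j=\sigma_{p_1}(i)$.} Then $\sigma_{p_1}(j)=i$, so both surviving terms of $B(e_{ij})$ sit at position $(i,j)$ and cancel because $(-1)^{i+\sigma_{p_1}(i)}=-1$, i.e. $B(e_{ij})=0$; correspondingly $a_{ij}^{1}=0$ (the prescription on $A^{1}$ vanishes at the transposition pairs of $\sigma_{p_1}$), $a_{ij}^{p_1}=0$ since $A^{p_1}=0$, and for $k\neq1,p_1$ the identity $\sigma_i\sigma_{p_1}(j)=\sigma_i(i)=1\neq k$ makes the Kronecker delta in the definition of $a_{ij}^{k}$ vanish, so $B_{ij}(\mathcal{A},\Sigma_n)=0$ as well. \emph{(iii) $i\neq j$ and $j\neq\sigma_{p_1}(i)$.} The term $-e_{ij}$ of $B(e_{ij})$ is the $k=1$ contribution, with $a_{ij}^{1}=-1$ since $(i,j)$ is not a transposition pair. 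For the remaining term I would use the composition rule of Prop.~\ref{GCDP}, the commutativity identity $\sigma_a(b)=\sigma_b(a)$ of Prop.~\ref{abelianCDP}, and $\sigma_a^{2}=\mathrm{id}$ to check that $k^{\ast}:=\sigma_i\sigma_{p_1}(j)$ is the unique index with $(\sigma_{k^{\ast}}(i),\sigma_{k^{\ast}}(j))=(\sigma_{p_1}(j),\sigma_{p_1}(i))$ and that $a_{ij}^{k^{\ast}}=(-1)^{\sigma_{p_1}(i)+j}=-(-1)^{i+j}$, while all remaining coefficients vanish ($a_{ij}^{p_1}=0$ since $A^{p_1}=0$, and $a_{ij}^{k}=0$ for $k\neq1,p_1,k^{\ast}$ by the Kronecker delta). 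Collecting the two nonzero terms gives $B_{ij}(\mathcal{A},\Sigma_n)=-e_{ij}-(-1)^{i+j}e_{\sigma_{p_1}(j)\sigma_{p_1}(i)}=B(e_{ij})$. Here one also checks that $1$, $p_1$ and $k^{\ast}$ are pairwise distinct: $1\neq p_1$ as one is odd and the other even; $k^{\ast}\neq p_1$ since $k^{\ast}=p_1$ would force $\sigma_i\sigma_j=\mathrm{id}$, i.e. $i=j$; and $k^{\ast}\neq1$ since $k^{\ast}=1$ would force $\sigma_{p_1}=\sigma_i\sigma_j$, i.e. $j=\sigma_{p_1}(i)$.

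The step I expect to be the main obstacle is case (iii): locating the single relevant permutation $\sigma_{k^{\ast}}$ among all of $\Sigma_n$ and correctly propagating the sign $(-1)^{\sigma_{p_1}(i)+j}$ through the abelian relations, while making sure the exceptional indices $1$ and $p_1$ never coincide with $k^{\ast}$ --- which is precisely why the degenerate sub-case $j=\sigma_{p_1}(i)$ must be isolated first. Everything else is a routine, if somewhat tedious, computation with matrix units, exactly as announced in the statement.
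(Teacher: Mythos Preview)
Your proposal is correct and is precisely the ``direct calculation'' the paper announces without spelling out: the paper states the theorem and simply adds that it ``may be checked by a direct calculation,'' giving no further argument. Your block-by-block comparison of $B(e_{ij})$ with $B_{ij}(\mathcal{A},\Sigma_n)$, split into the three cases $i=j$, $j=\sigma_{p_1}(i)$, and the generic off-diagonal case, together with the use of Prop.~\ref{abelianCDP} and Prop.~\ref{GCDP} to identify $k^\ast$ and track the sign, is exactly what that direct calculation amounts to.
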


Interestingly, for $n=4$ the operator corresponding to the Breuer-Hall map belongs to both classes

\begin{equation}
 \sum_{i,j=0}^3 e_{ij} \ot B(e_{ij}) =  \left( \begin{array}{cccc|cccc|cccc|cccc}
 \cdot& \cdot& \cdot& \cdot& \cdot& \cdot& \cdot& \cdot& \cdot& \cdot& 1& \cdot& \cdot& \cdot& \cdot& 1\\
 \cdot& \cdot& \cdot& \cdot& \cdot& \cdot& \cdot& \cdot& \cdot& \cdot& \cdot& \cdot& \cdot& \cdot& \cdot& \cdot\\
 \cdot& \cdot& 1& \cdot& \cdot& \cdot& \cdot& \cdot& \cdot& \cdot& \cdot& \cdot& \cdot& -1& \cdot& \cdot\\
 \cdot& \cdot& \cdot& 1& \cdot& \cdot& \cdot& \cdot& \cdot& 1& \cdot& \cdot& \cdot& \cdot& \cdot& \cdot  \\ \hline
 \cdot& \cdot& \cdot& \cdot& \cdot& \cdot& \cdot& \cdot& \cdot& \cdot& \cdot& \cdot& \cdot& \cdot& \cdot& \cdot \\
 \cdot& \cdot& \cdot& \cdot& \cdot& \cdot& \cdot& \cdot& \cdot& \cdot& 1& \cdot& \cdot& \cdot& \cdot& 1 \\
 \cdot& \cdot& \cdot& \cdot& \cdot& \cdot& 1& \cdot& \cdot& \cdot& \cdot& \cdot& 1& \cdot& \cdot& \cdot \\
 \cdot& \cdot& \cdot& \cdot& \cdot& \cdot& \cdot& 1& -1& \cdot& \cdot& \cdot& \cdot& \cdot& \cdot& \cdot  \\ \hline
 \cdot& \cdot& \cdot& \cdot& \cdot& \cdot& \cdot& -1& 1& \cdot& \cdot& \cdot& \cdot& \cdot& \cdot& \cdot \\
 \cdot& \cdot& \cdot& 1& \cdot& \cdot& \cdot& \cdot& \cdot& 1& \cdot& \cdot& \cdot& \cdot& \cdot& \cdot \\
 1& \cdot& \cdot& \cdot& \cdot& 1& \cdot& \cdot& \cdot& \cdot& \cdot& \cdot& \cdot& \cdot& \cdot& \cdot \\
 \cdot& \cdot& \cdot& \cdot& \cdot& \cdot& \cdot& \cdot& \cdot& \cdot& \cdot& \cdot& \cdot& \cdot& \cdot& \cdot \\ \hline
 \cdot& \cdot& \cdot& \cdot& \cdot& \cdot& 1& \cdot& \cdot& \cdot& \cdot& \cdot& 1& \cdot& \cdot& \cdot \\
 \cdot& \cdot& -1& \cdot& \cdot& \cdot& \cdot& \cdot& \cdot& \cdot& \cdot& \cdot& \cdot& 1& \cdot& \cdot \\
 \cdot& \cdot& \cdot& \cdot& \cdot& \cdot& \cdot& \cdot& \cdot& \cdot& \cdot& \cdot& \cdot& \cdot& \cdot& \cdot \\
 1& \cdot& \cdot& \cdot& \cdot& 1& \cdot& \cdot& \cdot& \cdot& \cdot& \cdot& \cdot& \cdot& \cdot& \cdot \end{array}
 \right)\ ,
\end{equation}

\section{Conclusions}

In this paper we provide a class of sets of Completely Different Permutations (CDPs) which define a substantial generalization of the circular group $C(n) = \{ {\rm id},c,c^2,\ldots,c^{n-1} \}$, where $c=(0,1,\ldots,n-1)$. A class of CDPs enjoys several interesting properties analysed in Section II. This class is used to construct a bipartite operators acting on $\mathcal{H} \ot \mathcal{H}$, with $\mathcal{H}$ being an $n$ dimensional Hilbert space. The crucial observation shows that if $A$ is a bipartite operator corresponding to some abelian group of CDPs, then its partial transposition $(\oper \ot T)A$ corresponds to another abelian group of CDPs. Therefore, it may be used to construct and classify some classes of PPT states.  Interestingly several well known linear maps (reduction or Breuer-Hall maps) are related to sets of CDPs as well via Choi-Jamio{\l}lkowski isomorphism.


\section*{Acknowledgements} DC and GS were partially supported by the National Science Centre project 2015/19/B/ST1/03095. MM acknowledge support of research project "Quantum phenomena: Between the whole and the parts" of the John Templeton Foundation.

\end{document}